\pdfminorversion=7
\RequirePackage{fix-cm}
\documentclass[twocolumn]{svjour3}          
\smartqed  
\usepackage{graphicx}
\usepackage{mathptmx}      
\usepackage[T1]{fontenc}
\usepackage[utf8]{inputenc}
\usepackage[english]{babel}
\usepackage{csquotes}
\usepackage[
	disable
]{todonotes}
\usepackage[noend]{algpseudocode}
\usepackage[Algorithm]{algorithm}
\usepackage{booktabs}
\usepackage{tikz}
	\usetikzlibrary{positioning,arrows,matrix}
\usepackage{amssymb}
\usepackage{mathtools}
\usepackage{hyperref}
\hypersetup{hidelinks,
  colorlinks=true,
  allcolors=blue,
  pdfstartview=Fit,
  breaklinks=true}
\usepackage[all]{hypcap}


\makeatletter
\newcommand*{\da@rightarrow}{\mathchar"0\hexnumber@\symAMSa 4B }
\newcommand*{\da@leftarrow}{\mathchar"0\hexnumber@\symAMSa 4C }
\newcommand*{\xdashrightarrow}[2][]{%
  \mathrel{%
    \mathpalette{\da@xarrow{#1}{#2}{}\da@rightarrow{\,}{}}{}%
  }%
}
\newcommand{\xdashleftarrow}[2][]{%
  \mathrel{%
    \mathpalette{\da@xarrow{#1}{#2}\da@leftarrow{}{}{\,}}{}%
  }%
}
\newcommand*{\da@xarrow}[7]{%
  \sbox0{$\ifx#7\scriptstyle\scriptscriptstyle\else\scriptstyle\fi#5#1#6\m@th$}%
  \sbox2{$\ifx#7\scriptstyle\scriptscriptstyle\else\scriptstyle\fi#5#2#6\m@th$}%
  \sbox4{$#7\dabar@\m@th$}%
  \dimen@=\wd0 %
  \ifdim\wd2 >\dimen@
    \dimen@=\wd2 %
  \fi
  \count@=2 %
  \def\da@bars{\dabar@\dabar@}%
  \@whiledim\count@\wd4<\dimen@\do{%
    \advance\count@\@ne
    \expandafter\def\expandafter\da@bars\expandafter{%
      \da@bars
      \dabar@ 
    }%
  }%
  \mathrel{#3}%
  \mathrel{%
    \mathop{\da@bars}\limits
    \ifx\\#1\\%
    \else
      _{\copy0}%
    \fi
    \ifx\\#2\\%
    \else
      ^{\copy2}%
    \fi
  }%
  \mathrel{#4}%
}
\makeatother

\newcommand{\Lcap}{L_{\cap}}
\newcommand{\Rcap}{R_{\cap}}

\spnewtheorem{fact}[theorem]{Fact}{\bfseries}{\itshape}
\spnewtheorem{construction}[definition]{Construction}{\bf}{\rm}
\hyphenation{e-Mof-lon e-Mof-lons mod-el}

\begin{document}

\title{Avoiding Unnecessary Information Loss}
\subtitle{Correct and Efficient Model Synchronization Based on Triple Graph Grammars}

\author{%
	Lars Fritsche \and 
	Jens Kosiol \and 
	Andy Schürr \and 
	Gabriele Taentzer 
}
\authorrunning{L. Fritsche et al.}
\institute{L. Fritsche \at
						Real-Time Systems Lab, TU Darmstadt, Darmstadt, Germany\\
						\email{lars.fritsche@es.tu-darmstadt.de}
					\and
					J. Kosiol \at
						Faculty of Mathematics and Computer Science, Philipps-Universität Marburg, Marburg, Germany\\
						\email{kosiolje@mathematik.uni-marburg.de}
					\and 
					A. Schürr \at
						Real-Time Systems Lab, TU Darmstadt, Darmstadt, Germany\\
						\email{andy.schuerr@es.tu-darmstadt.de}
					\and
					G. Taentzer \at
						Faculty of Mathematics and Computer Science, Philipps-Universität Marburg, Marburg, Germany\\
						\email{taentzer@mathematik.uni-marburg.de}
}


\maketitle

\newcommand{\myfigureshrinker}{\vspace{0cm}}

\newcommand{\marked}{$\text{\rlap{$\checkmark$}}\square$}
\newcommand{\marking}{$\square \rightarrow $ \marked}

\newcommand{\firstTGGRule}{\textit{Root-Rule}}
\newcommand{\secondTGGRule}{\textit{Sub-Rule}}
\newcommand{\thirdTGGRule}{\textit{Leaf-Rule}}

\newcommand{\firstTGGForwardRule}{\textit{Root-FWD-Rule}}
\newcommand{\secondTGGForwardRule}{\textit{Sub-FWD-Rule}}
\newcommand{\thirdTGGForwardRule}{\textit{Leaf-FWD-Rule}}

\newcommand{\shortcutRule}{short-cut rule}
\newcommand{\shortcutRules}{short-cut rules}
\newcommand{\ShortcutRules}{Short-cut rules}
\newcommand{\ShortcutRule}{Short-cut rule}
\newcommand{\operationalSCRule}{operational short-cut rule}
\newcommand{\operationalSCRules}{operational short-cut rules}
\newcommand{\forwardShortcutRules}{forward short-cut rules}

\newcommand{\firstSCRule}{\textit{Connect-Root-SC-Rule}}
\newcommand{\secondSCRule}{\textit{Make-Root-SC-Rule}}
\newcommand{\thirdSCRule}{\textit{Move-To-New-Sub-SC-Rule}}
\newcommand{\fourthSCRule}{\textit{Delete-Middle-SC-Rule}}

\newcommand{\RepairRule}{\textit{Repair rule}}
\newcommand{\RepairRules}{\textit{Repair rules}}
\newcommand{\repairRules}{\textit{repair rules}}
\newcommand{\repairRule}{\textit{repair rule}}

\newcommand{\firstSCSourceRule}{\textit{Connect-Root-Source-Rule}}
\newcommand{\secondSCSourceRule}{\textit{Make-Root-Source-Rule}}
\newcommand{\thirdSCSouceRule}{\textit{Move-To-New-Sub-Source-Rule}}
\newcommand{\fourthSCSourceRule}{\textit{Delete-Middle-Source-Rule}}

\newcommand{\firstRRule}{\textit{Connect-Root-Repair-Rule}}
\newcommand{\secondRRule}{\textit{Make-Root-Repair-Rule}}
\newcommand{\thirdRRule}{\textit{Move-To-New-Sub-Repair-Rule}}
\newcommand{\fourthRRule}{\textit{Delete-Middle-Repair-Rule}}

\newcommand{\docModel}{\textit{DocModel}}
\newcommand{\docModels}{\textit{DocModels}}
\newcommand{\folder}{\textit{Folder}}
\newcommand{\folders}{\textit{Folders}}
\newcommand{\package}{\textit{Package}}
\newcommand{\packages}{\textit{Packages}}
\newcommand{\doc}{\textit{Doc-File}}
\newcommand{\docs}{\textit{Doc-Files}}
\newcommand{\methodEntry}{\textit{MethodEntry}}
\newcommand{\methodEntries}{\textit{MethodEntries}}
\newcommand{\parameter}{\textit{Parameter}}
\newcommand{\parameters}{\textit{Parameters}}

\newcommand{\model}{\textit{Model}}
\newcommand{\class}{\textit{Class}}
\newcommand{\classes}{\textit{Classes}}
\newcommand{\classDec}{\textit{ClassDeclaration}}
\newcommand{\classDecs}{\textit{ClassesDeclarations}}
\newcommand{\interface}{\textit{Interface}}
\newcommand{\interfaces}{\textit{Interfaces}}
\newcommand{\interfaceDec}{\textit{InterfaceDeclaration}}
\newcommand{\interfaceDecs}{\textit{InterfacesDeclaration}}
\newcommand{\enum}{\textit{Enum}}
\newcommand{\enums}{\textit{Enums}}
\newcommand{\enumDec}{\textit{EnumDeclaration}}
\newcommand{\enumDecs}{\textit{EnumDeclaration}}
\newcommand{\field}{\textit{Field}}
\newcommand{\fields}{\textit{Fields}}
\newcommand{\method}{\textit{Method}}
\newcommand{\methods}{\textit{Methods}}
\newcommand{\methodDec}{\textit{MethodDeclaration}}
\newcommand{\methodDecs}{\textit{MethodDeclarations}}
\newcommand{\typeAccess}{\textit{TypeAccess}}
\newcommand{\typeAccesses}{\textit{TypeAccesses}}

\newcommand{\rootP}{\texttt{rootP}}
\newcommand{\f}{\texttt{f}}
\newcommand{\p}{\texttt{p}}
\newcommand{\subP}{\texttt{subP}}
\newcommand{\subPDoc}{\texttt{subPDoc}}
\newcommand{\leafP}{\texttt{leafP}}
\newcommand{\cl}{\texttt{c}}
\newcommand{\ssp}{\texttt{sp}}
\newcommand{\op}{\texttt{op}}
\newcommand{\of}{\texttt{of}}
\newcommand{\np}{\texttt{np}}
\newcommand{\nf}{\texttt{nf}}

\newcommand{\rootF}{\texttt{rootF}}
\newcommand{\subF}{\texttt{subF}}
\newcommand{\leafF}{\texttt{leafF}}
\newcommand{\cDoc}{\texttt{cDoc}}
\newcommand{\leafPDoc}{\texttt{leafPDoc}}
\newcommand{\ssf}{\texttt{sf}}
\newcommand{\dd}{\texttt{d}}

\begin{abstract}
Model synchronization, i.e., the task of restoring consistency between two interrelated models after a model change, is a challenging task.
Triple Graph Grammars (TGGs) specify model consistency by means of rules that describe how to create consistent pairs of models. 
These rules can be used to automatically derive further rules, which describe how to propagate changes from one model to the other or how to change one model in such a way that propagation is guaranteed to be possible. 
Restricting model synchronization to these derived rules, however, may lead to unnecessary deletion and recreation of model elements during change propagation. 
This is inefficient and may cause unnecessary information loss, i.e., when deleted elements contain information that is not represented in the second model, this information cannot be recovered easily.
{\em Short-cut rules} have recently been developed to avoid unnecessary information loss by reusing existing model elements.
In this paper, we show how to automatically derive {\em (short-cut) repair rules} from short-cut rules to propagate changes such that information loss is avoided and model synchronization is accelerated. 
The key ingredients of our rule-based model synchronization process are these repair rules and an \emph{incremental pattern matcher} informing about suitable applications of them. 
We prove the termination and the correctness of this synchronization process and discuss its completeness. 
As a proof of concept, we have implemented this synchronization process in eMoflon, a state-of-the-art model transformation tool with inherent support of bidirectionality. 
Our evaluation shows that repair processes based on {\em (short-cut) repair rules} have considerably decreased information loss and improved performance compared to former model synchronization processes based on TGGs. 
	\keywords{
		Bidirectional Transformation \and Model Synchronization \and Triple Graph Grammar \and Incremental Pattern Matching \and Change Propagation
	}
\end{abstract}

\section{Introduction}
\label{sec:intro}
The close collaboration of multiple disciplines such as electrical engineering, mechanical engineering, and software engineering in system design often leads to discipline-spanning system models~\cite{GPR11}.
Keeping models synchronized by checking and preserving their consistency can be a challenging problem which is not only subject to ongoing research but also of practical interest for industrial applications. 
Model-based engineering has become an important technique to cope with the increasing complexity of modern software systems.
Various \emph{bidirectional transformation} (bx) approaches~\cite{CFHLST09,ACGMS18} for models have been suggested to deal with model (view) synchronization and consistency.
Across these different approaches the following are important research topics~\cite{GW09,DXC11,HPW12,MC16,CGMcKS17,HPC18,HB19}: \emph{incrementality}, i.e., achieving runtime/com\-ple\-xity dependent on the size of the model change, not on the model size, and \emph{least change}, i.e., keeping the resulting model as similar as possible to the original one while restoring consistency. 
In this work, we extend synchronization approaches based on \emph{triple graph grammars} by specific \emph{repair rules} to increase incrementality and efficiency and to decrease the amount of change that occurs during synchronization. 
We show how to avoid unnecessary information loss in model synchronization for scenarios \emph{in which one model is changed at a time}.
Throughout this paper we stick to this scenario of model synchronization.
The more general case of \emph{concurrent model synchronization} where both models have been altered is left to future work.

Triple Graph Grammars (TGGs) \cite{Schuerr95} are a declarative, rule-based bidirectional transformation approach, which allows to synchronize models of two different views (usually called the \emph{source} and \emph{target domain} in the TGG-related literature). 
The purpose of a TGG is to define a consistency relationship between pairs of models in a rule-based manner by defining traces between their elements. 
Given a TGG, 
its rules can be automatically \emph{operationalized} into \emph{source} and \emph{forward rules}.
While the source rules are used to build up models of the source domain, forward rules translate them to the target domain and thereby, establish traces between corresponding model elements. 
Analogously, target models can be propagated to the source domain by using \emph{target} and \emph{backward rules} that can be  automatically deduced as well. To avoid redundancy in presentation, we stick to forward propagation throughout this paper.

In \cite{Schuerr95}, a simple batch-oriented synchronization process was presented, which just re-translates the whole source model after each change using forward rules.
Several incremental synchronization processes based on TGGs have been presented in the literature thereafter.
A process is considered to be incremental if the target model is not recomputed from scratch but unaffected model parts are preserved as much as possible.\footnote{Ideally, the runtime (complexity) of a synchronization should depend on the size of the change to the source model and not on the sizes of the source and the target model~\cite{GW09}. This requirement is a good motivation for incremental synchronization.}
To obtain an incremental synchronization process, two basic strategies have been pursued (in combinations): 
(i) The synchronization algorithm takes additional information of forward rules into account. 
This information might consist of precedence relations over rules~\cite{LAVS12}, dependency information on model elements w.r.t. their creation~\cite{GW09,OP14}, a maximal, still consistent submodel~\cite{HEOCDXGE15}, or information about broken matches of forward rules provided by an incremental pattern matcher~\cite{LAFVS17,Leblebici18}. 
(ii) The actual propagation of changes in a synchronization process is not based on the application of forward rules exclusively but also uses additional rules. 
To propagate a deletion on the source part, almost all approaches support to revoke an application of a forward rule.
The recovation of rule applications is formalized as inverse rule applications in, e.g., \cite{LAVS12}. 
In addition, custom-made rules  have been used in synchronization algorithms that describe specific kinds of model edits in any modeling language~\cite{GH09} or in a concrete modeling language~\cite{BPDSD14}. Moreover, generalized forward rules have been defined which allow for re-use of elements~\cite{GH09,GPR11,OP14}. 
Summarizing, a number of approaches for incremental model synchronization based on TGGs have been presented in the literature.
Some of them such as \cite{GW09,GPR11} are informally presented without any guarantee to reestablish the consistency of modified models. Others present their synchronization approaches formally and show their correctness but are only applicable under restricted circumstances~\cite{HEOCDXGE15} or have not been implemented yet, such as \cite{OP14}.  Hence, {\em we still miss a TGG-based model synchronization approach that avoids unnecessary information loss, is proven to be correct, and is efficiently implemented}.

In this article, we present an incremental model synchronization approach based on an extended set of TGG rules.
In \cite{FKST18}, we introduced \shortcutRules{} for handling complex con\-sis\-ten\-cy-pre\-ser\-ving model updates while avoiding unnecessary information loss. 
A short-cut rule replaces one rule application with another one while preserving involved model elements (instead of deleting and re-creating them). 
We deduce source and forward rules from short-cut rules to support complex model edits and their synchronization with the target domain.

We present an incremental model synchronization algorithm based on short-cut rules and show its correctness. 
We implemented our synchronization approach in eMoflon~\cite{EL14,WAFVSL19,WARV19}, a state-of-the-art bidirectional model  transformation tool, and evaluate it.  
Being based on eMoflon, we are able to extend the synchronization process suggested by Leblebici (et al.)~\cite{LAFVS17,Leblebici18} and rely on information provided by an incremental pattern matcher also to detect when and where to apply our derived repair rules. 
However, the construction and derivation of these is general and could extend other suggested TGG-based synchronization processes as well. 
The results of our evaluation show that, compared to model synchronization in eMoflon without short-cut repair rules, the application of these repair rules allows to react to model changes in a less invasive way by preserving information. In addition, it shows more efficiency. 

This paper extends the work in~\cite{FKST19}.  
Beyond~\cite{FKST19}, we
\begin{itemize}
	\item present the actual synchronization process in pseudocode and prove its correctness and termination (based on the results obtained in~\cite{FKST19,LAFVS17,Leblebici18}), 
	\item extend our approach to deal with \emph{filter NACs} (a specific kind of negative application conditions in forward rules), 
	\item describe the implementation, especially the tool architecture, in more detail, 
	\item extend the evaluation by investigating the expressiveness of short-cut repair rules at the practical example of code refactorings~\cite{Fowler2018}, and
	\item consider the related work more comprehensively. 
\end{itemize}

The rest of this paper is organized as follows.
In Sect.~\ref{sec:example} we give an informal overview of 
our model synchronization approach. It shall allow readers to grasp the general idea without working through the technical details. 
In Sect.~\ref{sec:preliminaries} we recall triple graph grammars. Sect.~\ref{sec:sc-rules} recalls the construction of short-cut rules.
The construction of short-cut rules and their properties are presented in Sect.~\ref{sec:sc-rules}, while Sect.~\ref{sec:constructing-repair-rules} introduces the derivation of repair rules.  
Section~\ref{sec:syncProcess} focuses on the implemented synchronization algorithm and its formal properties.
To be understandable to readers who are not experts on algebraic graph transformation, we use a set-theoretical notion in these more technical sections, in contrast to the original contribution in~\cite{FKST19} which is based on category theory. 
Sect.~\ref{sec:implementation} describes the implementation of our model synchronization algorithm in eMoflon, focussing on the tool architecture.
Our synchronization approach is evaluated in Sect.~\ref{sec:implAndEvaluation}.
Finally, we discuss related work in Sect.~\ref{sec:related-work} and conclude with pointers to future work in Sect.~\ref{sec:conclusion}. 
Appendix~\ref{sec:evaluation-ruleset} presents the rule set used for our evaluation.

\section{Informal Introduction to TGG-Based Model Synchronization}
\label{sec:example}
In this section, we illustrate our approach to model synchronization. 
Using a simple example, we will explain the basic concepts as well as all main ingredients for our new synchronization process. 
Reading this section and having a passing view on the synchronization algorithm (Section~\ref{subsec:sync-process}), evaluation (Section~\ref{sec:implAndEvaluation}), and related works (Section~\ref{sec:related-work}) should give an adequate impression of the core ideas of our work. 

Graph transformations, and triple graph grammars in particular, are a suitable formal framework to reason about and to implement model transformations and synchronizations~\cite{BET12,EEGH15}.\footnote{Therefore, we will use the terms \enquote{graph} and \enquote{model} interchangeably in this paper. We will stick to the graph terminology in more formal sections.}
A triple graph consists of three graphs, namely the \emph{source}, \emph{target}, and \emph{correspondence graph}. 
The latter encodes which elements of source and target graph \emph{correlate to each other}.
This is done by mapping each element of the correspondence graph to an element of the source graph as well as to an element of the target graph (formally these are two \emph{graph morphisms}). 
Elements connected via such a mapping are considered to be correlated.

\emph{Triple graph grammars} (TGGs)~\cite{Schuerr95} declaratively define how consistent models co-evolve. 
This means that a triple graph is considered to be consistent if it can be derived from a start triple (e.g., the empty graph) using the rules of the given grammar. 
Furthermore, the rules can automatically be \emph{operationalized} to obtain new kinds of rules, e.g., for translation/synchronization processes.

We illustrate our model synchronization process by synchronizing a Java AST (abstract syntax tree) and a custom documentation model as example. 
This example has been basically introduced by Leblebici et al. \cite{Leblebici2017}; it is slightly modified to demonstrate the core concepts of our approach.
Note, however, that the evaluation in Sect.~\ref{sec:implAndEvaluation} is based on a larger and more complex TGG consisting of 24 rules (as presented in App.~\ref{sec:evaluation-ruleset}). 
\begin{figure}[ht]
	\centering
	\includegraphics[width=0.8\columnwidth]{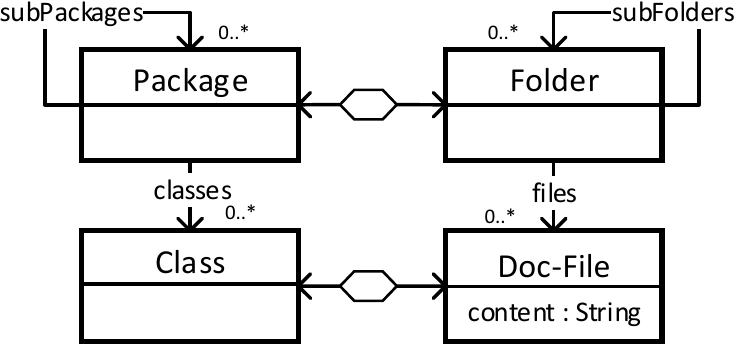}
	\caption{Example: Type Graph}
	\label{fig:typegraph}
\end{figure}

For model synchronization, we consider a Java AST mod\-el as {\em source} model and its documentation model as {\em target} model, i.e., changes in a Java AST model have to be transferred to its documentation model and vice versa.
Note that we do not consider concurrent model synchronization, i.e., concurrent changes to both sides that have to be synchronized. 
Figure~\ref{fig:typegraph} depicts the type graph that describes the syntax of our example triple graphs. 
It shows a \package{} hierarchy and \classes{} as the source side, a \folder{} hierarchy with \docs{} as target side and correspondence types in between depicted as hexagons.
Furthermore, \docs{} have an attribute \textsf{content} which is of type \textsf{String}.
Note that, in our example, there are two correspondence types which can be distinguished by the type of elements they connect on both sides.

\begin{figure}[ht]
	\centering
	\includegraphics[width=\columnwidth]{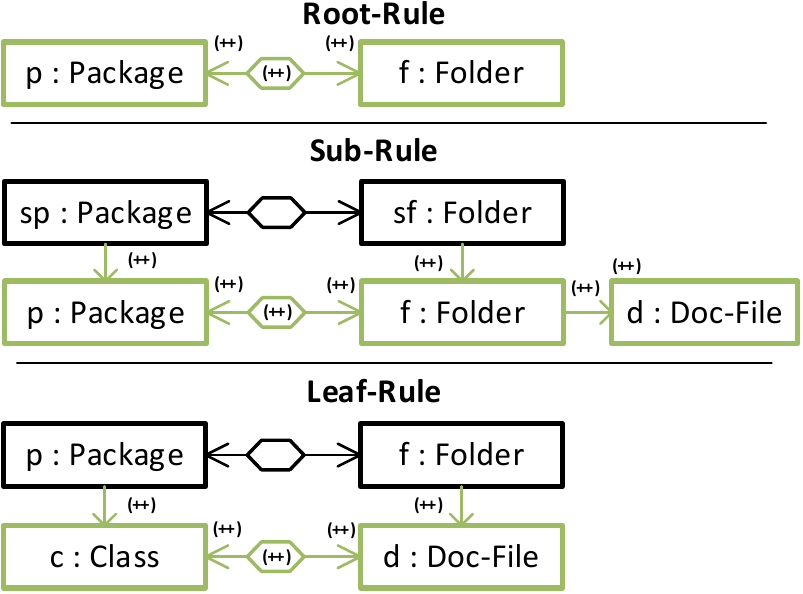}
	\caption{Example: TGG Rules}
	\label{fig:tggRules}
\end{figure}

\paragraph{TGG rules.}
Figure~\ref{fig:tggRules} shows the rule set of our example TGG consisting of three rules (assuming an empty start graph): 
\begin{figure*}
	\centering
	\includegraphics[width=1.0\textwidth]{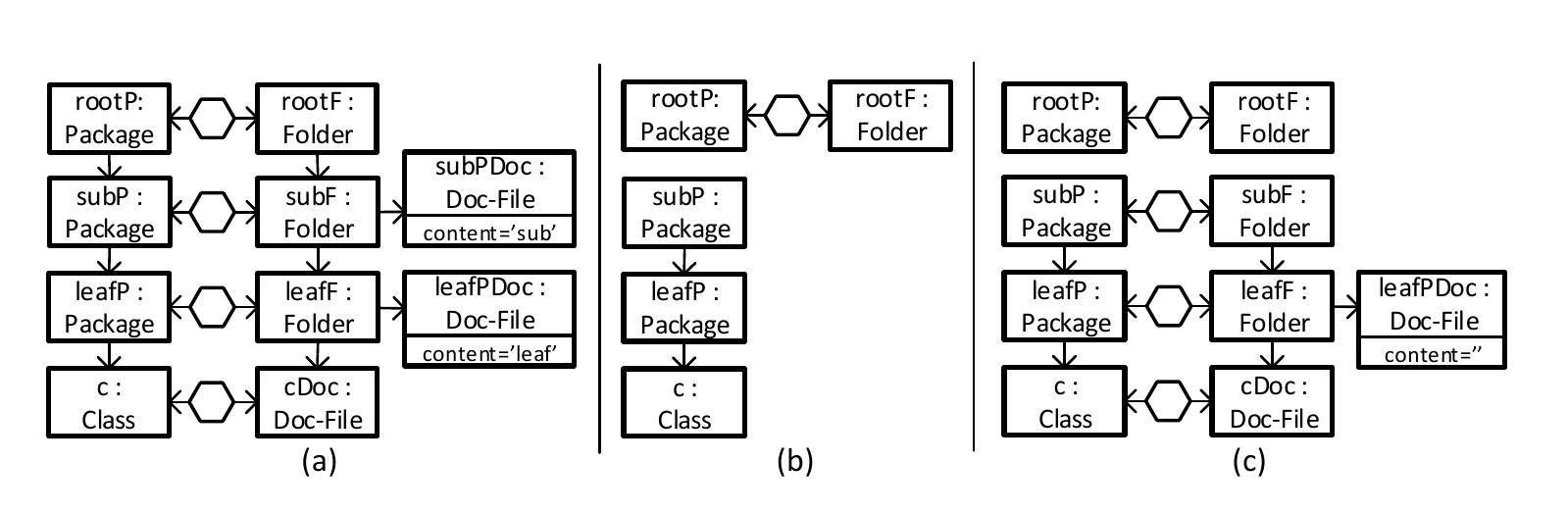}
	\caption{Exemplary Synchronization Scenario}
	\label{fig:translationExample}
\end{figure*}
\firstTGGRule{} creates a root \package{} together with a root \folder{} and a correspondence link in between.
This rule has an empty precondition and creates elements only; they are depicted in green and with the annotation (++). 
\secondTGGRule{} creates a \package{} and \folder{} hierarchy given that an already correlated \package{} and \folder{} pair exists.
Finally, \thirdTGGRule{} creates a \class{} and a \doc{} under the same precondition as \secondTGGRule{}.
 
TGG rules can be used to generate triple graphs; triple graphs generated by them are consistent by definition.
An example is depicted in Fig.~\ref{fig:translationExample}~(a) which can be generated by first applying \firstTGGRule{} followed by two applications of \secondTGGRule{} and an application of \thirdTGGRule{}: 
Starting with the empty triple graph, the first rule application just creates the elements \rootP{} and \rootF{} and the correspondence element in between. 
The second rule application matches these elements and creates \subP{}, \subF{}, \subPDoc{}, their respective incoming edges, and the correspondence element between \subP{} and \subF{}. 
The other two rule applications are performed similarly. 

\paragraph{Operationalization of TGG rules.} 
A TGG can also be used for \emph{translating} a model of one domain to a correlated model of a second domain. 
Moreover, a TGG offers support for \emph{model synchronization}, i.e., for restoring the consistency of a triple graph that has been altered on one side. 
For these purposes, each TGG rule has to be operationalized to two kinds of rules: A {\em source} rules enable changes of source models (e.g., as performed by a user) while forward rules translate such changes to the target model.\footnote{Analogously, \emph{target} and \emph{backward rules} can be derived.} 
The result of applying a source rule followed by an application of its corresponding forward rule yields the same result as applying the TGG rule they originate from.
Figure~\ref{fig:tggSourceRules} shows the resulting {\em source} rules for our example TGG.

\begin{figure}
	\centering		
	\includegraphics[width=0.75\columnwidth]{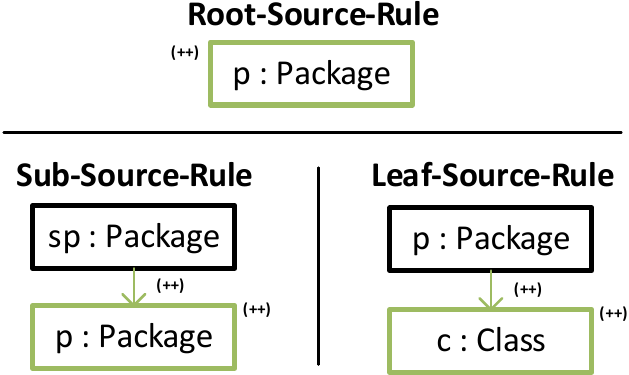}
	\caption{Example: TGG Source Rules}
	\label{fig:tggSourceRules}
\end{figure}

\begin{figure}
	\centering		
	\includegraphics[width=\columnwidth]{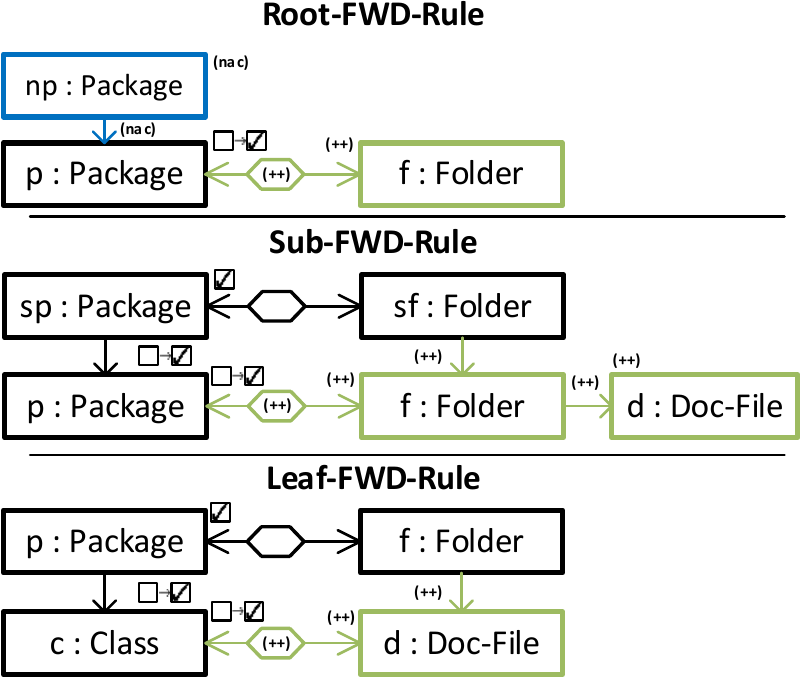}
	\caption{Example: TGG Forward Rules}
	\label{fig:tggFwdRules}
\end{figure}

\paragraph{Forward translation rules.}
Figure~\ref{fig:tggFwdRules} depicts the resulting {\em forward} rules.
They have a similar structure compared to their original TGG rules with three important differences.
First, elements on the source side are now considered as context and as such have to be matched as a precondition for this rule to be applicable.
Second, since we consider elements on the source side to already be present, we have to mark whether an element has already been translated or not.
A \marked{} annotation can be found on source elements which must have been translated before.
On the other hand, \marking{} annotations indicate that applying this rule would mark this element as translated.
This annotation can be found at elements that are created by the original TGG rule.
Possible formalizations of these marking are given, e.g., in \cite{Hermann2010,LAFVS17}.
The third difference is the use of negative application conditions (NACs)~\cite{EEPT06} which are indicated with a (nac) and depicted in blue.
Using NACs, we are able to not only define necessary structure that has to be found but also the explicit absence of structural elements as in \firstTGGForwardRule{} where we forbid \subP{} to have a parent package. 
The theory behind these so-called \emph{filter NACs} is formalized by Hermann et~al.~\cite{Hermann2010} and they can be derived automatically from the rules of a given TGG when computing its forward rules.

Using these rules, we can translate Java AST to documentation models. 
Considering the one on the source side of the triple graph in Fig.~\ref{fig:translationExample} (a), it is translated to a documentation model such that the result is the complete graph depicted in this part of the figure. 
To obtain this result we apply \firstTGGForwardRule{} at the root \package{}, \secondTGGForwardRule{} at \packages{} \subP{} and \leafP{}, and finally \thirdTGGForwardRule{} at \class{} \cl{}. 
Note that \secondTGGForwardRule{}, for example, is applicable when matching \packages{} \ssp{} and \p{} of the rule to the \packages{} \rootP{} and \subP{} of the source graph, respectively, since \rootP{} was marked as translated by the application of \firstTGGForwardRule{}. 
Without the NAC in \firstTGGForwardRule{}, this rule would also be applicable at the elements \subP{} and \leafP{}. 
Applying \firstTGGForwardRule{} and translating these elements with it, however, would result in the edges from their parent \packages{} not being translatable any longer: 
there is no rule in our TGG rule set that creates edges between packages only.
Hence, NACs can direct the translation process to avoid these dead-ends. 
Filter NACs are derived such that they prevent rule applications leading to dead-ends, only. 

\paragraph{Existing approaches to model synchronization.}
Given a triple graph such as the one in Fig.~\ref{fig:translationExample} (a), a developer may want to split the modeled project into multiple ones. 
For this purpose, a subpackage such as \subP{} shall become a root package.
Since \subP{} was created and translated as a sub package rather than a root element, this model change introduces an inconsistency.
To resolve this issue, the approaches presented in~\cite{GW09,LAVS12,LAFVS17,Leblebici18} and, to a certain degree, also the one in~\cite{HEOCDXGE15} revert the translation of \subP{} into \subF{} and re-translate \subP{} with an appropriate translation rule such as \firstTGGForwardRule{}.
Reverting the former translation step may lead to further inconsistencies as we remove elements that were needed as context elements by other applications of forward rules.
The result is a reversion of all translation steps except for the first one which translates the original root element. 
The result is shown in Fig.~\ref{fig:translationExample} (b).
Thereafter, the untranslated elements can be re-translated yielding the result graph in (c). 
This example shows that this synchronization approach may delete and re-create a lot of similar structures which appears to be inefficient. 
Second, it may lose information that exists on the target side only, e.g., documentation saved in the \textsf{content} attribute which is empty now as it cannot be restored from the source side only. 
Such an information loss is unnecessary as we will show below. 
Instead of deleting elements and recreating them, we will present a synchronization process that aims to preserve information as much as possible. 

\paragraph{Model synchronization with short-cut repair.}
In~\cite{FKST18}, we introduce \shortcutRules{} as a kind of sequential rule composition mechanism that allows to replace one rule application with another one while elements are preserved (instead of deleted and recreated). 

\begin{figure}[ht]
	\centering
	\includegraphics[width=\columnwidth]{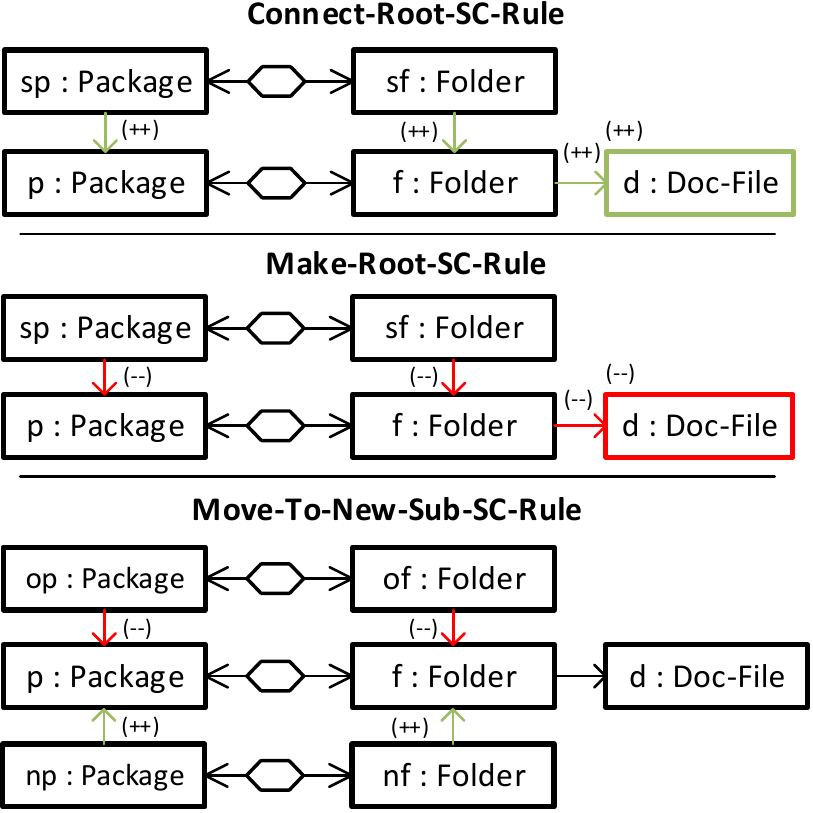}
	\caption{Short-cut rules}
	\label{fig:scRules}
\end{figure} 

Figure~\ref{fig:scRules} depicts three \shortcutRules{} which can be derived from our original three TGG rules.
The first two, \firstSCRule{} and \secondSCRule{}, are derived from \firstTGGRule{} and \secondTGGRule{}. 
The upper \shortcutRule{} replaces an application of \firstTGGRule{} with one of \secondTGGRule{} and turns root elements into sub elements.
In contrast, the lower \shortcutRule{} replaces  an application of \secondTGGRule{} with one of \firstTGGRule{}, thus, turning sub elements into root elements. 
Both \shortcutRules{} preserve the model elements present in their corresponding TGG rules and solely create elements that do not exist yet (++), or delete those depicted in red and annotated with (-\--) which became superfluous.
The third \shortcutRule{} \thirdSCRule{} relocates sub elements and replaces a \secondTGGRule{} application with another one of the same kind.

A \shortcutRule{} is constructed by overlapping two rules with each other where the first one is the replaced and the second the replacing rule.
Overlapped elements are preserved such as \p{} and \f{} in \firstSCRule{}.
Created elements that are not overlapped fall into two categories.
If the element was created in the replaced rule but is superfluous in the replacing rule, it is deleted, e.g, \dd{} in \secondSCRule{}.
On the other hand, if the element was not created by the replaced rule but by the replacing rule, then the element is created, e.g., \dd{} in \firstSCRule{}.
Context elements can be mapped as well while unmapped context elements from both rules are glued onto the final \shortcutRule{}, e.g., \op{} and \of{} which are context in the replaced rule, and \np{} and \nf{} which are context in the replacing rule.  
Since there are many possible overlaps for each pair of rules, constructing a reasonable set of \shortcutRules{} depends on the concrete example TGG and the requirement for advanced model changes that go beyond the standard capabilities of TGG based model synchronizers.
Usually, it is worthwhile to construct \shortcutRules{} for frequent model changes in order to increase the synchronization efficiency and decrease information loss in these cases. 

In our example above, the user wants to transform the triple graph in Fig.~\ref{fig:translationExample} (a) to the one in (c). Using \secondSCRule{} and matching the \packages{} \texttt{sp} and \texttt{p} to the \packages{} \rootP{} and \subP{} 
of the model (a) (and the correspondence nodes and \folders{} accordingly), this transformation is performed with a single rule application. 
Analogously, the triple graph (c) can be directly transformed backwards to (a) using \firstSCRule{}. 
Thus, these rules allow for complex user-edits on both, source and target side; they preserve the consistency of the model. 
However, there are also scenarios where applying a \shortcutRule{} may lead to an inconsistent state of the resulting triple graph.
A simple example is that of applying \firstSCRule{} in order to connect \texttt{subP} and \texttt{subF} with \texttt{rootP} and \texttt{rootF}, respectively.
The result would be a cycle in both, the \package{} and the \folder{} hierarchies; this model is no longer in the language of our example TGG.
In Sect.~\ref{sec:sc-rules}, we present sufficient conditions for the application of \shortcutRules{} to avoid such cases. 

\begin{figure}
	\centering
	\includegraphics[width=\columnwidth]{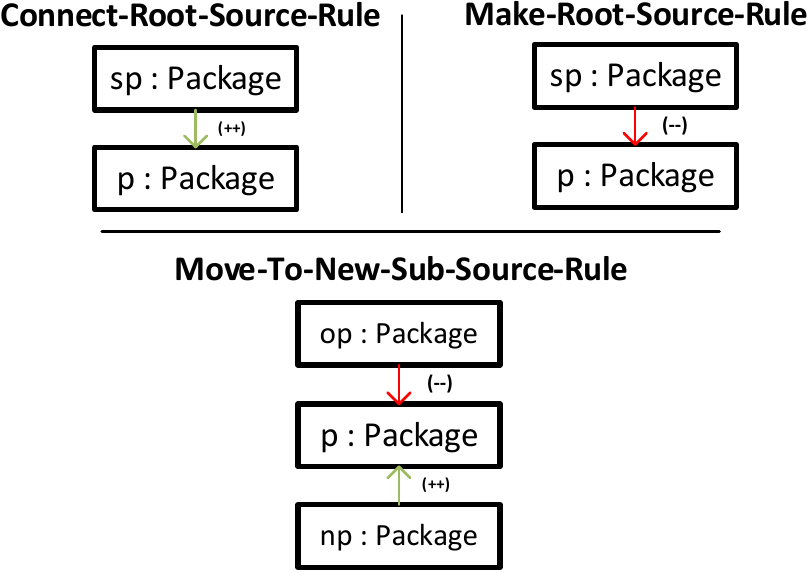}
	\caption{Short-cut Source Rules}
	\label{fig:SCRules_source}
\end{figure}

\begin{figure}
	\centering
	\includegraphics[width=\columnwidth]{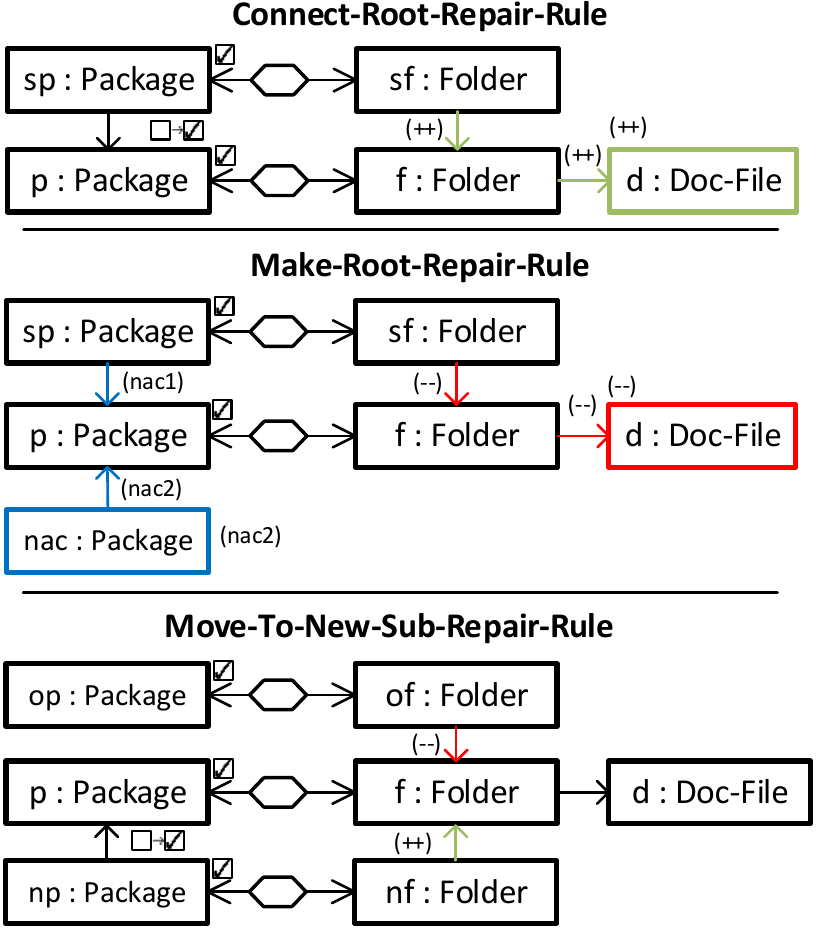}
	\caption{Repair rules}
	\label{fig:fwdSCRules}
\end{figure}

\paragraph{Operationalization of \shortcutRules{}.}
\ShortcutRules{} transform both models at once as TGG rules usually do and therefore, they cannot cope with the change of a single model.
Hence, similar to TGG rules, we have to operationalize them, thereby obtaining {\em short-cut source} and {\em short-cut repair} rules. 
Figure~\ref{fig:SCRules_source} depicts the {\em short-cut source} rules which are derived analogously to those of standard TGG rules.
In order to be able to handle the deleted edge between \rootP{} and \subP{}, as deleted by \secondSCSourceRule{}, for example, a repair rule is needed that adapts the target graph accordingly by deleting the now superfluous edge between \rootF{} and \subF{}.
Figure~\ref{fig:fwdSCRules} depicts the resulting repair rules derived from the short-cut rules in Fig.~\ref{fig:scRules}. 
A \shortcutRule{} is forward operationalized by removing deleted elements from the rule\rq{}s source graph since these deletions have already happened.
Furthermore, created source elements become context because we expect them to already exist, e.g., through the a prior source rule application.
Finally, since \shortcutRules{} transform an application of one rule into that of another, filter NACs are added during operationalization to comply with application conditions of the replacing rule which naturally have to hold when applying the \shortcutRule{}.
Hence, \secondRRule{} is only applicable and can turn \subF{} into a root \folder{} if \subP{} has no parent packages and, thus, is indeed a root \package{} itself.
Note that \firstTGGForwardRule{} is only applicable if \subP{} has no parent packages, which \secondRRule{} has to incorporate as well.
For this reason, \secondRRule{} contains \textit{nac1}, which forbids \rootP{} to be the parent package of \subP{} and \textit{nac2}, which forbids \subP{} to have any other parent packages than \rootP{}.

{\em Short-cut repair} rules allow to propagate graph changes directly to the other graph to restore consistency. 
Revisiting our example of Fig.~\ref{fig:translationExample}, we are now able to use \secondRRule{} to propagate the deleted edge between \subP{} and \rootP{} by deleting the corresponding edge between \subF{} and \rootF{} and the now superfluous \doc{} \subPDoc{}. 
The result is the consistent triple graph again depicted in Fig.~\ref{fig:translationExample} (c) with the content attribute of \leafPDoc{} containing the value \enquote*{leaf}.
So, this repair does not cause information loss and allows to skip the costly reversion process with the intermediate result in Fig.~\ref{fig:translationExample} (b). 

Summarizing, the user edit of removing the edge between \packages{} \rootP{} and \subP{} corresponds to the source rule of \secondSCRule{}, namely \secondSCSourceRule{}, and the according update to the target side is performed by \secondRRule{} which is the corresponding repair rule.
Together, they perform an edit step structurally equivalent to the one depicted by the triple graphs in Fig.~\ref{fig:translationExample} (a) and (c); however, the value of the attribute \textsf{content} does not get lost. 
Alternatively, this step can be obtained by applying the short-cut rule \secondSCRule{}.
This is not a coincidence: 
In~\cite[Theorem~7]{FKST19}, we showed that applying the source rule of a short-cut rule (which corresponds to a user edit on the source part only) followed by an application of the corresponding repair rule at the according match is the same as applying the original short-cut rule.

\section{Preliminaries: Triple Graphs, Triple Graph Grammars and their Operationalizations}
\label{sec:preliminaries}
In this section, we recall triple graph grammars (TGGs) and their operationalization~\cite{Schuerr95}. 
Our derivation of repair rules is based on the construction of so-called short-cut rules~\cite{FKST18}, which we recall as well. 
For simplicity, we stick with set-theoretic definitions of the involved concepts (in contrast to category-theoretic ones as, e.g., in~\cite{EEPT06,EEGH15,FKST18,FKST19}). 
Moreover, while we provide formal definitions for central notions, we will just explain others and provide references for their formal definitions.  

\subsection{Graphs, triple graphs, and their transformations}
\emph{Graphs} and their (rule-based) \emph{transformations} are suitable to formalize various kinds of models and their evolution, in particular of EMF models~\cite{BET12}.\footnote{Therefore in this paper, we use the terms \emph{graph} and \emph{model} interchangeably. In the formal parts, we will consequently speak of graphs following the formal literature.}
In the context of this work, a \emph{graph} consists of a set of nodes and a set of directed edges which connect nodes. 
Graphs may be related by \emph{graph morphisms}, and a \emph{triple graph} consists of three graphs connected by two graph morphisms. 

\begin{definition}[Graph, graph morphism, triple graph, and triple graph morphism]\label{def:graphs}
	A \emph{graph} $G = (V,E,s,t)$ consists of a set $V$ of vertices, a set $E$ of edges, and source and target functions $s,t: E \rightarrow V$.
	An {\em element} $x$ of $G$ is a node or an edge, i.e., $x \in V$ or $x \in E$.
	A \emph{graph morphism} $f:G \rightarrow H$ between graphs $G = (V_G, E_G, s_G, t_G)$ and $H = (V_H, E_H, s_H, t_H)$ consists of two functions $f_V: V_G \rightarrow V_H$ and $f_E: E_G \rightarrow E_H$ that are compatible with the assignment of source and target to edges, i.e., $f_V \circ s_G = s_H \circ f_E$ and $f_V \circ t_G = t_H \circ f_E$. 
	Given a fixed graph $\mathit{TG}$, a \emph{graph typed over $\mathit{TG}$} is a graph $G$ together with a graph morphism $\mathit{type}_G: G \to \mathit{TG}$.
	A \emph{typed graph morphism} $f: (G, \mathit{type}_G) \to (H, \mathit{type}_H)$ between typed graphs is a graph morphism $f: G \to H$ that respects the typing, i.e., $\mathit{type}_G = \mathit{type}_H \circ f$ (componentwise).
	A (typed) graph morphism $f = (f_V,f_E)$ is \emph{injective} if both $f_V$ and $f_E$ are. 
	
	A \emph{triple graph} $G = (G_S \xleftarrow{\sigma_G} G_C \xrightarrow{\tau_G} G_T)$ consists of three graphs $G_S, G_C, G_T$, called \emph{source, correspondence}, and \emph{target graph}, and two graph morphisms $\sigma_G: G_C \rightarrow G_S$ and $\tau_G: G_C \rightarrow G_T$, called \emph{source} and \emph{target correspondence morphism}. 
	A \emph{triple graph morphism} $f: G \rightarrow H$ between two triple graphs $G$ and $H$ consists of three graph morphisms $f_S: G_S \rightarrow H_S, f_C: G_C \rightarrow H_C$ and $f_T: G_T \rightarrow H_T$ such that $\sigma_H \circ f_C = f_S \circ \sigma_G$ and $\tau_H \circ f_C = f_T \circ \tau_G$. 
	Given a fixed triple graph $\mathit{TG}$, a \emph{triple graph typed over $\mathit{TG}$} is a triple graph $G$ together with a triple graph morphism $\mathit{type}_G: G \to \mathit{TG}$. 
	Again, \emph{typed triple graph morphisms} are triple graph morphisms that respect the typing.
	A (typed) triple graph morphism $f = (f_S,f_C,f_T)$ is \emph{injective} if $f_S, f_C$, and $f_T$ all are.
\end{definition}

\begin{example}
	Figure~\ref{fig:translationExample} depicts three triple graphs; their common type graph is depicted in Fig.~\ref{fig:typegraph}. 
	The typing morphism is indicated by annotating the elements of the triple graphs with the types to which they are mapped in the type graph. 
	The nodes in the triple graphs are of types \package{}, \folder{}, \class{}, and \doc{}. 
	In each case, the source graph is depicted to the left and the target graph to the right. 
	The hexagons in the middle constitute the correspondence graphs. 
	Formally, the edges from the correspondence graphs to source and target graphs are morphisms: 
	The edges encode how an individual correspondence node is mapped by the correspondence morphisms. 
	For example, the nodes \rootP{} and \rootF{} of types \package{} and \folder{} correspond to each other as they share the same correspondence node as preimage under the correspondence morphisms. 
\end{example}

Rules offer a declarative means to specify transformations of (triple) graphs. 
While classically rewriting of triple graphs has been performed using non-deleting rules only, we define a less restricted notion of rules\footnote{As used in double pushout rewriting of graphs or objects of other adhesive categories more generally~\cite{EEPT06,LS05}.} right away since short-cut rules and repair rules derived from them are both potentially deleting.  
A rule $p$ consists of three triple graphs, namely a \emph{left-hand side} (LHS) $L$ and a \emph{right-hand side} (RHS) $R$ and an \emph{interface} $K$ between them. 
Applying such a rule to a triple graph $G$ means to choose an injective morphism $m$ from $L$ to $G$. 
The elements from $m(L\setminus l(K))$ are to be deleted; if this results in a triple graph again, the morphism $m$ is called a match and $p$ is applicable at that match. 
After this deletion, the elements from $R \setminus r(K)$ are added; the whole process of applying a rule is also called a \emph{transformation (step)}. 

\begin{definition}[Rule, transformation (step)]
	A \emph{rule} $p = (L \xleftarrow{l} K \xrightarrow{r} R)$ consists of three triple graphs, $L$, $R$, and $K$, called the \emph{left-hand side}, \emph{right-hand side}, and \emph{interface}, respectively, and two injective triple graph morphisms $l: K \to L$ and $r: K \to R$. 
	A rule is called \emph{monotonic}, or \emph{non-deleting}, if $l$ is an isomorphism. 
	In this case we denote the rule as $r: L \to R$. 
	The \emph{inverse rule} of a rule $p$ is the rule $p^{-1} = (R \xleftarrow{r} K \xrightarrow{l} L)$. 
	
	Given a triple graph $G$, a rule $p = (L \xleftarrow{l} K \xrightarrow{r} R)$, and an injective triple graph morphism $m: L \to G$, the rule $p$ is \emph{applicable at $m$} if 
	\begin{equation*}
		D \coloneqq G \setminus (m(L \setminus l(K))) \enspace ,
	\end{equation*}
	is a triple graph again. Operator $\setminus$ is understood as node- and edge-wise set-theoretic difference. The source and target functions of $D$ are restricted accordingly. 
	If $D$ is a triple graph, 
	\begin{equation*}
		H \coloneqq D \cup n(R \setminus r(K)) \enspace ,
	\end{equation*}
	is computed. Operator $\cup$ is understood as node- and edge-wise set-theoretic union.  
$n(R \setminus r(K))$ is a new copy of newly created elements. $n$ can be extended to $R$ by $n(r(K)) = m(l(K))$.
The values of the source and target functions for edges from $n(R \setminus r(K))$ with source or target node in $K$ are determined by $m \circ l$, i.e.,
	\begin{align*}
		s_H(e) & \coloneqq m(l(r^{-1}(s_R(e)))) \\
		t_H(e) & \coloneqq m(l(r^{-1}(t_R(e)))) 
	\end{align*}
	for such edges $e \in n(E_R)$ with $s_R(e) \in r_V(V_K)$ or $t_R(e) \in r_V(V_K)$.
	The whole computation is called a \emph{transformation (step)}, denoted as $G \Rightarrow_{p,m} H$ or just $G \Rightarrow H$, $m$ is called a \emph{match}, $n$ is called  a \emph{comatch} and $D$ is the \emph{context triple graph} of the transformation.
\end{definition}
An equivalent definition based on computing two \emph{pushouts}, a notion from category theory generalizing the union of sets along a common subset, serves as basis when developing a formal theory~\cite{EEPT06}. 
In the following and in our examples, we always assume $K$ to be a common subgraph of $L$ and $R$ and the injective morphisms $l$ and $r$ to be the corresponding inclusions; this significantly eases the used notation. 
When we talk about the union of two graphs $G_1$ and $G_2$ along a common subgraph $S$, we assume that $G_1 \cap G_2 = S$. 

To enhance expressiveness, a rule may contain \emph{negative application conditions} (NACs)~\cite{EEPT06}. 
A NAC extends the LHS of a rule with a forbidden pattern: 
A rule is allowed to be applied only at matches which cannot be extended to any pattern forbidden by one of its NACs. 
If we want to stress that a rule is not equipped with NACs, we call it a \emph{plain rule}. 

\begin{definition}[Negative application conditions]
	Given a rule $p = (L \leftarrow K \rightarrow R)$, a set of \emph{negative application conditions} (NACs) for $p$ is a finite set of graphs $\mathit{NAC} = \{N_1, \dots, N_k\}$ such that $L$ is a subgraph of every one of them, i.e., $L \subset N_i$ for $1 \leq i \leq k$.
	
	A rule $(p = (L \leftarrow K \rightarrow R),\mathit{NAC})$ with NACs is applicable at a match $m: L \rightarrow G$ if the plain rule $p$ is and, moreover, for none of the NACs $N_i$ there exists an injective morphism $x_i: N_i \to G$ such that $x_i \circ \iota_i = m$ where $\iota_i: L \hookrightarrow N_i$ is the inclusion of $L$ into $N_i$.
\end{definition}

\begin{example}
	Different sets of triple rules are depicted in Figs.~\ref{fig:tggRules}, \ref{fig:tggFwdRules}, \ref{fig:scRules}, and \ref{fig:fwdSCRules}. 
	All rules in these figures are presented in an \emph{integrated} form: 
	Instead of displaying LHS, RHS, and the interface as three separate graphs, just one graph is presented where the different roles of the elements are displayed using markings (and color). 
	The unmarked (black) elements constitute the interface of the rule, i.e., the context that has to be present to apply a rule. 
	Unmarked elements and elements marked with $(--)$ (black and red elements) form the LHS while unmarked elements and elements marked with $(++)$ (black and green elements) constitute the RHS. 
	Elements marked with (nac) (blue elements) extend the LHS to a NAC; different NACs for the same rule are distinguished using names. 
	As triple rules are depicted, their LHSs and RHSs are triple graphs themselves. 
	For example, the LHS $L$ of \secondTGGRule{} (Fig.~\ref{fig:tggRules}) consists of the nodes \ssp{} and \ssf{} of types \package{} and \folder{} and the correspondence node in between. 
	
	While, e.g., all rules in Fig.~\ref{fig:tggRules} are monotonic, \secondSCRule{} is not as it deletes edges and a \doc{}. 
	Applying \secondSCRule{} to the triple graph (a) in Fig.~\ref{fig:translationExample} leads to the triple graph (c), when \package{}-nodes \ssp{} and \p{} (of the rule) are matched to \rootP{} and \subP{} (in the graph), respectively. (The \folders{} on the target part are mapped accordingly.)
	The rules \firstSCRule{} and \secondSCRule{} are inverse to each other. 
	
	Finally, \firstTGGForwardRule{} (Fig.~\ref{fig:tggFwdRules}) depicts a rule that is equipped with a NAC: 
	It is applicable only at \packages{} that are not referenced by other \packages{}. 
	This means that it is applicable at node \subP{} in the triple graph (b) depicted in Fig.~\ref{fig:translationExample}, but not at node \leafP{}. 
\end{example}

\subsection{Triple graph grammars and their operationalization} 
Sets of triple graph rules can be used to define languages.

\begin{definition}[Triple graph grammar]
	A \emph{triple graph grammar} (TGG) $\mathit{GG} = (\mathcal{R},S)$ consists of a set of plain, monotonic triple rules $\mathcal{R}$ and a start triple graph $S$. 
	In case of typing, all rules of $\mathcal{R}$ and $S$ are typed over the same triple graph. 
	
	The language of a TGG $\mathit{GG}$, denoted as $\mathcal{L}(\mathit{GG})$, is the reflexive and transitive closure of the relation induced by transformation steps via rules from $\mathcal{R}$, i.e., 
	\begin{equation*}
		\mathcal{L}(\mathit{GG}) \coloneqq \{H \, | \, S \Rightarrow_{\mathcal{R}}^* H\}
	\end{equation*}
	where $\Rightarrow_{\mathcal{R}}^*$ denotes a finite sequence of transformation steps where each rule stems from $\mathcal{R}$. 
	
	The \emph{projection of the language of a TGG to its source part} is the set
	\begin{equation*}
		\mathcal{L}_S(\mathit{GG}) \coloneqq \{ G_S \, | \, G = (G_S \leftarrow G_C \rightarrow G_T) \in \mathcal{L}(\mathit{GG})\} \enspace ,
	\end{equation*}
	i.e., it consists of the source graphs of the triple graphs of $\mathcal{L}(\mathit{GG})$.
\end{definition}

In applications, quite frequently, the start triple graph of a TGG is just the empty triple graph. 
We use $\emptyset$ to denote the empty graph, the empty triple graph, and morphisms starting from the empty (triple) graph; it will always be clear from the context what is meant.
To enhance expressiveness of TGGs,  their rules can be extended with NACs or with some attribution concept for the elements of  generated triple graphs. 
A recent overview of such concepts and their expressiveness can be found in~\cite{WOR19}. 
In the following, we first restrict ourselves to TGGs that contain plain rules only and discuss extensions of our approach subsequently. 

\begin{example}
	The rule set depicted in Fig.~\ref{fig:tggRules}, together with the empty triple graph as start graph, constitutes a TGG. 
	The triple graphs (a) and (c) in Fig.~\ref{fig:translationExample} are elements of the language defined by that grammar while the triple graph (b) is not.
\end{example}

The operationalization of triple graph rules into \emph{source} and \emph{forward} (or, analogously, into \emph{target} and \emph{backward}) \emph{rules} is central to working with TGGs.
Given a rule, its source rule performs the rule\rq{}s actions on the source graph only while its forward rule propagates these to correspondence and target graph. 
This means that, for example, source rules can be used to generate the source graph of a triple graph while forward rules are then used to translate the source graph to correspondence and target side such that the result is a triple graph in the language of the TGG. 
Classically, this operationalization is defined for monotonic rules only~\cite{Schuerr95}. We will later explain how to extend it to arbitrary triple rules. 
We also recall the notion of marking~\cite{Leblebici18} and \emph{consistency patterns} which can be used to check if a triple graph belongs to a given TGG.

\begin{definition}[Source and forward rule. Consistency pattern]
	Given a plain, monotonic triple rule $r = L \rightarrow R$ with $r = (r_S, r_C, r_T)$, $L = (L_S \xleftarrow{\sigma_L} L_C \xrightarrow{\tau_L} L_T)$ and $R = (R_S \xleftarrow{\sigma_R} R_C \xrightarrow{\tau_R} R_T)$, its \emph{source rule} is defined as
	\begin{equation*}
		r^S \coloneqq (L_S \leftarrow \emptyset \rightarrow \emptyset) \xrightarrow{(r_S, \mathit{id}_{\emptyset}, \mathit{id}_{\emptyset})} (R_S \leftarrow \emptyset \rightarrow \emptyset) \enspace .
	\end{equation*}
	Its \emph{forward rule} is defined as 
	\begin{equation*}
		r^F \coloneqq (R_S \xleftarrow{\sigma_R \circ r_C} L_C \xrightarrow{\tau_L} L_T) \xrightarrow{(\mathit{id}_{R_S}, r_C, r_T)} (R_S \xleftarrow{\sigma_R} R_C \xrightarrow{\tau_R} R_T) \enspace .
	\end{equation*}
	We denote the left- and right-hand sides of source and forward rules of a rule $r$ by $L^S,L^F,R^S$, and $R^F$, respectively. 
	
	The \emph{consistency pattern} derived from $r$ is the rule
	\begin{equation*}
		r^C \coloneqq (R_S \xleftarrow{\sigma_R} R_C \xrightarrow{\tau_R} R_T) \xrightarrow{(\mathit{id}_{R_S}, \mathit{id}_{R_C}, \mathit{id}_{R_T})} (R_S \xleftarrow{\sigma_R} R_C \xrightarrow{\tau_R} R_T)
	\end{equation*}
	that, upon application, just checks for the existence of the RHS of the rule without changing the instance it is applied to. 
	
	Given a rule $r$, each element $x \in R_S \setminus L_S$  is called a \emph{source marking element} of the forward rule $r^F$; each element of $L_S$ is called \emph{required}. 
	Given an application $G \Rightarrow_{r^F,\mathit{m^F}} H$ of a forward rule $r^F$, the elements of $G_S$ that have been matched by source marking elements of $r^F$, i.e., the elements of the set $\mathit{m}^F(R_S \setminus L_S)$ are called \emph{marked elements}.
	A transformation sequence
	\begin{equation}\label{eq:forward-sequence}
		G_0 \Rightarrow_{m_1^F,r_1^F} G_1 \Rightarrow_{m_2^F,r_2^F} \dots \Rightarrow_{m_t^F,r_t^F} G_t
	\end{equation}
	is called \emph{creation preserving} if no two rule applications in sequence (\ref{eq:forward-sequence}) mark the same element.
	It is called \emph{context preserving} if, for each rule application in sequence (\ref{eq:forward-sequence}), the required elements have been marked by a previous rule application in sequence (\ref{eq:forward-sequence}). 
If these two properties hold for sequence~(\ref{eq:forward-sequence}), it is called {\em consistently marking}.
	It is called \emph{entirely marking} if every element of the common source graph $G_S$ of the triple graphs of this sequence is marked by a rule application in sequence (\ref{eq:forward-sequence}).
\end{definition}

The most important formal property of this operationalization is that applying a (sequence of) source rule(s) followed by applying the (sequence of) corresponding forward rule(s) yields the same result as applying the (sequence of) original TGG rule(s) assuming consistent matches~\cite{Schuerr95,EEEHT07}. 

Moreover, there is a correspondence between triple graphs belonging to the language of a given TGG and consistently and entirely marking transformation sequences via its forward rules.
We formally state this correspondence as it is an ingredient for the proof of correctness of our synchronization algorithm.
\begin{lemma}[{see \cite[Fact~1]{LAFVS17} or ~\cite[Lemma~4]{Leblebici18}}]\label{lem:emccp-series}
	Let a TGG $\mathit{GG}$ be given. 
	There exists a triple graph $G = (G_S \leftarrow G_C \rightarrow G_T) \in \mathcal{L}(\mathit{GG})$ if and only if there exists a transformation sequence like the one depicted in~(\ref{eq:forward-sequence}) via forward rules from $\mathit{GG}$ such that $G_0 = (G_S \leftarrow \emptyset \rightarrow \emptyset),\ G_t = (G_S \leftarrow G_C \rightarrow G_T)$, and the transformation sequence is consistently and entirely marking. 
\end{lemma}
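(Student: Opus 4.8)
The plan is to prove the biconditional by establishing each direction through an induction that tracks the correspondence between applications of monotonic TGG rules and applications of their forward rules. The central bridge is the formal property recalled just before the lemma: applying a source rule followed by its corresponding forward rule (at consistent matches) yields the same result as applying the original TGG rule. I will exploit this to convert a derivation in $\mathcal{L}(\mathit{GG})$ into a forward sequence and back.

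For the \enquote*{only if} direction, suppose $G \in \mathcal{L}(\mathit{GG})$, so there is a derivation $S = \emptyset \Rightarrow_{r_1} H_1 \Rightarrow_{r_2} \dots \Rightarrow_{r_t} H_t = G$ using monotonic rules $r_i \in \mathcal{R}$. Since all rules are non-deleting, this derivation only adds elements; in particular the source components accumulate monotonically, building up $G_S$. First I would split each TGG step into its source part and its forward part using the operationalization property: the combined source derivation produces $G_S$ (with empty correspondence and target), giving $G_0 = (G_S \leftarrow \emptyset \rightarrow \emptyset)$, and the forward steps $r_1^F, \dots, r_t^F$, applied at the matches induced by the original derivation, reconstruct the correspondence and target graphs, ending in $G_t = (G_S \leftarrow G_C \rightarrow G_T) = G$. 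I then check the two marking conditions. \emph{Creation preserving}: each element of $G_S$ is created by exactly one $r_i$ in the original derivation, hence is marked by exactly one $r_i^F$, so no element is marked twice. \emph{Context preserving}: the match of $r_i$ requires context elements that were created by earlier rules in the derivation; these correspond precisely to required (non-marking) source elements of $r_i^F$, which were therefore marked by an earlier forward step. Thus the sequence is consistently marking. \emph{Entirely marking} follows because every node and edge of $G_S$ was created by some $r_i$ and hence marked by the corresponding $r_i^F$.

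For the \enquote*{if} direction, suppose we are given a consistently and entirely marking forward sequence as in~(\ref{eq:forward-sequence}) with $G_0 = (G_S \leftarrow \emptyset \rightarrow \emptyset)$ and $G_t = (G_S \leftarrow G_C \rightarrow G_T)$. I would proceed by induction on $t$, reassembling the original TGG derivation. The context-preserving property guarantees that whenever $r_i^F$ fires, the source elements it requires have already been marked, which is exactly the condition needed for the match of $r_i^F$ to extend to a consistent match of the full rule $r_i$; the creation-preserving property ensures these forward applications do not conflict over marked elements, so the forward matches assemble into the matches of a genuine monotonic derivation $\emptyset \Rightarrow_{r_1} \dots \Rightarrow_{r_t} G'$. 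Invoking the operationalization property again (now in reverse), this derivation yields a triple graph $G'$ whose forward translation coincides with the given sequence, so $G' = G_t = G$, placing $G$ in $\mathcal{L}(\mathit{GG})$. The entirely-marking hypothesis is what certifies that the source graph reconstructed by the derivation is all of $G_S$ and nothing is left untranslated.

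The main obstacle I anticipate is the bookkeeping in the \enquote*{if} direction: turning the abstract marking conditions into an actual sequence of consistent matches for the monotonic rules, and verifying that the order in which forward rules fire can be lifted to a valid rule-application order for the TGG (i.e., that context-preservation really does supply a well-founded dependency order). Establishing that the forward matches glue together into legitimate TGG matches — respecting injectivity and the source/correspondence/target morphism compatibilities — is the delicate step; the two marking conditions are precisely engineered to make this work, but showing it cleanly is where the real content lies. Since the lemma is quoted from~\cite[Fact~1]{LAFVS17} and~\cite[Lemma~4]{Leblebici18}, I would ultimately lean on those references for the detailed verification, presenting the above as the structural argument.
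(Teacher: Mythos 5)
The paper does not prove this lemma itself --- it imports it from \cite[Fact~1]{LAFVS17} and \cite[Lemma~4]{Leblebici18} --- and your decomposition/recomposition argument via the source--forward operationalization property (recalled in the paper just before the lemma) is exactly the standard argument used in those references. Your plan is correct and takes essentially the same approach, including the honest identification of the \enquote{if}-direction bookkeeping (lifting the marking conditions to consistent matches of the monotonic rules) as the step where the real work lies.
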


For practical purposes, forward rules and consistency patterns may be equipped with so-called \emph{filter NACs} which can be automatically derived from the set of rules of the given TGG. 
The simplest examples of such filter NACs arise through the following analysis:
For each rule that translate a node without translating adjacent edges it is first checked if other rules translate the same type of node but also translate an adjacent edge of some type. 
If this is the case, it is checked if there are further rules which only translate the detected kind of adjacent edge. 
If none is found, the original rule is equipped with a NAC forbidding the respective kind of edges. 
This avoids a dead-end in translation processes: 
In the presence of such a node with its adjacent edge, using the original rule to only translate the node leaves an untranslatable edge behind.
The filter NAC of \firstTGGForwardRule{} is derived in exactly this way. 
For the exact and more sophisticated derivation processes of filter NACs, we refer to the literature~\cite{Hermann2010,KLKS10}. 
For our purposes it suffices to recall their distinguishing property: 
Filter NACs do not prevent \enquote{valid} transformation sequences of forward rules. 
We state this property in the terminology of our paper.
\begin{fact}[{\cite[Fact~4]{Hermann2010}}]
	Given a TGG $\mathit{GG} = (\mathcal{R},S)$, for each $r \in \mathcal{R}$, let $r^{\mathit{FN}}$ denote the corresponding forward rule that is additionally equipped with a set of derived filter NACs. (This set might be empty). 
	For $G_0 = (G_S \leftarrow \emptyset \rightarrow \emptyset)$,
	there exists a consistently and entirely marking 
transformation sequence 
	\begin{equation*}
		G_0 \Rightarrow_{r_1^F,m_1^F} G_1 \Rightarrow_{r_2^F,m_2^F} \dots \Rightarrow_{r_t^F,m_t^F} G_t
	\end{equation*}
	via the forward rules (without filter NACs) derived from $\mathcal{R}$ if and only if the sequence
	\begin{equation*}
		G_0 \Rightarrow_{r_1^{\mathit{FN}},m_1^F} G_1 \Rightarrow_{r_2^{\mathit{FN}},m_2^F} \dots \Rightarrow_{r_t^{\mathit{FN}},m_t^F} G_t
	\end{equation*}
	exists, i.e, if none of the filter NACs blocks one of the above rule applications.
\end{fact}

\begin{example}
	The source rules of the triple rules depicted in Fig.~\ref{fig:tggRules} are depicted in Fig.~\ref{fig:tggSourceRules}. 
	They allow to create \packages{} and \classes{} on the source side without changing correspondence and target graphs. 
	The formally existing empty graphs at correspondence and target sides are not depicted. 
	The corresponding forward rules are given in Fig.~\ref{fig:tggFwdRules}. 
	Their required elements are annotated with \marked{} and their source marking elements with \marking{}. 
	The rule \firstTGGForwardRule{} is equipped with a filter NAC: 
	The given grammar does not allow to create a \package{} that is contained in another one with its original rule \firstTGGRule{}. 
	Hence, the derived forward rule should not be used to translate a \package{}, which is contained in another one, to a \folder{}.
	As evident in the examples, the application of a source rule followed by the application of the corresponding forward rule amounts to the application of the original triple rule if matched consistently. 
	
	The consistency patterns that are derived from the TGG rules of our example are depicted in Fig.~\ref{fig:consistency}. 
	They just check for existence of the pattern that occurs after applying the original TGG rule. 
	A consistency pattern is equipped with the filter NACs of both its corresponding forward and backward rule.
	In our example, only \emph{Root-Consistency-Pattern} receives such NACs; one from \firstTGGForwardRule{} and the second one from the analogous backward rule.
	An occurrence of a consistency pattern in our example model indicates that a specific location corresponds to a concrete TGG rule application. 
	Hence, a disappearance of such a match indicates that a former intact rule application has been broken and needs some fixing. 
	We call this a \emph{broken match for a consistency pattern} or, short, a \emph{broken consistency match}.
	Practically, we will exploit an incremental pattern matcher to notify us about such disappearances. 
	\begin{figure}[ht]
		\centering
		\includegraphics[width=1.0\columnwidth]{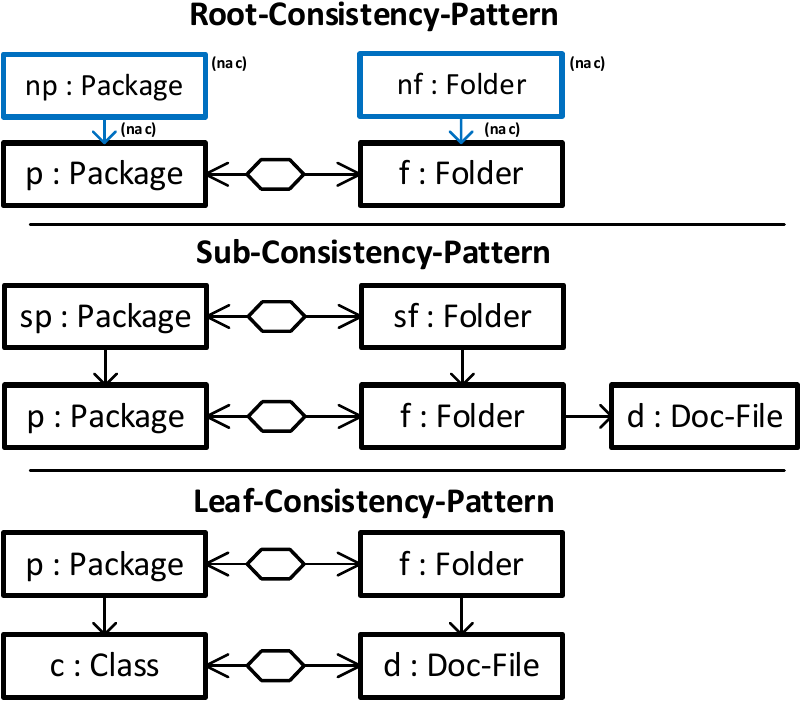}
		\caption{Example: Consistency Patterns}
		\label{fig:consistency}
	\end{figure}
\end{example}

\subsection{Sequential independence}
The proof of correctness of our synchronization approach relies on the notion of \emph{sequential independence}. 
Transformations that are sequentially independent can be performed in arbitrary order.

\begin{definition}[Sequential independence]
	Given two transformation steps $G \Rightarrow_{r_1,m_1} H_1 \Rightarrow_{r_2,m_2} X$, via plain rules $r_1,r_2$ these are \emph{sequentially independent} if 
	\begin{equation}\label{eq:seq-independence}
		n_1(R_1) \cap m_2(L_2) \subseteq n_1(K_1) \cap m_2(K_2)
	\end{equation}
	where $n_1$ is the comatch of the first transformation.
\end{definition}

By the Local Church-Rosser Theorem the order of sequentially independent transformation can be switched. This means that, given a sequentially independent transformation sequence $G \Rightarrow_{r_1,m_1} H_1 \Rightarrow_{r_2,m_2} X$, there exists a sequentially independent transformation sequence $G \Rightarrow_{r_2,m_2^\prime} H_2 \Rightarrow_{r_1,m_1^\prime} X$~\cite[Theorem~3.20]{EEPT06}. 
	If $r_1$ and $r_2$ are equipped with NACs $\mathit{NAC}_1$ and $\mathit{NAC}_2$, respectively, transformation steps as above are \emph{sequentially independent} if condition~(\ref{eq:seq-independence}) holds and moreover, the thereby induced matches $m_2^\prime: L_2 \to G$ and $m_1^\prime: L_1 \to H_2$ both satisfy the respective sets of NACs. In particular, the Local Church-Rosser Theorem still holds. 
	
	In our setting of graph transformation, it is easy to check the sequential independence of transformations~\cite{EEPT06,EGHLO14}. 
	A sequence $t_1;t_2$ of two transformation steps is sequentially independent if and only if the following holds.
	\begin{itemize}
    \item $t_2$ does not match an element that $t_1$ created.
		\item$t_2$ does not delete an element that $t_1$ matches.
		\item  $t_2$ does not create an element that $t_1$ forbids. 
		\item $t_1$ does not delete an element that $t_2$ forbids. 
	\end{itemize}

\section{Short-cut Rules}
\label{sec:sc-rules}
Short-cut rules were introduced in \cite{FKST18} to take back an application of a TGG rule and to apply another one instead. This exchange of application shall be performed such that information loss is avoided. This means that model elements are check for reuse before deleting them.  We recall the construction of short-cut rules first and discuss their expressivity thereafter. Finally, we identify conditions for language-preserving applications of short-cut rules.

\subsection{Construction of short-cut rules}
We recall the construction of short-cut rules in a semiformal way and reuse an example of~\cite{FKST18} for illustration; a formal treatment (in a category-theoretical setting) can be found in that paper. 
Given an inverse monotonic rule (i.e., a rule that purely deletes) and a monotonic rule, a \shortcutRule{} combines their respective actions into a single rule. 
Its construction allows to identify elements that are deleted by the first rule as recreated by the second one. 
To motivate the construction, assume two monotonic rules $r_1: L_1 \rightarrow R_1$ and $r_2: L_2 \rightarrow R_2$
be given. 
Applying the inverse rule of $r_1$ to a triple graph $G$, provides an image of $L_1$ in the resulting triple graph $H$. 
When applying $r_2$ thereafter, the chosen match for $L_2$ in $H$ may intersect with the image of $L_1$ yielding a triple graph $L_{\cap}$. 
This intersection can also be understood as saying that $L_{\cap}$ provides a partial match for $L_2$. 
The inverse application of the first rule deletes elements which may be recreated again.
In this case, it is possible to extend the sub-triple graph $L_{\cap}$ of $H$ 
to a sub-triple graph $R_{\cap}$ of $H$ with these elements. 
In particular, $R_{\cap}$ is a sub-triple graph of $R_1$ and $R_2$ as it includes elements only that have been deleted by the first rule and created by the second. 
Based on this observation, the construction of short-cut rules is defined as follows (slightly simplified and directly merged with an example):
\begin{construction}[Short-cut rule]
Let two plain, monotonic rules $r_1 = L_1 \to R_1$ and $r_2= L_2 \to R_2$ be given. 
A short-cut rule $r_{\mathit{sc}}$ for the rule pair $(r_1,r_2)$, where $r_1$ is considered to be applied inversely, is constructed in the following way:
\begin{figure*}
	\centering
	\includegraphics[width=.8\textwidth,trim={1.5mm 1.5mm 1.5mm 1.5mm},clip]{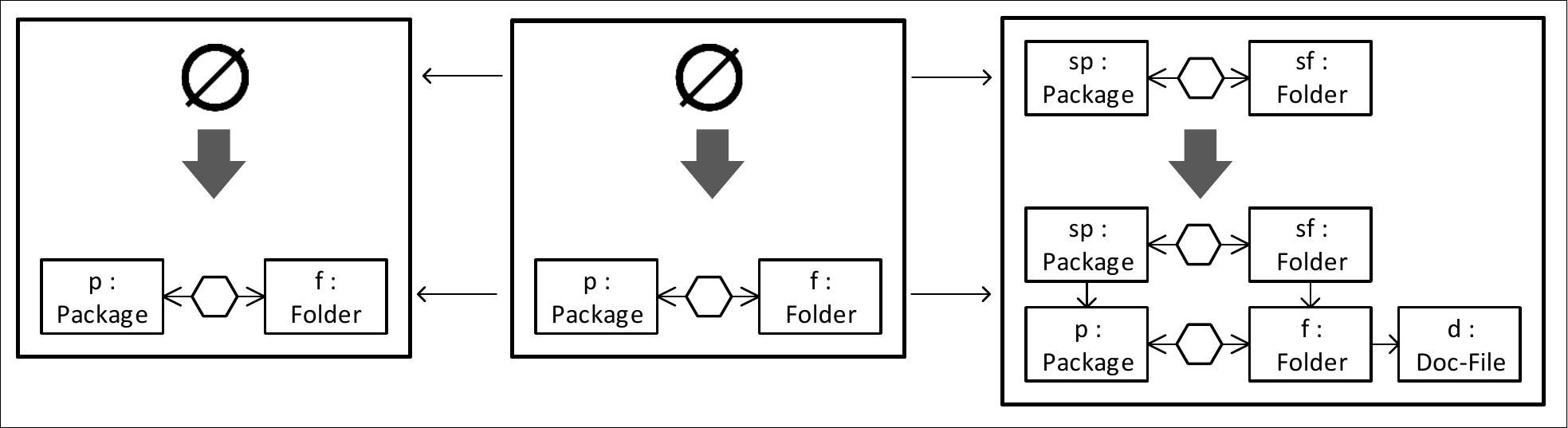}
	\caption{A common kernel rule pair (\firstTGGRule{},\secondTGGRule{}). The names of the nodes indicate their mappings and the rules are depicted top-down.}
	\label{fig:example-common-kernel}
\end{figure*}
\begin{enumerate} 
	\item \emph{Choice of common kernel:} 
	A (potentially empty) sub-triple graph $L_{\cap}$ of $L_1$ and $L_2$ and a sub-triple graph $R_{\cap}$  of $R_1$ and $R_2$ with $L_{\cap} \subseteq R_{\cap}$ are chosen. 
	We call $\Lcap \subseteq \Rcap$ a \emph{common kernel} of both rules. 

	In Fig.~\ref{fig:example-common-kernel}, an example of such a common kernel is given. 
	It is a common kernel for rule pair (\firstTGGRule{}, \secondTGGRule{}). The common kernel is depicted in the center of Fig.~\ref{fig:example-common-kernel}.
	This choice of a common kernel will lead to \firstSCRule{} as resulting short-cut rule.
	In this example, $L_{\cap}$ is empty and $R_{\cap}$ extends $L_{\cap}$ by identifying the \packages{} \p, \folders{} \f, and the correspondence node in between. 
	The elements of $R_{\cap} \setminus L_{\cap}$, called \emph{recovered elements}, are to become the elements that are preserved by an application of the short-cut rule compared to reversely applying the first rule followed by applying the second one (provided that these applications overlap in $L_{\cap}$). 
	In the example case, the whole graph $R_{\cap}$ is recovered as $L_{\cap}$ is empty.

	\item \emph{Construction of LHS and RHS:}
	One first computes the union $L_{\cup}$ of $L_1$ and $L_2$ along $L_{\cap}$. 
	The result is then united with $R_1$ along $L_1$ and $R_2$ along $L_2$, respectively, to compute the LHS and the RHS of the short-cut rule. 
	Figure~\ref{fig:example-construction-LHS-RHS} displays this. 
	
	\item \emph{Interface construction:}
	The interface $K$ of the short-cut rule is computed by taking the union of $L_{\cup}$ and $R_{\cap}$ along $L_{\cap}$. 
	For our example, this construction is depicted in Figure~\ref{fig:example-construction-interface}.
	The elements of $L_2 \setminus L_{\cap}$ are called \emph{presumed elements} since, given a match for the inverse first rule, i.e., for $R_1$, these are exactly the elements needed to extend this match to a match of the short-cut rule. 
	In our example, these are the \package{} \ssp{}, the \folder{} \ssf{}, and the correspondence node in between.
\end{enumerate}

\begin{figure*}
	\centering
	\includegraphics[width=\textwidth,trim={1.5mm 1.5mm 1.5mm 1.5mm},clip]{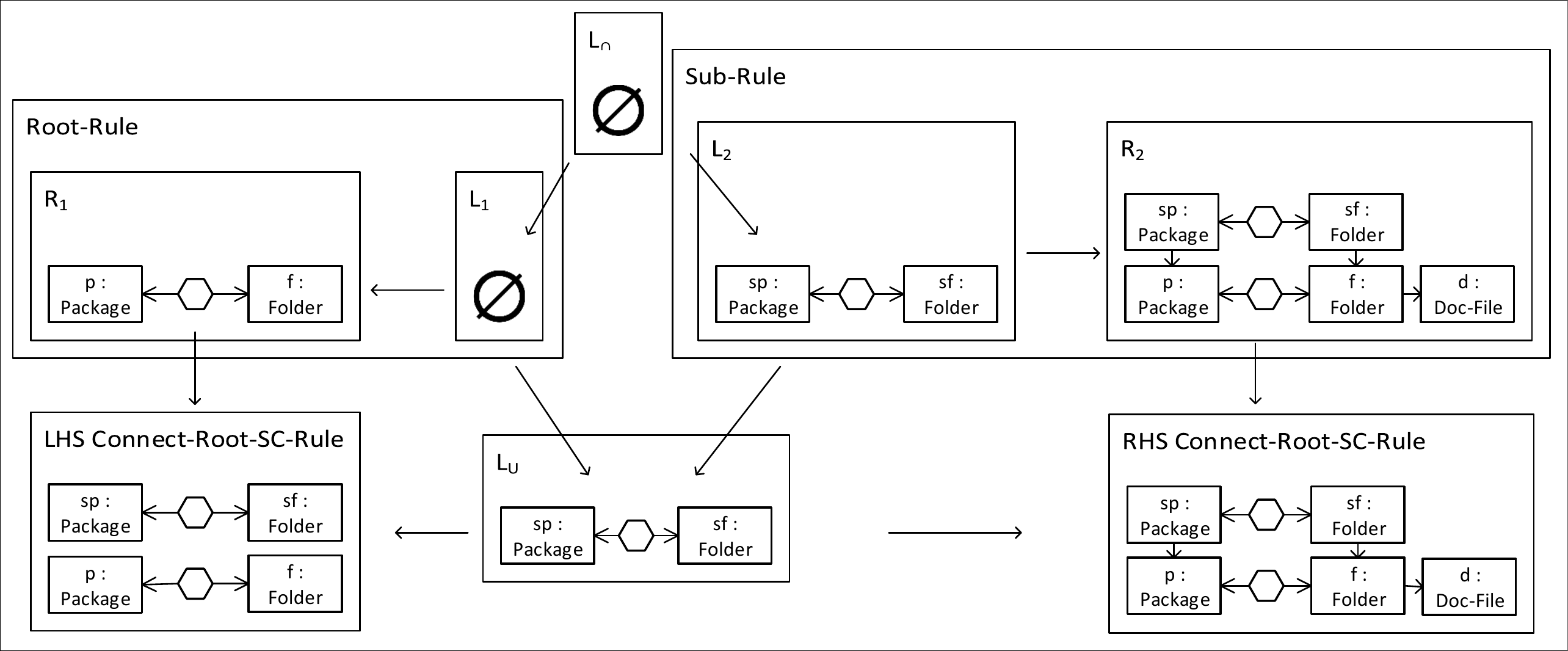}
	\caption{Constructing the LHS and the RHS of the short-cut rule \firstSCRule{}}
	\label{fig:example-construction-LHS-RHS}
\end{figure*}
\end{construction}

\begin{figure}
	\centering
	\includegraphics[width=\columnwidth,trim={1.5mm 1.5mm 1.5mm 1.5mm},clip]{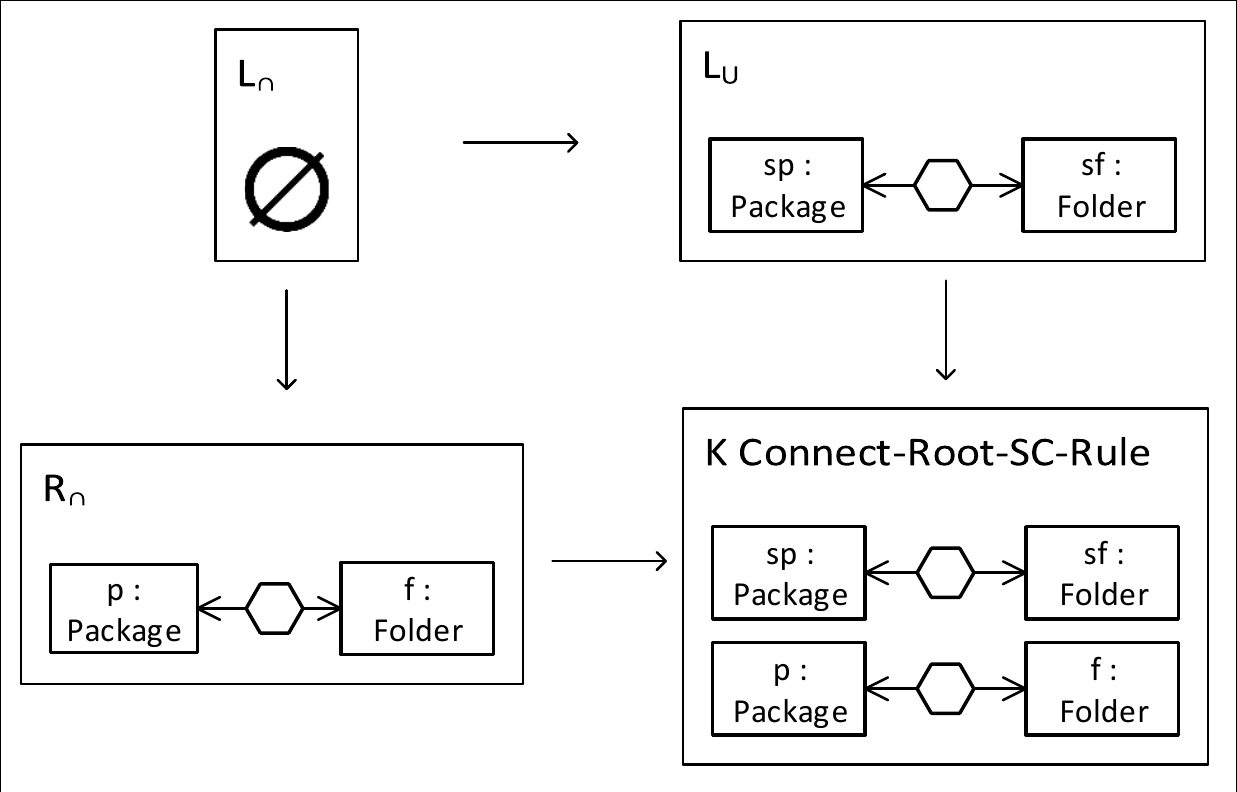}
	\caption{Constructing the interface of the short-cut \firstSCRule{}; the interface is the resulting graph in the bottom right corner.}
	\label{fig:example-construction-interface}
\end{figure}

\begin{example}
	More examples of short-cut rules are depicted in Fig.~\ref{fig:scRules}. 
	Both, \firstSCRule{} and \secondSCRule{}, are constructed for the rules \firstTGGRule{} and \secondTGGRule{}. Switching the role of of the inverse rule, two short-cut rules can be constructed having equal common kernels. 
	In both cases, the \package{} \p{}, the \folder{} \f{} and the correspondence node between them are recovered elements, as these elements would have been deleted and re-created otherwise. 
	While in \firstSCRule{}, the presumed elements are the \package{} \ssp{} and the \folder{} \ssf{} with a correspondence node in between, the set of presumed elements of \secondSCRule{} is empty.
	
	Another possible common kernel for \firstTGGRule{} and \secondTGGRule{} is one where $R_{\cap}$ is an empty triple graph as well. 
As the resulting short-cut rule just copies both rules (one of them inversely) next to each other, this rule is not interesting for our desired application. 
\end{example}

\subsection{Expressivity of short-cut rules}
Given a set of rules, there are {\em two degrees of freedom} when deciding which short-cut rules to derive from them: 
First, one has to choose for which \emph{pairs of rules} short-cut rules shall be derived. 
Secondly, given a pair of rules, there is typically not only one way to construct a short-cut rule for them:  
In general, there are different choices for a \emph{common kernel}. 
However, when fixing a common kernel, i.e., $\Lcap$ and $\Rcap$, the result of the construction is uniquely determined. If, moreover, the LHSs and RHSs of the rules are finite, the set of possible common kernels is finite as well.

As short-cut rules correspond to possible (complex) edits of a triple graph, the more short-cut rules are derived, the more user edits are available which can directly be propagated by the corresponding repair rules. 
But the number of rules that has to be computed (and maintained throughout the synchronization process) in this way, would quickly grow. 
And maybe several of the constructed rules would capture edits that are possible in principle but unlikely to ever be performed in a realistic scenario. 
Hence, some trade-off between expressivity and maintainability has to be found.

We shortly discuss these effects of choices: 
The construction of short-cut rules is defined for \emph{any two} monotonic rules~\cite{FKST18} -- we do not need to restrict to the rules of a given TGG but may also use monotonic rules that have been constructed as so-called \emph{concurrent rules}~\cite{EEPT06} of given TGG rules as input for the short-cut rule construction. 
A concurrent rule combines the actions of two (or more) subsequent rule applications into a single rule. 
Hence, deriving short-cut rules from concurrent rules that have been built of given TGG rules leads to short-cut rules that capture even more complex edits into a single rule. 
The next example presents such a derived short-cut rule. 
While our conceptual approach is easily extended to support such rules, we currently stick with short-cut rules directly derived from a pair of rules of the given TGG in our implementation.

\begin{example}
	The short-cut rule \fourthSCRule{} depicted in Fig.~\ref{fig:sc-rule-4} is not directly derived of the TGG rules depicted in Fig.~\ref{fig:tggRules}. 
	Instead, the concurrent rule of two given applications of \secondTGGRule{} is constructed first. 
	This concurrent rule directly creates a chain of two \packages{} and \folders{} into an existing pair of \package{} and \folder{}. 
	The rule in  Fig.~\ref{fig:sc-rule-4} is a short-cut rule of this concurrent rule and \secondTGGRule{}. It takes back the creation of a chain such that the bottom package is directly included in the top package in Fig.~\ref{fig:sc-rule-4}.
	
	\begin{figure}
		\includegraphics[width=\columnwidth]{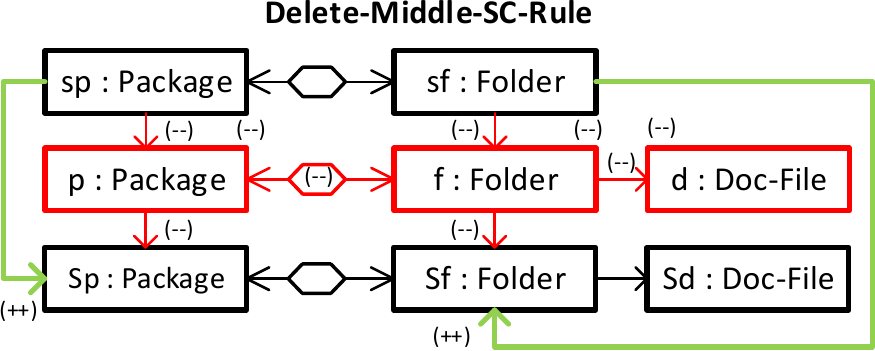}
		\caption{Example for a short-cut rule not directly derived from the rules of our example TGG}
		\label{fig:sc-rule-4}
	\end{figure}
\end{example}

Concerning the choice of a common kernel, we follow two strategies.
In both strategies, we overlap as many of the newly created elements of the two input rules as possible since these are the elements that we try to preserve. 

A \emph{minimal overlap} overlaps created elements only, i.e. no context elements. 
An example is \secondTGGRule{}, which overlapped with itself, results in \thirdSCRule{} and which corresponds to a move refactoring step. 

A \emph{maximal overlap} overlaps not only created elements of both rules but also context elements.
Creating such an overlap for \secondTGGRule{} with itself would result in the \emph{Sub-Consistency-Pattern}, which has no effect when applied.
However, when overlapping different rules with each other, it is often useful to re-use context elements. 
This is the case, for example, for \emph{VariableDec-2-Parameter-Rule} and \emph{TypeAccess-2-ReturnType-Rule} of our evaluation rule set in Fig.~\ref{fig:eval_rule_set_full} below.
A full overlap between both rules would allow to transform a signature parameter to a return parameter of the same method and of the same type and, vice versa.

Both strategies aim to create different kinds of short-cut rules with specific purposes.
Since generating all possible overlaps and thus short-cut rules is expensive, we chose a heuristic approach to generate a useful subset of them.

As we are dealing with triple graphs being composed of source, target and correspondence graphs, the overlap of source graphs should correspond to that of target graphs. 
This restricts the kind of \enquote{reuse} of elements the derived short-cut rules enable. 
The allowance of any kind of overlap may include unintended ones. 
We argue for the usefulness of these strategies in our evaluation in Sect.~\ref{sec:implAndEvaluation}. 

\subsection{Language preserving short-cut rule applications} 
The central intuition behind the construction of short-cut rules is to replace the application of a monotonic triple rule by another one. 
In this sense, a short-cut rule captures a complex edit operation on triple graphs that (in general) cannot be performed directly using the rules of a TGG. 
We illustrate this behaviour in the following. Subsequently, we discuss the circumstances under which applications of short-cut rules are \enquote{legal} in the sense that the result still belongs to the language of the respective TGG. 

\medskip

Let a TGG $\mathit{GG}$ and a sequence of transformations
\begin{equation}\label{eq:original-sequence}
	G_0 \Rightarrow_{r_1,m_1} G_1 \Rightarrow_{r_2,m_2} G_2 \Rightarrow \dots \Rightarrow_{r_t,m_t} G_t
\end{equation}
be given where all the $r_i$, $1 \leq i \leq t$, are rules of $\mathit{GG}$, all the $m_i$ denote the respective matches, and $G_0 \in \mathcal{L}(\mathit{GG})$; in particular $G_t \in \mathcal{L}(\mathit{GG})$ as well.
Fixing some $j \in \{1, \dots, t\}$ and some rule $r$ of $\mathit{GG}$, we construct a short-cut rule $r_{sc}$ for $(r_j, r)$ with some common kernel $\Lcap \subseteq \Rcap$. 
Next, we can consider the transformation sequence
\begin{equation*}
	G_0 \Rightarrow_{r_1,m_1} G_1 \Rightarrow_{r_2,m_2} G_2 \Rightarrow \dots \Rightarrow_{r_t,m_t} G_t \Rightarrow_{r_{sc},m_{sc}} G_t'
\end{equation*}
that arises by appending an application of $r_{sc}$ to transformation sequence~(\ref{eq:original-sequence}). 
Under certain technical cir\-cumstances (which we will state below) this transformation sequence is equivalent\footnote{The formal notion of equivalence used here is called \emph{switch equivalence} and captures the idea that, in case of sequential independence, the order of rule applications might be switched while using basically the same match for each rule application and receiving the same result; compare, e.g.,~\cite{Kreowski86,BCHKS14}.} to the sequence 
\begin{equation}\label{eq:final-sequence}
\begin{split}
	G_0 \Rightarrow_{r_1,m_1} G_1 \Rightarrow \dots \Rightarrow_{r_{j-1},m_{j-1}} G_{j-1} \Rightarrow_{r,m_{sc}^{\prime}} G_j^{\prime} \\
\Rightarrow_{r_{j+1},m_{j+1}^{\prime}} \dots \Rightarrow_{r_t,m_t^{\prime}} G_t^{\prime\prime} 
\end{split}
\end{equation}
where the application of $r_j$ at match $m_j$ is replaced by an application of $r$ at a match $m_{sc}\rq{}$ that is derived from the match $m_{sc}$ of the \shortcutRule{}. The following matches $m_{j+1},\allowbreak \dots,\allowbreak m_{t}$ have been adapted accordingly. 
They still match the same elements but formally they do so in other triple graphs. 
In particular, $G_t^{\prime\prime}$, the result of the transformation sequence (\ref{eq:final-sequence}), \emph{is isomorphic} to $G_t^{\prime}$ and hence, $G_t^{\prime}$ can be understood as arising by replacing the $j$-th rule application in the transformation sequence (\ref{eq:original-sequence}) by an application of the rule $r$; thus, $G_t^{\prime}$ also belongs to the language of the TGG: 
The sequence~(\ref{eq:final-sequence}) starts at a triple graph $G_0 \in \mathcal{L}(\mathit{GG})$ and solely consists of applications of rules from $\mathit{GG}$. 

\begin{figure*}
	\centering
	\includegraphics[width=.75\textwidth]{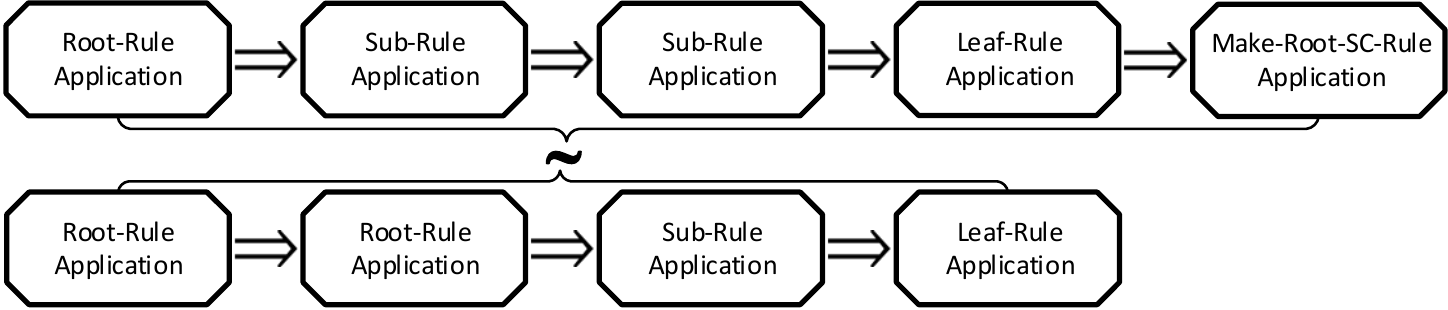}%
	\caption{Example: Transforming sequences of rule applications by applying short-cut rules}%
	\label{fig:ruleApplications_Solved}%
\end{figure*}

\begin{example}
	Consider the triple graph depicted in Fig.~\ref{fig:translationExample} (a). 
	It arises by applying \firstTGGRule{}, followed by two applications of \secondTGGRule{}, and finally an application of \thirdTGGRule{}. 
	When matched as already described in the introductory example, an additional application of \secondSCRule{} to this triple graph results in the one depicted in Fig.~\ref{fig:translationExample} (c). 
	Alternatively, this can be derived by two applications of \firstTGGRule{}, followed by an application of \secondTGGRule{} and \thirdTGGRule{} each. 
	As schematically depicted in Fig.~\ref{fig:ruleApplications_Solved}, the application of the \shortcutRule{} \secondSCRule{} transforms an transformation sequence deriving the first triple graph into a transformation sequence deriving the second one by replacing an application of \secondTGGRule{} by one of \firstTGGRule{}. 
\end{example}
 
In the following, we state when the above described behaviour is the case (in a somewhat less technical language than originally used).

\begin{theorem}[{\cite[Theorem~8]{FKST19}}]\label{thm:valid-applications}
Let the transformation sequence~(\ref{eq:original-sequence}) be given and let $r_{sc}$ be a \shortcutRule{} that is derived from $(r_j, r)$. 
If the following three conditions are met, this sequence is equivalent to sequence~(\ref{eq:final-sequence}) where original TGG rules are applied only.  
\begin{enumerate}
	\item \emph{Reversing match:} The application of $r_{sc}$ at $m_{sc}$ reverses the application of $r_j$, i.e., $n_j(R_j) = m_{\mathit{sc}}|_{R_j}(R_j)$. 
	
	\item \emph{Sequential independence:}
	\begin{enumerate}
		\item \emph{Non-disabling match:} The application of $r_{sc}$ at $m_{sc}\rq{}$ does not delete elements used in the applications of $r_{j+1}, \dots, r_t$. 
	
		\item \emph{Context-preserving match:} The match $m_{sc}$ for $r_{sc}$ already exists in $G_{j-1}$. 
		Since the assumption on the match to be reversing already ensures this for elements of $L_{\mathit{sc}}$ that stem from $R_j$, context-preservation ensures in particular that the \emph{presumed elements} of $r_{sc}$ are matched to elements already existing in $G_{j-1}$. 
	\end{enumerate}
\end{enumerate}
\end{theorem}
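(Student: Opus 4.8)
\emph{Proof proposal.} The plan is to rewrite the appended sequence so that the single application of $r_{sc}$ is \enquote{pushed} back to the slot immediately after the application of $r_j$, where it can then be merged with that application. The decisive structural fact I would exploit is that all rules $r_1,\dots,r_t$ of $\mathit{GG}$ are monotonic, so sequence~(\ref{eq:original-sequence}) is purely creating and induces an ascending chain $G_{j-1} \subseteq G_j \subseteq \dots \subseteq G_t$. In particular, the comatch image $n_j(R_j)$ persists unchanged in every $G_i$ with $i \geq j$, and the presumed elements matched by $m_{sc}$, which by the context-preserving condition already lie in $G_{j-1}$, are present throughout as well.

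First I would commute the application $G_t \Rightarrow_{r_{sc},m_{sc}} G_t'$ leftward past $r_t, r_{t-1}, \dots, r_{j+1}$, one step at a time, using the Local Church-Rosser Theorem. For this I must verify that each adjacent pair $r_i ; r_{sc}$ (for $i > j$) is sequentially independent, checking the four criteria recalled in Sect.~\ref{sec:preliminaries}. The reversing condition identifies $m_{sc}|_{R_j}$ with $n_j$, and together with the context-preserving condition it guarantees that every element matched by $r_{sc}$ already exists in $G_j$; hence $r_{sc}$ never matches an element created by $r_i$. The non-disabling condition guarantees that the elements deleted by $r_{sc}$ are untouched by $r_{j+1},\dots,r_t$, so $r_{sc}$ deletes nothing that $r_i$ matches. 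The two remaining criteria hold vacuously, since $r_{sc}$ is plain (it forbids nothing) and the $r_i$ are monotonic (they delete nothing). Repeatedly swapping therefore transports $r_{sc}$ to the position directly behind $r_j$, producing a switch-equivalent sequence in which $r_{j+1},\dots,r_t$ reappear with accordingly adapted matches that still hit the same elements.

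It then remains to collapse the adjacent pair $G_{j-1} \Rightarrow_{r_j} G_j \Rightarrow_{r_{sc}} G_j'$ into a single application of $r$. Here I would invoke the defining property of the short-cut rule construction from~\cite{FKST18}: relative to the chosen common kernel $\Lcap \subseteq \Rcap$, applying $r_{sc}$ at a match is equivalent to applying $r_j^{-1}$ followed by $r$ at matches overlapping in $\Lcap$. Because the transported match is still reversing, the embedded inverse step exactly undoes $G_{j-1} \Rightarrow_{r_j} G_j$ and returns $G_{j-1}$; the two cancel and leave precisely $G_{j-1} \Rightarrow_{r,m_{sc}'} G_j'$. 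Substituting this for the pair yields sequence~(\ref{eq:final-sequence}), which consists solely of rules of $\mathit{GG}$ and starts in $\mathcal{L}(\mathit{GG})$; hence $G_t''$ is isomorphic to $G_t'$ and also lies in $\mathcal{L}(\mathit{GG})$.

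I expect the main obstacle to be this final merge: making precise that, after the commutations, the transported match of $r_{sc}$ is still reversing with respect to the persisted application of $r_j$, and that the decomposition of~\cite{FKST18} combines with this reversing property to cancel $r_j ; r_j^{-1}$ cleanly and expose $r$. The other delicate point is the bookkeeping of how matches are inherited through each Local Church-Rosser swap, ensuring that the reversing, context-preserving, and non-disabling properties remain stable along the way; everything else reduces to the routine independence checks above.
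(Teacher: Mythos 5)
Your proposal is correct and takes essentially the approach intended for this result: the paper itself only imports the theorem from \cite[Theorem~8]{FKST19} without reproducing the proof, but its surrounding discussion (the switch-equivalence footnote, the four independence criteria recalled in Sect.~\ref{sec:preliminaries}, and the analogous reasoning in the proof of Corollary~\ref{cor:valid-repair-rule-applications}) points exactly to your two-stage argument of commuting the short-cut application leftward via Local Church--Rosser and then cancelling $r_j ; r_j^{-1}$ using the decomposition property of short-cut rules from \cite{FKST18}. One cosmetic remark: the criterion that $r_{\mathit{sc}}$ must not create an element that $r_i$ forbids is vacuous because $r_i$ is plain, not because $r_{\mathit{sc}}$ is, though both facts appear in your justification.
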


\begin{example}
	We illustrate each of the above mentioned conditions:
	\begin{enumerate}
		\item \emph{Reversing match:} In our example of matching \firstSCRule{} to the triple graph (c) in Fig.~\ref{fig:translationExample} this means that its nodes \p{} and \f{} (and the correspondence node in between) are allowed to be matched to elements only that have been created using \firstTGGRule{}. 
		In this way, it is avoided to mis\-use the rule to introduce \packages{} (and \folders{}) that are contained by more than one \package{} (or \folder{}). 
		
		\item \emph{Non-disabling match:} For example, \fourthSCRule{} from Fig.~\ref{fig:sc-rule-4} is not allowed to delete \packages{} and \folders{} that already contain \classes{} or \docs{}, respectively. 
		
		\item \emph{Context preserving match:} Returning to our example of matching \firstSCRule{} to the triple graph (c) in Fig.~\ref{fig:translationExample} this means that as soon as nodes \subP{} and \subF{} in that triple graph have been chosen as matches for the nodes \p{} and \f{} of \firstSCRule{}, the nodes \leafP{} and \leafF{} are not allowed to be chosen as matches for nodes \ssp{} and \ssf{} of \firstSCRule{}. 
		The creation of \leafP{} and \leafF{} depends on \subP{} and \subF{} being created first. 
		In this way, the introduction of cyclic dependencies between elements is avoided.
	\end{enumerate}
\end{example}

\section{Constructing Language-Preserving Repair Rules}
\label{sec:constructing-repair-rules}
In this section, we formally define the derivation of repair rules from a given TGG and characterize valid applications of these.
Our general idea is to construct \emph{repair rules} that can be used during model synchronization processes that are based on the formalism of TGGs. 
Our construction of such repair rules is based on \emph{\shortcutRules{}} which we recalled in Section~\ref{sec:sc-rules}.  

\subsection{Deriving repair rules from short-cut rules}
Having defined short-cut rules, they can be operationalized to get edit rules for source graphs and forward rules that repair these edits. As such edits may delete source elements, correspondence elements may be left without corresponding source elements. Hence, the resulting triple graphs show a form of partiality. 
They are called \emph{partial triple graphs}. 
Given a model, formally considered as triple graph $G_S \xleftarrow{\sigma_G} G_C \xrightarrow{\tau_G} G_T$, a user edit on $G_S$ may consist of the deletion and/or creation of graph elements, resulting in a graph $G_S^{\prime}$. 
In general, the \enquote{old} correspondence morphism $\sigma_G: G_C \to G_S$ does not extend to a correspondence morphism from $G_C$ to $G_S^{\prime}$: 
The user might have deleted elements in the image of $\sigma_G$. 
However, there is 
a \emph{partial morphism} $\sigma_G^{\prime}: G_C \dashrightarrow G_S^{\prime}$ that is defined for all elements whose image under $\sigma_G$ still exists.
\begin{definition}[Partial triple graph]
	A \emph{partial graph morphism} $f: A \dashrightarrow B$ is a graph morphism $f: A^{\prime} \to B$ where $A^{\prime}$ is a subgraph of $A$; $A^{\prime}$ is called the \emph{domain} of $f$.
	
	A \emph{partial triple graph} $G^{\prime} = G_S^{\prime} \xdashleftarrow{\sigma_G^{\prime}} G_C^{\prime} \xdashrightarrow{\tau_G^{\prime}} G_T^{\prime}$ consists of three graphs $G_S^{\prime},G_C^{\prime},G_T^{\prime}$ and two partial graph morphisms $\sigma_G^{\prime}: G_C^{\prime} \dasharrow G_S^{\prime}$ and $\tau_G^{\prime}: G_C^{\prime} \dasharrow G_T^{\prime}$. 
	
	Given a triple graph $G = (G_S \xleftarrow{\sigma_G} G_C \xrightarrow{\tau_G} G_T)$ and a user edit of $G_S$ that results in a graph $G_S^{\prime}$, the \emph{partial triple graph induced by the edit} is $G_S^{\prime} \xdashleftarrow{\sigma_G^{\prime}} G_C \xrightarrow{\tau_G} G_T$ where $\sigma_G^{\prime}$ is obtained by restricting $\sigma_G$ to those elements $x$ of $G_C$ (node or edge) for which $\sigma_G(x) \in G_S$ is still an element of $G_S^{\prime}$.
\end{definition}
According to the above definition, triple graphs are special partial triple graphs, namely those, where the domain of both partial correspondence morphisms is the whole correspondence graph $G_C$. 

When operationalizing short-cut rules, i.e., splitting them into a \emph{source} and a \emph{forward rule}, we also have to deal with this kind of partiality: 
In contrast to the rules of a given TGG, a short-cut rule might delete an element. 
Hence, its forward rule might need to contain a correspondence element for which the corresponding source element is missing; it is referenced in the short-cut rule. 
This element is deleted by the corresponding source rule. 

\begin{definition}[Source and forward rule of short-cut rule. Repair rule]
	Given a pair $(r_1,r_2)$ of plain, monotonic triple rules with short-cut rule $r_{\mathit{sc}} = (L_{\mathit{sc}} \xleftarrow{l_{\mathit{sc}}} K_{\mathit{sc}} \xrightarrow{r_{\mathit{sc}}}  R_{\mathit{sc}})$, the \emph{source} and \emph{forward rule} of $r_{\mathit{sc}}$ are defined as 
	\begin{equation*}
		r_{\mathit{sc}}^S \coloneqq (L_{\mathit{sc}}^S \xleftarrow{(l_{\mathit{sc},S}, \mathit{id}_{\emptyset}, \mathit{id}_{\emptyset})} K_{\mathit{sc}}^S \xrightarrow{(r_{\mathit{sc},S}, \mathit{id}_{\emptyset}, \mathit{id}_{\emptyset})} R_{\mathit{sc}}^S)
	\end{equation*}
	and 
	\begin{equation*}
		r_{\mathit{sc}}^F \coloneqq (L_{\mathit{sc}}^F \xleftarrow{(id_{R_{\mathit{sc},S}}, l_{\mathit{sc},C}, l_{\mathit{sc},T})} K_{\mathit{sc}}^F \xrightarrow{(id_{R_{\mathit{sc},S}}, r_{\mathit{sc},C}, r_{\mathit{sc},T})} R_{\mathit{sc}}^F)
	\end{equation*}
	where
	\begin{align*}
		L_{\mathit{sc}}^S	& \coloneqq (L_{\mathit{sc},S} \leftarrow \emptyset \rightarrow \emptyset), \\
		K_{\mathit{sc}}^S	& \coloneqq (K_{\mathit{sc},S} \leftarrow \emptyset \rightarrow \emptyset), \\
		R_{\mathit{sc}}^S	& \coloneqq (R_{\mathit{sc},S} \leftarrow \emptyset \rightarrow \emptyset), \\
		L_{\mathit{sc}}^F	& \coloneqq (R_{\mathit{sc},S} \dashleftarrow L_{\mathit{sc},C} \rightarrow L_{\mathit{sc},T}), \\
		K_{\mathit{sc}}^F & \coloneqq (R_{\mathit{sc},S} \leftarrow K_{\mathit{sc},C} \rightarrow K_{\mathit{sc},T}), \text{ and } \\
		R_{\mathit{sc}}^F & \coloneqq (R_{\mathit{sc},S} \leftarrow R_{\mathit{sc},C} \rightarrow R_{\mathit{sc},T}) \enspace .
	\end{align*}
	
	Given a TGG $\mathit{GG}$, a \emph{repair rule for $\mathit{GG}$} is the forward rule $r_{\mathit{sc}}^F$ of a short-cut rule $r_{\mathit{sc}}$ where $r_{\mathit{sc}}$ has been constructed from a pair of rules of $\mathit{GG}$.
\end{definition}

For more details (in particular, the definition of morphisms between partial triple graphs), we refer the interested reader to the literature~\cite{FKST19,KFST19}.  
In this paper, we are more interested in conveying the intuition behind these rules by presenting examples.
We next recall the most important property of this operationalization, namely that, as in the monotonic case, an application of a short-cut rule corresponds to the application of its source rule, followed by an application of the forward rule if consistently matched.

\begin{theorem}[{\cite[Theorem~7]{FKST19} and \cite[Theorem~23]{KFST19}}]\label{thm:equivalence}
	Given a short-cut rule $r_{\mathit{sc}}$, there is a transformation 
	\begin{equation*}
		(G_S \leftarrow G_C \rightarrow G_T) \Rightarrow_{r_{\mathit{sc}},m_{\mathit{sc}}} (H_S \leftarrow H_C \rightarrow H_T)
	\end{equation*}
	via this short-cut rule if and only if there is a transformation
	\begin{align*}
		(G_S \leftarrow G_C \rightarrow G_T) 	& \Rightarrow_{r_{\mathit{sc}}^S,m_{\mathit{sc}}^S} (H_S \dashleftarrow G_C \rightarrow G_T) \\ 
																					& \Rightarrow_{r_{\mathit{sc}}^F,m_{\mathit{sc}}^F} (H_S \leftarrow H_C \rightarrow H_T)
	\end{align*}
applying source rule $r_{\mathit{sc}}^S$ with match $m_{\mathit{sc}}^S = (m_{\mathit{sc},S},\emptyset,\emptyset)$ and forward rule $r_{\mathit{sc}}^F$ at match $m_{\mathit{sc}}^F  = (n_{\mathit{sc},S},m_{\mathit{sc},C},m_{\mathit{sc},T})$. 
\end{theorem}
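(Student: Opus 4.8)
The plan is to reduce the statement to the classical decomposition property of the monotonic operationalization recalled earlier---that applying a source rule followed by its forward rule coincides with applying the original rule---now lifted to the setting of (possibly deleting) rules on \emph{partial} triple graphs. The definitions of $r_{\mathit{sc}}^S$ and $r_{\mathit{sc}}^F$ are tailored precisely so that the source rule carries out all of $r_{\mathit{sc}}$'s deletions and creations on the source component (turning $G_S$ into $H_S$) while leaving correspondence and target untouched, whereas the forward rule freezes the already-rewritten source $R_{\mathit{sc},S}$ as context and performs exactly the remaining actions on the correspondence and target components. First I would record that a rule on triple graphs acts componentwise, so that the gluing and dangling conditions governing applicability of $r_{\mathit{sc}}$ at $m_{\mathit{sc}}$ split into a source part and a correspondence/target part; this componentwise splitting is what makes the two operationalized rules applicable in sequence.

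For the \enquote{only if} direction, given the transformation $G \Rightarrow_{r_{\mathit{sc}},m_{\mathit{sc}}} H$, I would set $m_{\mathit{sc}}^S \coloneqq (m_{\mathit{sc},S}, \emptyset, \emptyset)$ and observe that the source-side deletions prescribed by $r_{\mathit{sc}}$ already yield a well-defined graph (because the full transformation does), so $r_{\mathit{sc}}^S$ is applicable and produces the partial triple graph $(H_S \dashleftarrow G_C \rightarrow G_T)$; here one checks that restricting $\sigma_G$ to the surviving source elements is exactly the induced partial correspondence morphism. The comatch of the source step supplies $n_{\mathit{sc},S}\colon R_{\mathit{sc},S} \to H_S$, and combining it with the correspondence and target parts of $m_{\mathit{sc}}$ gives the forward match $m_{\mathit{sc}}^F = (n_{\mathit{sc},S}, m_{\mathit{sc},C}, m_{\mathit{sc},T})$. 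Applying $r_{\mathit{sc}}^F$ at this match then performs precisely the correspondence/target deletions and creations still missing, so the result is $H$ up to the canonical identifications. The \enquote{if} direction is the reverse assembly: one reconstructs a single match for $r_{\mathit{sc}}$ by gluing the source part of $m_{\mathit{sc}}^S$ with the correspondence/target parts of $m_{\mathit{sc}}^F$, and the formal backbone is the pushout-composition and -decomposition lemmas of the double-pushout approach, which guarantee that the two consecutive steps compose to the single short-cut transformation.

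The hard part will be the partiality. In the monotonic operationalization everything lives in the category of total triple graphs and the decomposition is a clean split of two pushouts; here the intermediate object $(H_S \dashleftarrow G_C \rightarrow G_T)$ is genuinely partial, and the forward rule's left-hand side $L_{\mathit{sc}}^F$ carries a partial source-correspondence morphism. One therefore has to work in an appropriate category of partial triple graphs and verify that it retains enough of the adhesive structure for the pushout-based machinery to remain valid. The delicate checks are (i) that every source deletion leaves the subsequent forward step well-defined, and (ii) that the partial correspondence morphism built into $L_{\mathit{sc}}^F$ matches exactly the dangling behaviour produced by the source step, so that the relevant squares are still pushouts in the partial setting. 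Managing these partial morphisms is precisely the technical content that the cited results~\cite{FKST19,KFST19} establish, and I would appeal to them to discharge the adhesivity and pushout obligations rather than redeveloping the category-theoretic foundations here.
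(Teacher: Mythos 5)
The paper does not actually prove this theorem: it is recalled verbatim from the cited works (\cite[Theorem~7]{FKST19} and \cite[Theorem~23]{KFST19}), so there is no in-paper argument to compare against step by step. Your sketch is a faithful reconstruction of the strategy used there --- componentwise splitting of the short-cut rule's action, the source rule producing the partial intermediate triple graph, the forward rule completing the correspondence/target part at the match assembled from the source comatch, and pushout composition/decomposition in a suitable category of partial triple graphs --- and you correctly locate the genuine technical content (adhesivity-style properties of partial triple graphs and the compatibility of $L_{\mathit{sc}}^F$'s partial correspondence morphism with the source step's deletions) in exactly the place where the cited papers do the work; since you defer that core to \cite{FKST19,KFST19}, your proposal sits at the same level of rigor as the paper's own treatment, which simply imports the result.
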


For practical applications, repair rules should also be equipped with filter NACs. 
Let the repair rule $r_{\mathit{sc}}^F$ be obtained from a short-cut rule $r_{\mathit{sc}}$ that has been computed from rule pair $(r_1, r_2)$, both coming from a given TGG. 
As the application of $r_{\mathit{sc}}^F$ replaces an application of $r_1^F$ by one of $r_2^F$, $r_{\mathit{sc}}^F$ should be equipped with the filter NAC of $r_2^F$. 
However, just copying that filter NAC would not preserve its semantics; a more refined procedure is needed.
The LHS of $r_2^F$ is a subgraph of the one of $r_{\mathit{sc}}^F$ by construction. 
There is a known procedure, called \emph{shift along a morphism}, that \enquote{moves} an application condition from a subgraph to the supergraph 
preserving its semantics~\cite[Lemma~3.11 and Construction~3.12]{EGHLO14}. 
We use this construction to compute the filter NACs of repair rules. 
By using this known construction, the filter NACs we construct for our repair rules have the following property:

\begin{lemma}[{\cite[Lemma~3.11 and Construction~3.12]{EGHLO14}.}]\label{lem:property-filter-NACs}
	Let $r_{\mathit{sc}}$ be a plain short-cut rule obtained from the pair of monotonic rules $(r_1,r_2)$ where the forward rule $r_2^F$ is equipped with a set $\mathit{NAC}_2^F$ of filter NACs. 
	Let $\mathit{NAC}_{\mathit{sc}}^F$ be the set of NACs computed by applying the shift construction to $\mathit{NAC}_2^F$ along the inclusion morphism $\iota: L_2^F \hookrightarrow L_{\mathit{sc}}^F$ of the LHS of $r_2^F$ into the LHS of $r_{\mathit{sc}}$ (which exists by construction). 
	
	Then, an injective match $m_{\mathit{sc}}^F$ for $r_{\mathit{sc}}^F$ (into any partial triple graph $G$) satisfies the set of NACs $\mathit{NAC}_{\mathit{sc}}^F$ if and only if the induced injective match $m_{\mathit{sc}}^F \circ \iota$ for $r_2^F$ satisfies $\mathit{NAC}_2^F$.
\end{lemma}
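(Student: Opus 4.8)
The plan is to recognize the statement as a direct instantiation of the general \emph{shift} lemma for application conditions~\cite[Lemma~3.11 and Construction~3.12]{EGHLO14}, so that the substance of the argument reduces to checking that the hypotheses of that lemma are met in our setting. Recall the defining property of the shift construction: for a morphism $b \colon P \to P'$ and an application condition $\mathrm{ac}$ over $P$, it produces an application condition $\mathrm{Shift}(b,\mathrm{ac})$ over $P'$ such that an injective morphism $n \colon P' \to G$ satisfies $\mathrm{Shift}(b,\mathrm{ac})$ if and only if $n \circ b$ satisfies $\mathrm{ac}$. Our claim is exactly this property, specialized to $P = L_2^F$, $P' = L_{\mathit{sc}}^F$, $b = \iota$, $\mathrm{ac} = \mathit{NAC}_2^F$, $\mathrm{Shift}(\iota,\mathit{NAC}_2^F) = \mathit{NAC}_{\mathit{sc}}^F$, and $n = m_{\mathit{sc}}^F$. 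Once the ingredients are in place, the biconditional follows by quoting the lemma.

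First I would spell out the \enquote{by construction} existence of the inclusion $\iota \colon L_2^F \hookrightarrow L_{\mathit{sc}}^F$. By the short-cut construction, the LHS of $r_{\mathit{sc}}$ arises by uniting $\Lcup$ with $R_1$ along $L_1$ and with $R_2$ along $L_2$; since $r_1, r_2$ are monotonic this yields $L_2 \subseteq L_{\mathit{sc}}$ and $R_2 \subseteq R_{\mathit{sc}}$ componentwise. Passing to forward rules, $L_2^F = (R_{2,S} \leftarrow L_{2,C} \rightarrow L_{2,T})$ and $L_{\mathit{sc}}^F = (R_{\mathit{sc},S} \dashleftarrow L_{\mathit{sc},C} \rightarrow L_{\mathit{sc},T})$, so the three componentwise inclusions $R_{2,S} \subseteq R_{\mathit{sc},S}$, $L_{2,C} \subseteq L_{\mathit{sc},C}$, and $L_{2,T} \subseteq L_{\mathit{sc},T}$ assemble into the required inclusion $\iota$ of (partial) triple graphs. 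This is the only place where the specific form of the operationalization enters.

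Next I would argue that the shift lemma is applicable here. The filter NACs $\mathit{NAC}_2^F$ are ordinary negative application conditions, and the construction manipulates them purely as such; their origin as \emph{filter} NACs plays no role. The only non-routine point is that matches target \emph{partial} triple graphs and that $\iota$ is a morphism of partial triple graphs, so I would appeal to the categorical framework for partial triple graphs of~\cite{FKST19,KFST19}, fixing injective morphisms as the distinguished match class throughout and noting that the forbidden structure of a filter NAC extends $L^F$ only on its graph components. This places the jointly-epimorphic factorizations underlying the shift in a category for which~\cite{EGHLO14} is available, so the hypotheses of~\cite[Lemma~3.11]{EGHLO14} hold.

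With $\iota$ in place and applicability secured, I would conclude by invoking the defining property verbatim: $m_{\mathit{sc}}^F$ satisfies $\mathit{NAC}_{\mathit{sc}}^F = \mathrm{Shift}(\iota,\mathit{NAC}_2^F)$ iff $m_{\mathit{sc}}^F \circ \iota$ satisfies $\mathit{NAC}_2^F$. If a self-contained argument were preferred, I would unfold both directions contrapositively: a violating injective morphism out of some forbidden graph $N'$ of $\mathit{NAC}_{\mathit{sc}}^F$ composes with the comparison morphism $N_2 \to N'$ supplied by the construction to violate the corresponding $N_2 \in \mathit{NAC}_2^F$, and conversely every violation of $\mathit{NAC}_2^F$ under $m_{\mathit{sc}}^F \circ \iota$ factors through exactly one of the jointly-epi forbidden graphs of $\mathit{NAC}_{\mathit{sc}}^F$. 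The hard part will therefore not be the logical core, which is immediate from the cited lemma, but the bookkeeping of confirming that partial triple graphs with injective morphisms form an admissible category for the shift; everything else is a direct specialization.
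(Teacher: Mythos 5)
Your proposal is correct and matches the paper's treatment: the paper gives no independent proof of this lemma but imports it directly as the defining property of the shift construction from \cite[Lemma~3.11 and Construction~3.12]{EGHLO14}, exactly as you do. Your additional bookkeeping --- verifying that the inclusion $\iota: L_2^F \hookrightarrow L_{\mathit{sc}}^F$ exists componentwise from the short-cut construction, and that partial triple graphs with injective matches form an admissible setting (for which the paper defers to~\cite{FKST19,KFST19}) --- is a faithful elaboration of what the paper leaves implicit.
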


\begin{example}
	The forward rules of the short-cut rules in Fig.~\ref{fig:scRules} are depicted in Fig.~\ref{fig:fwdSCRules}. 
	\secondRRule{} is derived to replace an application of \secondTGGForwardRule{} by one of \firstTGGForwardRule{}. 
	This forward rule is equipped with a filter NAC which ensures that the rule is used only to translate \packages{} at the top of a hierarchy. 
	Just copying this NAC to the \package{} \p{} in \secondRRule{} would not preserve this behaviour: 
	The rule would be applicable in situations where the \package{} to which \ssp{} is matched contains a \package{} to which \p{} is matched. 
	Shifting the NAC from \firstTGGForwardRule{} to \secondRRule{} instead, the forbidden edge between the two \packages{} is introduced in addition. 
	It ensures that \p{} can be matched to \packages{} at the top of a hierarchy, only. 

	\fourthRRule{} (see Figure~\ref{fig:fourth-repair-rule}) assumes two connected \packages{} and deletes a \folder{} between their corresponding \folders{} as well as the \doc{} contained in the deleted \folder{} and the correspondence node referencing it. 
	The LHS of this rule is a proper partial triple graph as there is a correspondence node which is not mapped to any element of the source part. 
	
	\begin{figure}%
		\includegraphics[width=\columnwidth]{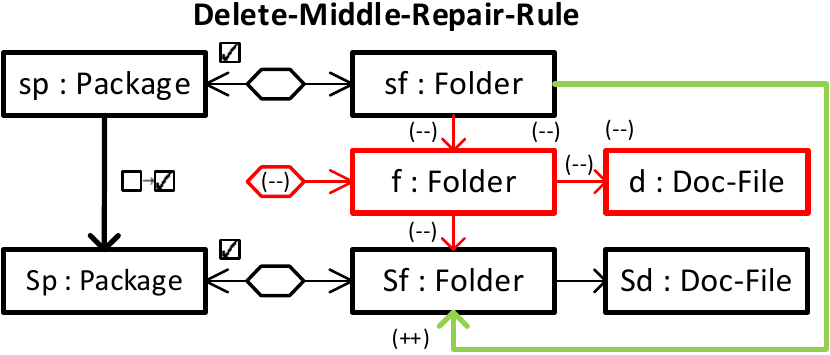}%
		\caption{Repair rule derived from \fourthSCRule{}}%
		\label{fig:fourth-repair-rule}%
	\end{figure}
\end{example}

\subsection{Conditions for valid repair rule applications}

Now, we transfer the results obtained so far to the case of repair rules. 
To do so, we first define \emph{valid matches} for repair rules (in a restricted kind of transformation sequences). 

\begin{definition}[Valid match for repair rule]\label{def:valid-repair-match}
	Let a TGG $\mathit{GG}$ and a consistently-marking transformation sequence 
	\begin{equation}\label{eq:orig-trafo}
		G_0 \Rightarrow_{r_1^{\mathit{FN}},m_1^F} G_1 \Rightarrow_{r_2^{\mathit{FN}},m_2^F} \dots \Rightarrow_{r_t^{\mathit{FN}},m_t^F} G_t
	\end{equation}
	via forward rules $r_i^{\mathit{FN}},\ 1 \leq i \leq t$, (possibly with filter NACs) of $\mathit{GG}$ be given. 
	Let 
	\begin{equation*}
		G_i = (G_{0,S} \leftarrow G_{i,C} \rightarrow G_{i,T}) \enspace .
	\end{equation*}
	Let there be some source edit step
	\begin{equation*}
		G_t \Rightarrow_{r_{\mathit{sc}}^S,m_{\mathit{sc}}^S} G^\prime 
	\end{equation*}
	where $G^\prime = (H_S \dashleftarrow G_{t,C} \rightarrow G_{t,T})$, $r_{\mathit{sc}}$ is a source rule of a short-cut rule derived from a rule pair $(r_j,r)$ where $1 \leq j \leq t$ and $r$ stems from $\mathit{GG}$, and $m_{\mathit{sc}}^S|_{R_{j,S}} = n_{j,S}$, i.e., when restricted to the source part of the RHS $R_j$ of $r_j$ match $m_{\mathit{sc}}^S$ coincides with the source part of the comatch $n_j$. 
	Moreover, the application of this source edit shall not introduce a violation of any of the filter NACs of $r_1^{\mathit{FN}}, \dots, r_{j-1}^{\mathit{FN}}$.
	
	Then, a match $m_{\mathit{sc}}^F$ for the corresponding forward rule $r_{\mathit{sc}}^F$ in $G^\prime$ is \emph{valid} if the following properties hold.
	
	\begin{enumerate}
		\item \emph{Reversing match:} Given comatch $(n_{\mathit{sc},S}^S,\emptyset,\emptyset)$ of the application of the source rule $r_{\mathit{sc}}^S$, its match is 
		\begin{equation*}
			m_{\mathit{sc}}^F = (n_{\mathit{sc},S}^S,m_{\mathit{sc,C}}^F,m_{\mathit{sc,T}}^F) 
		\end{equation*}
		and also $m_{\mathit{sc,C}}^F$ and $m_{\mathit{sc,T}}^F$ coincide with $n_{j,C}$ and $n_{j,T}$ when restricted to $R_{j,C}$ and $R_{j,T}$, respectively. 
		\item \emph{Sequential independence:}
		\begin{enumerate}
			\item \emph{Non-disabling match:} The application of $r_{\mathit{sc}}^F$ does not delete elements used in the applications of $r_{j+1}^{\mathit{FN}}, \dots, r_{t}^{\mathit{FN}}$ nor does it create elements forbidden by one of the filter NACs of those forward rules. 
			\item \emph{Context-preserving match:} The presumed elements of the repair rule $r_{\mathit{sc}}^F$ (which accord to the presumed elements of the short-cut rule $r_{\mathit{sc}}$) are matched to elements of $H_S$ which are marked as translated in $G_{j-1,S}$ and in $G_{t,C}$ and $G_{t,T}$ to elements which are already created in $G_{j-1,C}$ and $G_{j-1,T}$. 
This means, elements stemming from the LHS $L$ of $r$ which have not been identified with elements from $L_j$ in the short-cut rule $r_{\mathit{sc}}$, are matched to elements already translated/existing in $G_{j-1}$. 
		\end{enumerate}
		\item \emph{Creation-preserving match:} 
		All source elements that are newly created by short-cut rule $r_{\mathit{sc}}$, i.e., the source elements of $R_S \setminus L_S$ that have not been merged with an element of $R_{j,S} \setminus L_{j,S}$ during the short-cut rule construction, are matched to elements which are yet untranslated in $G_{t,S}$. 
	\end{enumerate}
\end{definition}

Note that together, conditions 2. (a) and (b) above constitute sequential independence between the applications of $r_{j+1}^{\mathit{FN}}, \dots, r_{t}^{\mathit{FN}}$ and the one of $r_{\mathit{sc}}^{\mathit{F}}$. 
Moreover, the additional requirement on the match to be creation preserving (compared to Theorem~\ref{thm:valid-applications} for short-cut rules) originates from the fact that forward rules do not create but mark source elements.

The following corollary uses Theorem~\ref{thm:equivalence} to transfer the statement of Theorem~\ref{thm:valid-applications} to repair  rules. 
\begin{corollary}\label{cor:valid-repair-rule-applications}
	Let a TGG $\mathit{GG}$ and a consistently marking transformation sequence as in~(\ref{eq:orig-trafo}), followed by an edit step exactly as in Definition~\ref{def:valid-repair-match} above be given.  
	Then, applying $r_{\mathit{sc}}^F$ at a \emph{valid match} $m_{\mathit{sc}}^F$ in $G^{\prime}$ induces a \emph{consistently marking transformation sequence} 
	\begin{equation}\label{eq:induced-trafo}
		\begin{split}
			G_0^{\prime} \Rightarrow_{r_1^{\mathit{FN}},m_1^F} G_1^{\prime} \Rightarrow_{r_2^{\mathit{FN}},m_2^F} \dots	& \Rightarrow_{r_{j-1}^{\mathit{FN}},m_{j-1}^F} G_{j-1}^{\prime} \\
																																																& \Rightarrow_{r^{\mathit{FN}},m^F} X
		\end{split}																																						
	\end{equation}
	with $G_i^{\prime} = (H_S \dashleftarrow G_{i,C} \rightarrow G_{i,T})$ for $0 \leq i \leq j-1$. 
\end{corollary}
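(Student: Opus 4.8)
The plan is to reduce the statement to the two results already established for short-cut rules, namely Theorem~\ref{thm:equivalence} (a short-cut rule application factors into its source and its forward/repair rule) and Theorem~\ref{thm:valid-applications} (applying a short-cut rule to a generating sequence replaces one rule application by another). The whole argument is a matter of matching up the data of a \emph{valid match} (Definition~\ref{def:valid-repair-match}) with the hypotheses of these two theorems and then projecting the resulting TGG-level statement back down to forward rules.

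First I would glue the given source edit $G_t \Rightarrow_{r_{\mathit{sc}}^S,m_{\mathit{sc}}^S} G'$ and the repair-rule application $G' \Rightarrow_{r_{\mathit{sc}}^F,m_{\mathit{sc}}^F} X$ into a single short-cut rule application. The \emph{reversing match} and \emph{creation-preserving} parts of validity guarantee that $m_{\mathit{sc}}^S$ and $m_{\mathit{sc}}^F$ share the source comatch $n_{\mathit{sc},S}^S$ and agree on the part stemming from $R_j$, so that they assemble into one consistent match $m_{\mathit{sc}}$ for $r_{\mathit{sc}}$; Theorem~\ref{thm:equivalence} then yields a transformation $G_t \Rightarrow_{r_{\mathit{sc}},m_{\mathit{sc}}} X$. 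In parallel I would lift the given consistently marking forward sequence~(\ref{eq:orig-trafo}) to a derivation by the original monotonic rules of $\mathit{GG}$, using the fundamental operationalization property that a source rule followed by its forward rule performs the same transformation as the original monotonic TGG rule; this produces a generating sequence $\emptyset \Rightarrow_{r_1} \dots \Rightarrow_{r_t}$, whose endpoint carries the translated part of $G_t$, to which Theorem~\ref{thm:valid-applications} applies.

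The core step is then to verify that the hypotheses of Theorem~\ref{thm:valid-applications} are precisely the remaining conditions bundled into Definition~\ref{def:valid-repair-match}: the reversing-match condition of the definition is the reversing-match hypothesis, while the \emph{non-disabling} and \emph{context-preserving} parts of validity are the two halves of sequential independence of the short-cut rule application with respect to $r_{j+1},\dots,r_t$. With these in hand, Theorem~\ref{thm:valid-applications} rewrites the generating sequence into a switch-equivalent one in which the $j$-th application of $r_j$ is replaced by an application of $r$ at the induced match, while $r_1,\dots,r_{j-1}$ keep their matches. Re-operationalizing this modified derivation back into forward rules yields a consistently marking forward sequence over the edited source $H_S$; its length-$j$ prefix is exactly~(\ref{eq:induced-trafo}), and since creation- and context-preservation are inherited by prefixes, this prefix is consistently marking, as claimed. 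For the relationship of such sequences to membership in $\mathcal{L}(\mathit{GG})$ one may additionally appeal to Lemma~\ref{lem:emccp-series}.

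The main obstacle I expect is the bookkeeping around markings, partiality and filter NACs rather than any single deep step. Concretely, I would have to check that the first $j-1$ forward steps, which mark the same source elements but now inside the partial triple graph over $H_S$, remain applicable and NAC-satisfying; this is where the standing assumption of Definition~\ref{def:valid-repair-match}, that the edit introduces no violation of the filter NACs of $r_1^{\mathit{FN}},\dots,r_{j-1}^{\mathit{FN}}$, is used, together with Fact~4 of~\cite{Hermann2010} to pass between forward rules with and without filter NACs. I would also invoke Lemma~\ref{lem:property-filter-NACs} to translate the shifted filter NAC carried by $r_{\mathit{sc}}^F$ into the filter NAC of $r^{\mathit{FN}}$, so that the final $r^{\mathit{FN}}$-step of~(\ref{eq:induced-trafo}) is legitimate, and the creation-preserving condition to exclude that $r$ re-marks an already translated source element. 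None of these are conceptually hard, but all of them must be discharged to make the reduction to Theorems~\ref{thm:equivalence} and~\ref{thm:valid-applications} airtight.
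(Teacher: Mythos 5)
Your proposal is correct and follows essentially the same route as the paper's proof: recompose the source edit and the repair-rule application into a single short-cut rule application via Theorem~\ref{thm:equivalence}, lift the consistently marking forward sequence to a derivation by the original TGG rules so that Theorem~\ref{thm:valid-applications} replaces the $j$-th application of $r_j$ by one of $r$, then split $r$ back into source and forward rule and use sequential independence (Local Church--Rosser) together with the no-new-filter-NAC-violation assumption to push $r^S$ to the front, with Lemma~\ref{lem:property-filter-NACs} handling the shifted filter NAC and the context- and creation-preserving conditions yielding that the resulting sequence is consistently marking. The bookkeeping obligations you flag at the end are exactly the ones the paper discharges, so nothing is missing.
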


\begin{proof}
	For a valid match $m_{\mathit{sc}}^F$ of $r_{\mathit{sc}}^F$, by its \emph{reversing property}, the conditions of Theorem~\ref{thm:equivalence} are met. 
	Hence, we obtain a sequence
	\begin{equation*}
		G_0 \Rightarrow_{r_1^{\mathit{FN}},m_1^F} \dots \Rightarrow_{r_t^{\mathit{FN}},m_t^F} G_t \Rightarrow_{r_{\mathit{sc}},m_{\mathit{sc}}} X^\prime \enspace .
	\end{equation*}
	As a consistently marking sequence of forward rules corresponds to a sequence of TGG rule applications, and the preconditions of Theorem~\ref{thm:valid-applications} are met (\enquote{exists} is exchanged by \enquote{marked} on the source component), this sequence induces a sequence 
	\begin{equation*}
		G_0 \Rightarrow_{r_1^{\mathit{FN}},m_1^F} \dots \Rightarrow_{r_{j-1}^{\mathit{FN}},m_{j-1}^F} G_{j-1} \Rightarrow_{r,m} X 
	\end{equation*}
	(where we do not care for the further applications of forward rules).
	
	Now, we can split $r$ into its source and forward rule. 
	Its source rule is sequentially independent from the other forward rule applications: 
	$r_{\mathit{sc}}^S$ does not delete anything, the rules $r_1^{\mathit{FN}}, \dots, r_{j-1}^{\mathit{FN}}$ match, and does not create a filter NAC violation by assumption and, as a consequence, $r^S$ does not.
	Hence, by the local Church-Rosser Theorem, we might equivalently switch the application of $r^S$ to the beginning of the sequence and obtain sequence~(\ref{eq:induced-trafo}), as desired. 
	Moreover, by Lemma~\ref{lem:property-filter-NACs}, the filter NAC of $r^F$ holds whenever $m_{\mathit{sc}}^F$ satisfies the filter NAC of $r_{\mathit{sc}}^F$. 
	
	Finally, as the start of the transformation sequence (up to index $j-1$) is context preserving, and by assumption 2. (b), the match $m_{\mathit{sc}}^F$ matches presumed elements of $r_{\mathit{sc}}^F$ to already translated ones (in $H_S$) or already created ones (in $G_{j-1,C}$ and $G_{j-1,T}$), this sequence is context preserving. 
	Analogously, assumption 3. ensures that it is creation-preserving: 
	No element which is already marked as translated in $G_{t,S}$ is marked a second time. 
	Hence, the whole sequence is consistently marking. \qed
	\end{proof}

\section{Synchronization Algorithm}
\label{sec:syncProcess}
In this section, we discuss our synchronization algorithm that is based on the correct application of derived \repairRules{}. 
We first present the algorithm and consider its formal properties subsequently. 
The section closes with a short example for a synchronization based on our algorithm and a discussion of extensions and support for advanced TGG features. 

\subsection{The Basic Setup}\label{sec:basic-setup-algorithm}
We assume a TGG $\mathit{GG}$ with \emph{plain, monotonic rules} to be given. 
Its language defines consistency. 
This means that a triple graph $G = (G_S \leftarrow G_C \rightarrow G_T)$ is consistent if and only if $G \in \mathcal{L}(\mathit{GG})$. 

\paragraph{The problem.} 
A consistent triple graph $G = (G_S \leftarrow G_C \rightarrow G_T) \in \mathcal{L}(\mathit{GG})$ is given; by Lemma~1 there exists a corresponding consistently and entirely marking sequence $t$ of forward rule applications. 
After editing source graph $G_S$ we get $G\rq{} = (H_S \dashleftarrow G_C \rightarrow G_T)$. 
Generally, the result $G\rq{}$ is a partial triple graph and does not belong to $\mathcal{L}(\mathit{GG})$.  
We assume that all the edits are performed by applying source rules. 
They may be derived from the original TGG rules or from short-cut rules.  
Our goal is to provide a \emph{model synchronization algorithm} that, given $G = (G_S \leftarrow G_C \rightarrow G_T) \in \mathcal{L}(\mathit{GG})$ and $G\rq{} = (H_S \dashleftarrow G_C \rightarrow G_T)$ as input, computes a triple graph $H = (H_S \leftarrow H_C \rightarrow H_T) \in \mathcal{L}(\mathit{GG})$. 
As a side condition, we want to minimize the amount of elements of $G_C$ and $G_T$ that are deleted and recreated during that synchronization. 

\paragraph{Ingredients of our algorithm.}
We provide a rule-based model synchronization algorithm leveraging an incremental pattern matcher. 
During that algorithm, rules are applied to compute a triple graph $(H_S \leftarrow H_C \rightarrow H_T) \in \mathcal{L}(\mathit{GG})$ from the (partial) triple graph $(H_S \dashleftarrow G_C \rightarrow G_T)$. 
We apply two different kinds of rules, namely
\begin{enumerate}
	\item forward rules derived from the rules of the TGG $\mathit{GG}$ and 
	\item repair rules, i.e., operationalized short-cut rules.
\end{enumerate}
Forward rules serve to propagate the addition of elements. 
The use of these rules for model synchronization is standard.  
However, the use of additional repair rules and the way in which they are employed are conceptually novel.\footnote{Note that consistency is still defined by the (plain, monotonic) rules of the given TGG; the general repair rules are derived only to improve the synchronization process.}  
The repair rules allow to directly propagate more complex user edits. 

During the synchronization process, the rules are applied reacting to notifications by an incremental pattern matcher. 
We require this pattern matcher to provide the following information: 
\begin{enumerate}
	\item The original triple graph $G = (G_S \leftarrow G_C \rightarrow G_T)$ is \emph{covered with consistency patterns}.
When considering the induced matches for forward rules, every element of $G_S$ is marked exactly once.  The dependency relation between elements required by these matches is acyclic.
	This means that the induced transformation sequence of forward rules is consistently and entirely marking. 
	Such a sequence always exists since $G \in \mathcal{L}(\mathit{GG})$; see Lemma~\ref{lem:emccp-series}. 
	\item \emph{Broken consistency matches} are reported. 
	A match for a consistency pattern in $G$ is broken in $G\rq{}$ if one of the elements it matches or creates has been deleted or if an element has been created that violates one of the filter NACs of that consistency pattern. 
	\item The incremental pattern matcher notifies about newly occurring matches for forward rules. 
	It does so in a \emph{correct} way, i.e., it only notifies about matches that lead to consistently marking transformations. 
	\item In addition, the incremental pattern matcher informs a \emph{precedence graph}. 
	This precedence graph contains information about the mutual dependencies of the elements in the partial triple graph.
	Here, an element is dependent on another one if the forward rule application marking the former matches the latter element as required. 
	We consider the transitive closure of this relation. 
\end{enumerate}

\subsection{Synchronization Process}
\label{subsec:sync-process}
Our synchronization process is depicted in Algorithm~\ref{alg:sync}. 
It applies rules to translate elements and repair rule applications. 
In that, it applies a different strategy than suggested in~\cite{LAFVS17,Leblebici18}. 
There, invalid rule applications are revoked as long as there exist any. 
Subsequently, forward rules are applied as long as possible. 
By trying to apply a suitable repair rule instead of revoking an invalid rule application, we are able to avoid deletion and recreation of elements. 
Our synchronization algorithm is defined as follows. 
Note that we present an algorithm for synchronizing in forward direction (from source to target) while synchronizing backwards is performed analogously.

The function \textit{synchronize} is called on the current partial triple graph that is to be synchronized.
In line~\ref{alg:sync:update}, \textit{updateMatches} is called on this partial triple graph. 
It returns the set of consistency matches currently broken, a set of consistency matches being still intact, and a set of forward TGG rule matches.

By calling the function \emph{isFinished} (line~\ref{alg:sync:isFinished-call}), termination criteria for the synchronization algorithm are checked. 
If the set of broken consistency matches and the set of forward TGG rule matches are both empty and all elements of the source graph are marked as translated, the synchronization algorithm terminates (line~\ref{alg:sync:terminate}). 
Yet, if both sets are empty but there are still untranslated elements in the source graph, an exception is thrown in line~\ref{alg:sync:exception}, signaling that the (partial) triple graph is in an inconsistent state.

Subsequently, function \emph{translate} is used (line~\ref{alg:sync:translate-call}) to propagate the creation of elements: 
If the set of forward TGG rule matches is non-empty (line~\ref{alg:sync:apply}), we choose one of these matches, apply the corresponding rule, and continue the synchronization process (line~\ref{alg:sync:sync1}).
This step is done prior to any repair. 
The purpose is to create the context which may be needed to make repair rules applicable.
An example for such a context creation is the insertion of a new root \package{} which has to be translated into a root \folder{} before applying \firstRRule{} thereafter (see Fig.~\ref{fig:tggFwdRules}). 

\begin{algorithm*}
	\begin{algorithmic}[1]
		\Function{synchronize}{tripleGraph}
		\State $\mathrm{(brokenCMatches, intactCMatches,fwdMatches)  \leftarrow updateMatches(tripleGraph)}$ \label{alg:sync:update}
		\\
	 	\If{isFinished(tripleGraph, fwdMatches, brokenCMatches))} \label{alg:sync:isFinished-call}
			\State \Return
		\EndIf
		\\
		\If{translate(tripleGraph, fwdMatches)} \label{alg:sync:translate-call}
			\State \Return
		\EndIf
		\\
		\State $\mathrm{(cMatch, success)  \leftarrow repair(tripleGraph, brokenCMatches)}$ \label{alg:sync:repair-call}
		\If{!success}
			\State \textbf{throw} InconsistentStateException \label{alg:sync:exception2}
		\EndIf
		\State  \Return
		\EndFunction
		\\
		\Function{isFinished}{tripleGraph, fwdMatches, brokenCMatches} \label{alg:sync:isFinished}
			\If{isEmpty(brokenCMatches) \&\& isEmpty(fwdMatches)} \label{alg:sync:termination}
			\If{allElementsTranslated(tripleGraph.source)}
			\State \Return true \label{alg:sync:terminate}
			\Else{}
			\State \textbf{throw} InconsistentStateException \label{alg:sync:exception}
			\EndIf
			\EndIf
			\State \Return false
		\EndFunction
		\\
		\Function{translate}{tripleGraph, fwdMatches} \label{alg:sync:translate}
			\If{!isEmpty(fwdMatches)} \label{alg:sync:apply}
			\State fwdMatch = chooseMatch(fwdMatches) 
			\State $\mathrm{tripleGraph \leftarrow applyRule(tripleGraph, fwdMatch, getFWDRule(fwdMatch))}$ 
			\State synchronize(tripleGraph) \label{alg:sync:sync1}
			\State \Return true
			\EndIf
			\State \Return false
		\EndFunction
		\\
		\Function{repair}{tripleGraph, brokenCMatches} 
			\State $\mathrm{cMatch \leftarrow chooseMatch(brokenCMatches)}$ \label{alg:sync:chooseMatch}
			\State $\mathrm{scRules \leftarrow getSuitableSCRules(cMatch)}$
			\State $\mathrm{scMatches \leftarrow findSCMatches(scRules, cMatch)}$
			\\
			\While{!isEmpty(scMatches)} \label{alg:sync:while}
			\State $\mathrm{scMatch \leftarrow chooseMatch(scMatches)}$
			\If{isValidMatch(scMatch)} \label{alg:sync:isValid}
			\State $\mathrm{tripleGraph \leftarrow applyRule(tripleGraph, cMatch, getSCRule(scMatch))}$ \label{alg:sync:sync4}
			\State synchronize(tripleGraph)  \label{alg:sync:sync2}
			\State \Return (cMatch, true)
			\EndIf
			\EndWhile
			\State \Return (cMatch, false)
		\EndFunction
	\end{algorithmic}
	\caption{eMoflon -- Synchronization Process} \label{alg:sync}
\end{algorithm*}

If the above cases do not apply, there must be at least one broken consistency match and the corresponding rule application has to be repaired (line~\ref{alg:sync:repair-call}):
Hence, we choose one broken consistency match (line~\ref{alg:sync:chooseMatch}) for which a set of suitable repair rules is determined. 
A broken consistency match includes information about the rule it corresponds to (e.g., the name of the rule). 
Furthermore, it includes which elements are missing or which filter NACs are violated such that the corresponding application does not exist any more. 
We calculate the set of matches of \repairRules{} (i.e., forward \shortcutRules{}) that stem from \shortcutRules{} revoking exactly the rule that corresponds to the broken consistency match. 
In particular, by knowing which elements of a broken rule application still exist in the current source graph, we can stick to those repair rules that preserve exactly the still existing elements.  

While the calculated set of unprocessed \repairRule{} matches is not empty (line~\ref{alg:sync:while}), we choose one of these matches and check whether it is \emph{valid}. 
By constructing the partial match of a \repairRule{}, we only need to ensure that none of its \emph{presumed elements} is matched in such a way that a cyclic dependency is introduced. 
This means that they must not be matched to elements that are dependent of elements to which the \emph{recovered elements} are matched. 
If a match is valid, we apply the corresponding repair rule and continue the synchronization process (line~\ref{alg:sync:sync2}).
If no such rule or valid match is available, an exception is thrown (line~\ref{alg:sync:exception2}). 

\subsection{Formal properties of the synchronization process.}
We discuss the termination, correctness, and completeness of our synchronization algorithm. 

Our algorithm terminates as long as every forward rule translates at least one element (which is a quite common condition; compare~\cite[Lemma~6.7]{HEOCDXGE15} or \cite[Theorem~3]{Leblebici18}).
\begin{theorem}
	Let a TGG $\mathit{GG}$ with plain, monotonic rules be given. 
	If every derived forward rule of $\mathit{GG}$ has at least one source marking element, our algorithm terminates for any finite input $G\rq{} = (H_S\rq{} \dashleftarrow G_C \rightarrow G_T)$. 
\end{theorem}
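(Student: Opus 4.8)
The plan is to exhibit a termination measure that strictly decreases with every rule application and is bounded, so that only finitely many applications can occur. The algorithm is essentially tail-recursive: each non-terminating call of \textit{synchronize} either returns at once (via \textit{isFinished}) or performs exactly one rule application — a forward rule inside \textit{translate} or a repair rule inside \textit{repair} — and then triggers one further recursive call. It therefore suffices to bound the total number of rule applications. The natural candidate for the measure is $M \coloneqq$ the number of source elements of $H_S$ that are not yet marked as translated. Crucially, in forward synchronization neither forward rules nor repair rules modify the source graph, so $H_S$ stays fixed and finite and $0 \le M \le |H_S| < \infty$ holds throughout. The argument then reduces to two claims: every applied rule strictly decreases $M$, and no applied rule increases it.

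First I would handle the forward rules. By the hypothesis of the theorem, every derived forward rule of $\mathit{GG}$ has at least one source marking element, and the incremental pattern matcher only reports correct matches, i.e. matches leading to consistently marking transformations. Hence applying such a rule marks at least one previously untranslated source element while leaving the marking status of all others unchanged, so each application inside \textit{translate} decreases $M$ by at least one.

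Second, and this is the heart of the argument, I would treat the repair rules. A repair rule is the forward rule $r_{\mathit{sc}}^F$ of a short-cut rule derived from a pair $(r_j,r)$, applied only in response to a broken consistency match for $r_j$ and only at a \emph{valid} match (as checked by \textit{isValidMatch}). The triggering broken match has unmarked exactly those source marking elements of $r_j$ that still exist. By Corollary~\ref{cor:valid-repair-rule-applications}, a valid application induces a consistently marking sequence ending in an application of $r^{\mathit{FN}}$, which marks precisely the source marking elements of $r$; since $r$ is a rule of $\mathit{GG}$, the hypothesis guarantees that $r^{\mathit{FN}}$ has at least one such element. By the reversing and creation-preserving parts of the valid-match condition, these elements are matched to source elements that are currently untranslated (the recovered ones were unmarked by the break, the genuinely new ones are untranslated by construction). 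Thus every repair also decreases $M$ by at least one.

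The remaining obligation — and the step I expect to be the main obstacle — is to rule out that any application \emph{increases} $M$, that is, that a rule inadvertently unmarks a source element by breaking a different, still intact consistency match. For forward rules, which only add correspondence and target elements, I would again invoke the correctness of the pattern matcher. For repair rules, which do delete correspondence and target elements, I would lean on the non-disabling part of the valid-match condition: a valid repair deletes no element used by the remaining forward rule applications, so it cannot destroy their matches, and the elements it removes are exactly the superfluous ones created by the already-broken application $r_j$. Consequently $M$ never increases. Combining the three points, each of the at most $|H_S|$ possible rule applications strictly decreases $M$; moreover the internal \textbf{while} loop of \textit{repair} ranges over a finite set of candidate matches. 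Therefore the recursion bottoms out after finitely many steps and the algorithm terminates.
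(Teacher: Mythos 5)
Your overall strategy (a bounded, strictly decreasing measure) has the right shape, and your treatment of the forward-rule phase coincides with the paper's: each forward rule application marks at least one previously untranslated source element and the source graph is never modified. The gap is in the claim that \emph{every repair rule application strictly decreases} $M$, the number of untranslated source elements. That claim rests on your parenthetical assertion that the recovered elements ``were unmarked by the break.'' The paper never stipulates that a broken consistency match unmarks the source elements it had marked, and the creation-preserving condition (condition 3 of Definition~\ref{def:valid-repair-match}) only requires the \emph{newly created} source elements of the short-cut rule --- those of $R_S \setminus L_S$ that were \emph{not} merged with elements of $R_{j,S}\setminus L_{j,S}$ during the construction --- to be matched to untranslated elements. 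This set can be empty: in \secondRRule{} and \firstRRule{} from the running example, every source-marking element of the replacing rule is identified with one of the replaced rule, so the repair marks no previously untranslated source element at all; it only rewires the correspondence and target side. For such repairs $M$ does not decrease and your measure stalls. If one instead adopts your reading that a break unmarks elements, you would additionally have to show that no new breaks occur during the run (otherwise $M$ could increase), which is exactly the part of the argument you have relegated to a side remark.

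The paper closes this gap with what is effectively a lexicographic measure whose decisive component is the \emph{number of broken consistency matches}, not the number of untranslated elements. Each call to \textit{repair} fixes exactly one broken match, and neither forward nor repair rule applications create new broken matches: forward rules neither create nor delete source elements, and valid repair matches are non-disabling by condition 2(a) of Definition~\ref{def:valid-repair-match} and create no source elements, hence no new filter-NAC violations. Between two repairs only finitely many forward rule applications fit, bounded by the number of untranslated elements --- which is where the hypothesis that every forward rule has a source-marking element is actually used. You should restructure your argument around the pair $(\text{\#broken consistency matches}, M)$ ordered lexicographically; your existing observations about non-disabling matches then do the real work of showing that the first component never increases and drops with every repair.
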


\begin{proof}
	The algorithm terminates -- by either throwing an exception or returning a result -- if at one point both, the set of broken consistency matches and the set of matches for forward rules are empty; compare the function \emph{isFinished} starting in line~\ref{alg:sync:isFinished}. 
	
	The algorithm is called recursively, always applying a forward rule if a match is available. 
	As every forward rule marks at least one element as translated and forward rules are only matched in such a way that source marking elements are matched to yet untranslated ones, the application of forward rules (lines~\ref{alg:sync:apply} et seq.), i.e., the recursive call of function \emph{translate}, halts after finitely many steps. 
	Moreover, an application of a forward rule never introduces a new broken consistency match: 
	As it neither creates nor deletes elements in the source graph, it cannot delete elements matched by a consistency pattern nor create elements forbidden by one. 
	This means that, as soon as the set of broken consistency matches is empty, the whole synchronization algorithm will terminate. 
	We show that at some point this set of broken consistency matches will be empty or an exception is thrown. 
	
	Whenever the algorithm is called with an empty set of matches for forward rules, broken consistency matches are considered by applying a repair rule, i.e., by calling the function \emph{repair}. 
	New matches for forward rules can result from this; as discussed above, newly appearing matches for forward rules are unproblematic. 
	However, an application of a repair rule does not introduce a new violation of any consistency match: 
	As it does not create source elements, it cannot introduce violations of filter NACs. 
	And by the condition on valid matches to be non-disabling (condition 2. (a) in Definition~\ref{def:valid-repair-match}), no elements needed by other consistency matches are deleted.
	Hence, by application of a repair rule, the number of invalid consistency matches is reduced by one and the algorithm terminates as soon as all broken consistency matches are repaired. 
	If there is a broken consistency match that cannot be repaired -- either because no suitable repair rule or no valid match is available -- an exception is thrown and the algorithm stops.  \qed
\end{proof}

\paragraph{Correctness.} 
Upon termination without exception, our algorithm is correct.
\begin{theorem}[Correctness of algorithm]\label{thm:correctness-algorithm}
	Let a TGG $\mathit{GG}$ with plain, monotonic rules, a triple graph $G = (G_S \leftarrow G_C \rightarrow G_T) \in \mathcal{L}(\mathit{GG})$, and a partial triple graph $G\rq{} = (G_S\rq{} \dashleftarrow G_C \rightarrow G_T)$ that arises by a user edit step on the source graph be given. 
	If our synchronization algorithm terminates without exception and yields $H = (H_S \leftarrow H_C \rightarrow H_T)$ as output, then $H_S = G_S\rq{}$ and $H \in \mathcal{L}(\mathit{GG})$.
\end{theorem}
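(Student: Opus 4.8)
The plan is to establish the two conclusions separately, obtaining consistency ($H \in \mathcal{L}(\mathit{GG})$) from Lemma~\ref{lem:emccp-series} by exhibiting a consistently and entirely marking forward sequence for the output. I would first settle $H_S = G_S'$. Neither kind of rule applied by Algorithm~\ref{alg:sync} alters the source graph: forward rules treat the whole source side as context and only \emph{mark} source elements while creating correspondence and target structure, and repair rules are forward operationalizations of short-cut rules, whose source component is, by the definition of $r_{\mathit{sc}}^F$, pure context as well (all source deletions were already carried out by the source rule producing $G'$). Hence the source graph is an invariant of every recursive call and equals $G_S'$ throughout, in particular at termination.

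For consistency I would argue by induction over the recursion of \emph{synchronize}, maintaining the invariant \textbf{(I)}: at every entry, the set of still-intact consistency matches induces a consistently marking transformation sequence of forward rules (with filter NACs) whose marked source elements are exactly the currently translated ones and whose created correspondence and target elements are exactly those present in the current triple graph and covered by an intact consistency match. The base case uses that $G \in \mathcal{L}(\mathit{GG})$, so by Lemma~\ref{lem:emccp-series} and the first pattern-matcher assumption $G$ is covered by consistency patterns inducing a consistently and entirely marking sequence; after the source edit the surviving intact matches still induce a consistently marking sequence, since their required elements are still marked and the precedence graph keeps the dependency relation acyclic. For the \emph{translate} step I would use that the incremental pattern matcher reports forward matches \emph{correctly}, i.e. each reported match extends the current consistently marking sequence by one further consistently marking forward-rule application (marking hitherto untranslated source elements whose required elements are already marked) and produces a new intact consistency match, so \textbf{(I)} is preserved. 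For the \emph{repair} step I would invoke Corollary~\ref{cor:valid-repair-rule-applications}: applying a repair rule at a \emph{valid} match in the sense of Definition~\ref{def:valid-repair-match} replaces the forward-rule application of the broken consistency match by an application of the replacing rule and again yields a consistently marking sequence; since repair rules create no source elements they cannot violate a filter NAC of any remaining match, and the non-disabling condition (2.(a) of Definition~\ref{def:valid-repair-match}) guarantees that no other intact consistency match is broken. Thus each repair step strictly decreases the number of broken consistency matches while preserving \textbf{(I)}.

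It then remains to read off correctness at termination. The algorithm returns without exception only through \emph{isFinished} returning true, which forces the sets of broken consistency matches and of forward matches to be empty and every source element of $H_S$ to be translated. Emptiness of the broken set means no correspondence or target element is orphaned, so by \textbf{(I)} the intact consistency matches cover all of $H_C$ and $H_T$; the fact that all source elements are translated makes the induced sequence entirely marking; and \textbf{(I)} guarantees it is consistently marking. By Lemma~\ref{lem:emccp-series} this yields $H \in \mathcal{L}(\mathit{GG})$, as required.

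I expect the main obstacle to be the rigorous maintenance of invariant \textbf{(I)} across \emph{repair} steps, and in particular the bookkeeping that glues the consistently marking prefix delivered by Corollary~\ref{cor:valid-repair-rule-applications} back together with the untouched suffix of later forward-rule applications $r_{j+1}^{\mathit{FN}}, \dots, r_t^{\mathit{FN}}$. This is where sequential independence and the Local Church--Rosser argument, together with the correct treatment of orphaned correspondence and target elements and of filter NACs under the shift construction (Lemma~\ref{lem:property-filter-NACs}), have to be combined; establishing that the intact matches are simultaneously context- and creation-preserving with an acyclic dependency order is the delicate point, whereas source-invariance and the reading of the termination condition are routine.
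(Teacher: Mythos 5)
Your proposal is correct and follows essentially the same route as the paper's proof: source-invariance of all applied rules gives $H_S = G_S'$, and consistency is obtained by showing that the collected consistency matches always induce a consistently marking forward sequence --- preserved under forward-rule applications by the correctness assumption on the pattern matcher and under repair-rule applications by Corollary~\ref{cor:valid-repair-rule-applications} --- which at termination is also entirely marking, so Lemma~\ref{lem:emccp-series} yields $H \in \mathcal{L}(\mathit{GG})$. You merely make the inductive invariant explicit where the paper argues more informally.
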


\begin{proof}
	We see immediately that $H_S = G_S\rq{}$ since none of the applied rules modifies the source graph. 
	If the synchronization process terminates without exception, all elements are translated, no matches for forward rules are found, and no consistency match is broken any more. 
	This means that the collected matches of the forward rules form an entirely marking transformation sequence. 
	By Lemma~\ref{lem:emccp-series}, we have to show that this sequence is also consistently marking. 
	Then, the matches of the forward rules that correspond to the matches of the consistency patterns that the incremental pattern matcher has collected encode a transformation sequence that allows to translate the triple graph $(H_S \leftarrow \emptyset \rightarrow \emptyset)$ to a triple graph $(H_S \leftarrow H_C \rightarrow H_T) \in \mathcal{L}(\mathit{GG})$.
	We assume that the incremental pattern matcher recognizes all broken consistency matches and reports correct matches for forward rules only. 
	This means, throughout the application of forward rules, the set of all valid consistency matches remains consistently marking. 
	We have to show that this is also the case for repair rule applications. 
	If it is, upon termination without exception, there is an entirely and consistently marking sequence of forward rules which corresponds to a triple graph from $\mathit{GG}$ by Lemma~\ref{lem:emccp-series}.
	
	Whenever we apply a repair rule we are (at least locally) in the situation of Corollary~\ref{cor:valid-repair-rule-applications}: 
	There is a (maybe empty) sequence of consistently marking forward rule applications and a suitable broken consistency pattern indicates, that a user edit step applying the source rule $r_{\mathit{sc}}^S$ of a short-cut rule $r_{\mathit{sc}}$ has taken place. 
	Applying the repair rule $r_{\mathit{sc}}^F$ at a valid match amounts to replacing the application of rule $r_j^F$, whose consistency pattern was broken, by rule $r^F$ in a consistently marking way. \qed 
\end{proof}
 
We only informally discuss \emph{completeness}. 
We understand completeness as follows: for every input $G\rq{} = (H_S \dashleftarrow G_C \rightarrow G_T)$ with $H_S \in \mathcal{L}_S(\mathit{GG})$, we obtain a result $H = (H_S \leftarrow H_C \rightarrow H_T) \in \mathcal{L}(\mathit{GG})$. 
In general, the above proposed algorithm is not complete. 
We randomly apply forward rules at available matches (without using backtracking) but the choice and order of such applications can affect the result if the final sequence of forward rule applications leads to a dead-end or translates the given source graph. 
However, the algorithm is complete whenever the set of forward rules is of such a form that the order of their application does not make a difference (somewhat more formally: they meet some kind of confluence) and the user edit is of the form discussed in Sect.~\ref{sec:basic-setup-algorithm}. 
Analogous restrictions on forward rules hold for other synchronization processes that have been formally examined for completeness~\cite{HEOCDXGE15,Leblebici18}. 
Adding filter NACs to the forward rules of a TGG is a technique that can result in such a set of confluent forward rules even if the original set of forward rules is not. 
Moreover, there are static methods to test TGGs for such a behaviour~\cite{ALST14,HEOCDXGE15}; they check for sufficient but not for necessary criteria. 
If it is known that the set of forward rules of a given TGG guarantees completeness and the edit is of a suitable kind, a thrown exception during our synchronization process implies that $H_S \notin \mathcal{L}_S(\mathit{GG})$. 

\subsection{A synchronization example}
\begin{figure*}%
	\includegraphics[width=\textwidth]{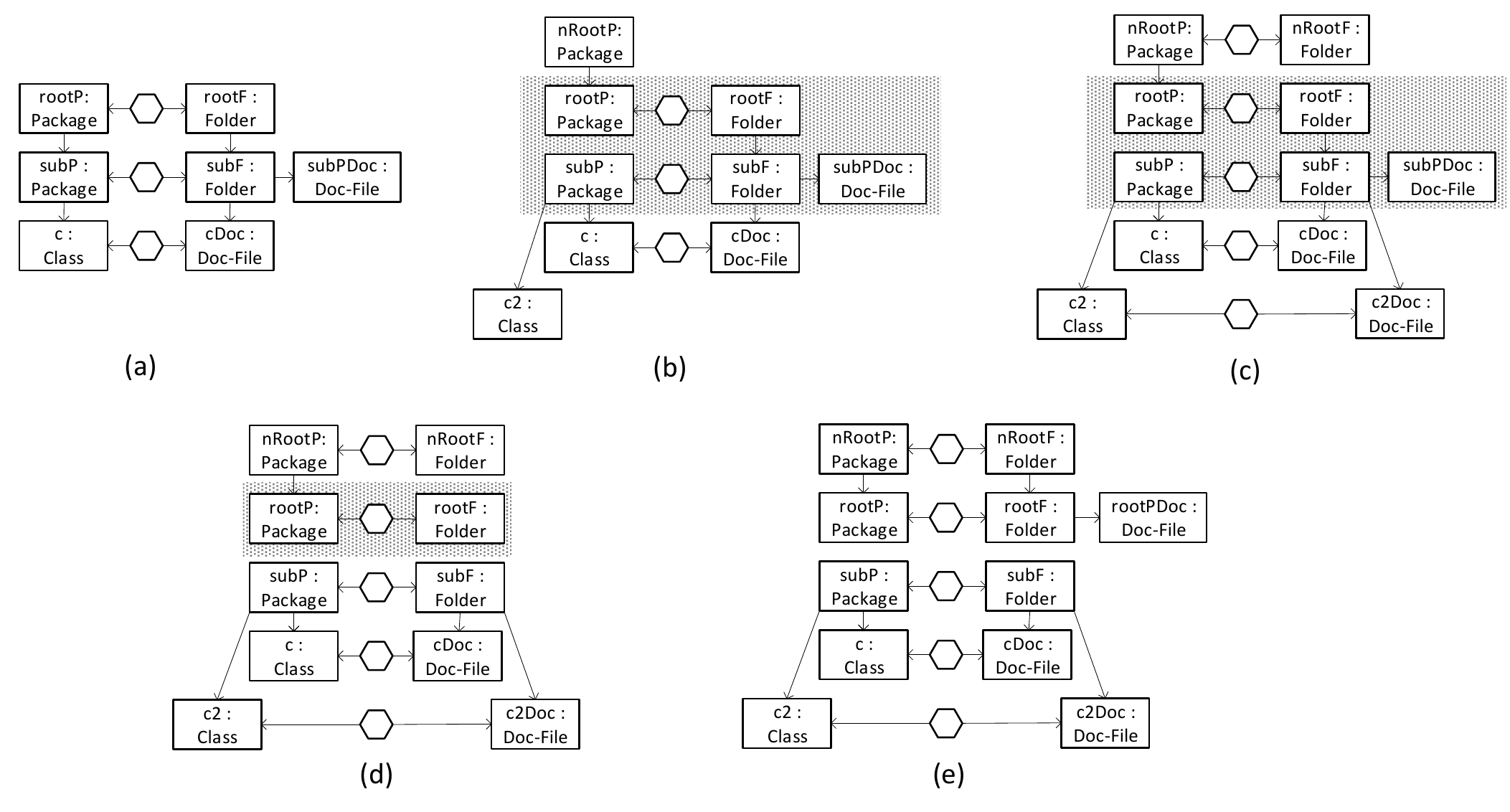}%
	\caption{Example of our proposed synchronization algorithm. Grey background indicates broken consistency matches.}%
	\label{fig:algorithm-example}%
\end{figure*}
We illustrate our synchronization algorithm with an example illustrated in Fig.~\ref{fig:algorithm-example}. 
For simplicity, we neglect the \textsf{content} attribute and concentrate on the structural behaviour. 
As a starting point, we assume that a user edits the source graph of the triple graph depicted in Fig.~\ref{fig:algorithm-example}~(a) (in the following, we will refer to the triple graphs occurring throughout the algorithm just by their numbers). 
She adds a new root package above \rootP{}, removes the link between \packages{} \rootP{} and \subP{}, and creates a further class \texttt{c2}. 
All these changes are specified by either a source rule of the TGG or the source rule of a derived short-cut rule. 
The resulting triple graph is depicted in~(b). 
The elements in front of the grey background are considered to be inconsistent, due to a broken consistency match. 
Furthermore, \texttt{c2} and \texttt{nRootP} are not translated, yet. 
In the first two passes of the algorithm, the two available matches for forward rules are applied (in random order):
\thirdTGGForwardRule{} translates the newly added \class{} \texttt{c2} and \firstTGGForwardRule{} translates the \package{} \texttt{nRootP}; this results in the triple graph (c). 
Note that the last rule application creates a match for the repair rule \firstRRule{}. 
This is the reason why we start our synchronization process with applications of forward rules. 

The incremental pattern matcher notifies about two broken consistency matches, which are dealt with in random order. 
\rootP{} is no longer a root package (which is detected by a violation of the according filter NAC in the consistency pattern) and \subP{} is now a root package (which is detected by the missing incoming edge). 
Both violations are captured by repair rules, namely \firstRRule{} and \secondRRule{}, whose applications lead to~(d) and (e). 
The algorithm terminates with a triple graph that belongs to the TGG.

\subsection{Prospect: Support of further kinds of editing and advanced TGG features}
\label{sec:prospect}
We shortly describe the support of further kinds of editing and more advanced features of TGGs by our approach to synchronization, namely attributed TGGs, rules with NACs, and support for additional attribute constraints. 

\paragraph{Further kinds of editing.}
In our implementation (see Sect.~\ref{sec:implementation}), we do not only support the addition of elements and propagation of edits that correspond to source rules of derived edit rules. 
Actually, we do not make any assumptions about the kind of editing.
This is achieved by incorporating the application of repair rules into the algorithm suggested by Leblebici et al.~\cite{LAFVS17,Leblebici18}, which has also been proved to be correct and to terminate. 
The implemented algorithm first tries to apply a forward or repair rule. 
If there is none available with a valid match, the algorithm falls back to revoking of an invalid rule application. 
This means that all elements that have been created by this rule application are deleted (and adjacent edges of deleted nodes are implicitly deleted as well).
In line with that revoking of invalid rule applications, it also allows for implicit deletion of adjacent edges in the application of  repair rules. 
In that way, the application of a repair rule might trigger new appearances of broken consistency matches. 
We are convinced that \emph{correctness} is not affected by that more general approach: 
Inspecting the proofs of Corollary~\ref{cor:valid-repair-rule-applications} and Theorem~\ref{thm:correctness-algorithm}, the key to correctness is that the sequences of currently valid consistency matches remain consistently marking. 
That is achieved via the conditions on matches for repair rules to be \emph{reversing}, \emph{context-preserving}, and \emph{creation-preserving}. 
Dropping the condition to be \emph{non-disabling} (by implicitly deleting adjacent edges) does not effect correctness, therefore. 
However, proving termination in that more general context is future work.

\paragraph{Advanced features.}
The {\em attribution of graphs} can be formalized by representing data values as special nodes and the attribution of nodes and edges as special edges connecting graph elements with these data nodes~\cite{EEPT06}. 
As the rules of a TGG are monotonic, they only set attribute values but never delete or change them. (The deletion or change of an attribute value would include the deletion of the attribution edge pointing to it.) 
The formal construction of short-cut rules is based purely on category-theoretic concepts, which can be directly applied to rules on attributed triple graphs as well. 
The properties proven for short-cut rules in~\cite{FKST18} are valid also in that case.\footnote{To be precise, in~\cite{FKST18}, all proofs are elaborated for the case of monotonic rules in an \emph{adhesive} category. Attributed triple graphs are \emph{adhesive HLR} which is a weaker notion. However, inspecting the proofs, this does not make any difference as long as the category has so-called \emph{effective pushouts}. This is known to be the case for attributed (triple) graphs; compare, e.g.,~\cite[Remark~5.57]{EEGH15}.} 
Hence, we can freely apply the construction of short-cut rules and derivation of repair rules to attributed TGGs. 
In fact, our implementation already supports attribution. 
For the propagation of attribute changes (made by a user), however, we rely on the inherent support eMoflon offers, which is discussed in Sect.~\ref{sec:implementation}.
Deriving repair rules to propagate such changes is possible in principle but remains future work. 

In practical applications, TGGs are often not only attributed but also equipped with \emph{attribute constraints}. 
These enable the user to, for example, link the values of attributes of correlated nodes. 
eMoflon comes with facilities to detect violations of such constraints and offers support to repair such violations. 
In our implementation, we rely on these features of eMoflon to support attribute constraints but do not contribute additional support in our newly proposed synchronization algorithm. 

To summarize, while fully formalized for the case of plain TGG rules without attribution, our implementation already supports the synchronization of attributed TGGs with additional attribute constraints. 
As these additional features do not affect our construction of short-cut and repair rules, we do not consider them (yet) to improve the propagation of attribute changes (that may lead to violations of attribute constraints). 
Instead, we rely on the existing theory and facilities of eMoflon as introduced by Anjorin et al.~\cite{AVS12}. 
In contrast, while computing short-cut and repair rules of rules with NACs is straightforward, adapting our synchronization algorithm to that case is future work and no tool support is available yet.

\section{Implementation}
\label{sec:implementation}
Our implementation\footnote{Both, the implementation and the evaluation, 
 can be accessed via \url{https://github.com/Echtzeitsysteme/STTT-SC-Eval}.} of a 
model synchronizer  using (shortcut) repair rules is built on top of the existing EMF-based, general-purpose graph and model transformation tool eMoflon~\cite{EL14,WAFVSL19,WARV19}.
eMoflon offers support for rule-based unidirectional and bidirectional graph transformations where the latter one uses TGGs.
The model synchronizer implemented in eMoflon extends Algorithm~\ref{alg:sync} slightly. It allows any kind of user edit on the source part of a triple graph. If there are no forward or repair rules to fix a broken match, broken rule applications can be revoked.
Revoking of rule applications has been the standard way of fixing broken matches. Hence, the implemented model synchronizer is a true extension of the previous synchronizer in eMoflon supporting the repair of broken applications. 

In the following, we present the architecture behind our optimized model synchronizer first.
Thereafter, we  describe how the automatic calculation of short-cut and repair rules is implemented. 

\begin{figure*}
	\centering
	\includegraphics[width=.9\textwidth]{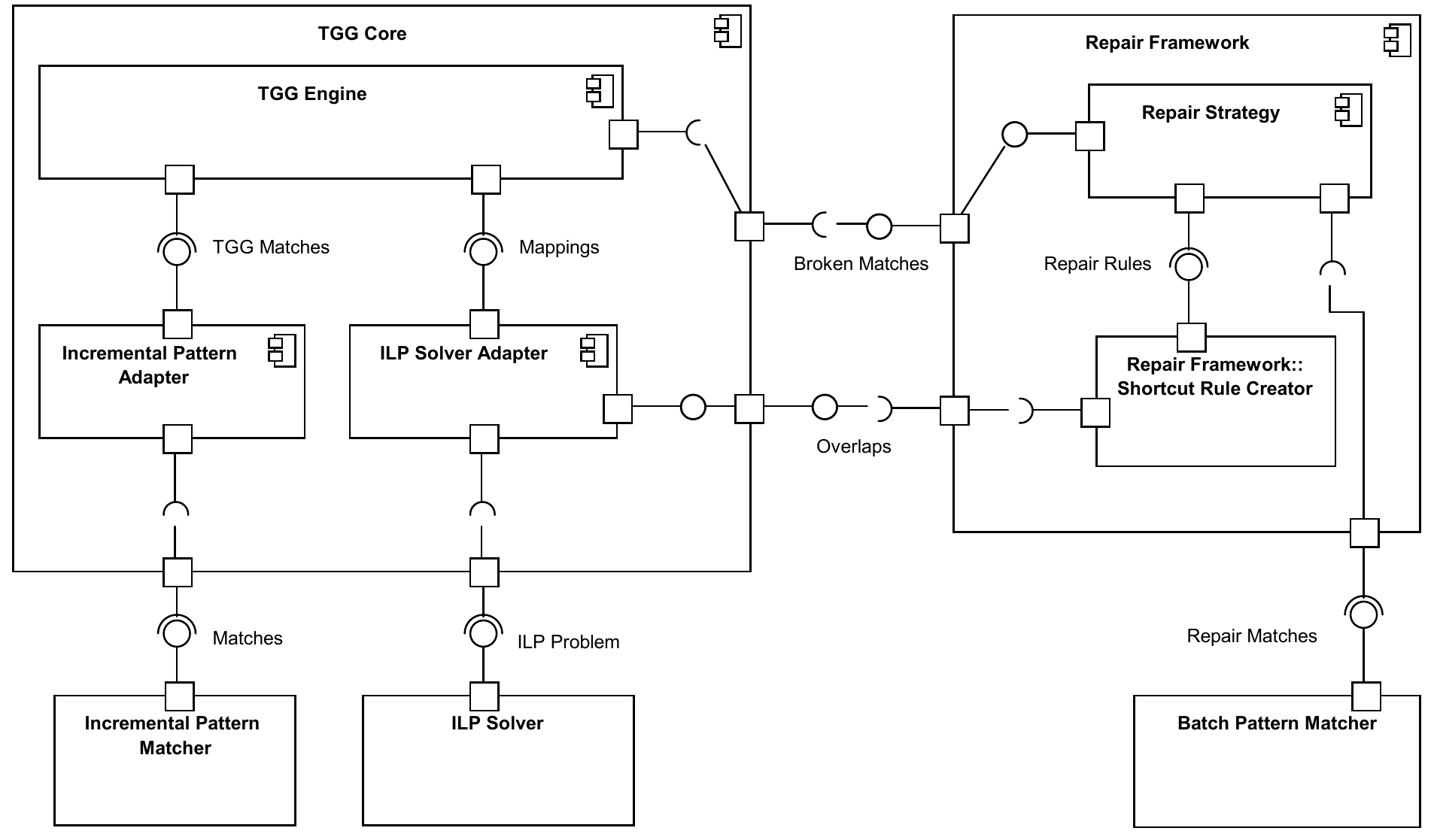}
	\caption{eMoflon -- Architecture of the bidirectional transformation engine}
	\label{fig:architecture}
\end{figure*}

\subsection{Tool architecture}
Figure~\ref{fig:architecture} depicts a UML component diagram to show the main components of eMoflon's bidirectional transformation engine.
The architecture has two main components: 
 \textit{TGG Core} contains the core components of eMoflon and \textit{Repair Framework} adds (short-cut) repair rules to eMoflon\rq{}s functionality. 
The \emph{TGG engine} manages the synchronization process and alters source, target, and correspondence model in order to restore consistency. 
For this purpose, it applies for\-ward/ back\-ward \textit{operationalized TGG rules} to translate elements or revokes broken rule applications. 

Finding matches in an incremental way is an important requirement for efficient model synchronization since minor model changes should be detectable without re-evaluating the whole model.
For this reason, eMoflon relies on \textit{incremental pattern matching} to detect the appearance of new matches as well as the disappearance of formerly detected ones. 
It uses different incremental pattern matchers such as Democles~\cite{VD13} and HiPE~\cite{hipe} and allows to switch freely between them for optimizing the performance for each transformation scenario.
Furthermore, eMoflon employs the use of various \textit{integer linear programming} (ILP) solvers such as Gurobi~\cite{Gurobi16} and CPLEX~\cite{CPLEX}, e.g., in order to find correspondence links (mappings) between source and target models, which is referred to as consistency check~\cite{leblebici2017inter}.

We have extended this basic setup by introducing the \emph{Repair Framework}, which consists of the \emph{Repair Strategy} and the \emph{Shortcut Rule Creator}. 
The \emph{Repair Strategy} is attached to the \textit{TGG Engine} from which it is called with a set of broken rule matches.
It attempts to repair the corresponding rule applications by using repair rules created by the \emph{Shortcut Rule Creator}, which uses the ILP interface provided by the \emph{TGG Core} in order to find overlaps between TGG rules and finally, to create short-cut repair rules.
For invoking the repair rules, however, we have to find matches of repair rules.
This is done by a \textit{Batch (local-search) Pattern Matcher} which, in contrast to the incremental pattern matcher, does not perform any book-keeping. 
As a repair of a rule application is always done locally, the checking of matches throughout the whole model is considered to be too expensive and thus, a \emph{Batch Pattern Matcher} can perform this task more efficiently.

\subsection{ILP-based short-cut rule creation}
\label{sec:sc-rule-creation} 
In order to create an overlap between two rules, a morphism between the graphs of both rules has to be found: 
Each element may only be mapped once; a context element may only be mapped to another context element. Created elements are mapped to each other, respectively. 
Furthermore, a node can only be mapped to a node of the same type  as we do not  incorporate inheritance between types yet.
Edges are allowed to be mapped to each other only if their corresponding source and target elements are also mapped to each other, respectively.

We use integer linear programming (ILP) to encode the search space of all possible mappings and search for a maximal mapping.
Each possible mapping $m$ is considered to be a variable of our ILP problem such that calculating 
\begin{equation*}
	max (\sum_{m \in M} m) 
\end{equation*}
yields the maximal overlap, with $ M $ being the set of all mappings and $ m \in \{0, 1\} $.
To ensure that each element $e$ is mapped only once, we define a constraint to exclude non-used mappings: $ (\sum_{m \in A_e} m) \leqslant 1 $ with $ A_e $ being the set of all alternative mappings for element $e$.
To ensure that edges are mapped only if their adjacent nodes are mapped as well, we define the following constraint: $ m_e \implies m_v $ which translates to $ m_e \leq m_v $ with $ m_e $ being the edge mapping and $ m_v $ being one of the mappings of node $src(e)$ or $trg(e)$. 
Maximizing the number of activated variables yields the common kernel of both input rules, i.e., a maximal overlap between them.
If the overlap between the created elements of both rules is empty, we drop this overlap as the resulting short-cut rule would not preserve any elements. 
Given a common kernel of two rules, we glue them along this kernel and yield a short-cut rule. 
For all elements of the resulting short-cut rule, which are not in the common kernel, we do the following: (1) Preserved elements remain preserved in the short-cut rule. (2) Created elements of the first rule become deleted ones as the first rule is inverted. (3) Created elements of the second rule remain created ones. 

We calculate two kinds of overlap for each pair of rules and hence, two short-cut rules: a maximal and a minimal overlap. 
The maximal overlap is calculated by allowing mappings between all created and context elements, respectively.
On the other hand, the minimal overlap is created by allowing mappings between created elements only.
Considering the corresponding ILP problem, this means that all other mapping candidates are dropped. 

Finally, the derived short-cut rules are operationalized to obtain the repair rules employed in our synchronization algorithm. 

\subsection{Attribute Constraints}
Although attribute constraints have not been incorporated formally in our approach, eMoflon is able to define and solve those within the former legacy translation and synchronization process.
As can be seen in Fig.~\ref{fig:eval_rule_set_full}, many rules have an equality constraint defined between the name attributes of created elements on both, source and target parts.
For TGG rules, this means that the attribute values may be chosen arbitrarily since both nodes would be created from scratch.
In forward rules, source elements are already present which means that an attribute constraint can be interpreted as to propagate or copy the already present value to a newly created element. 
We reuse this functionality for our new synchronization process in the following a way:
After applying a repair rule, we ensure that the constraints of the replacing rule are fulfilled. 
The definition of attribute constraints and their treatment is due to Anjorin et al.~\cite{AVS12}.\footnote{This approach allows to specify constraints on attributes that involve also operations which are not only equality checks such as the concatenation of values of type \emph{String}.}

\section{Evaluation}
\label{sec:implAndEvaluation}
We evaluate our approach with respect to two aspects using the running example in an extended form. 
First, we investigate the performance of our approach w.r.t. information loss and execution time. 
A set of real and synthesized models is given which we use to apply four different kinds of model changes. 
Secondly, we evaluate the quality of our short-cut rule generation strategy by comparing generated short-cut rules with well-known code refactorings. 

Our experimental setup consists of 24 TGG rules (shown in Sect.~\ref{sec:evaluation-ruleset}) that specify consistency between Java AST and custom documentation models. 
In addition, there are 38 short-cut rules being derived from the set of TGG rules.
A small modified excerpt of this rule set was given in Sect.~\ref{sec:example}. 
For this evaluation, however, we define consistency not only between \package{} and \folder{} hierarchies but also between type definitions, e.g., \classes{} and \interfaces{}, and  \fields{} and \methods{} with their corresponding documentation entries.

\subsection{Performance Evaluation}
To get realistic models, we extracted five models from Java projects hosted on Github using the reverse engineering tool MoDisco~\cite{HB14} and translated them into our own documentation structure. 
In addition, we generated five synthetic models consisting of $n$-level \package{} hierarchies with each non-leaf \package{} containing five sub-\packages{} and each leaf \package{} containing five \classes{}. 
While the realistic models shall show that our approach scales to real world cases, the synthetic models are chosen to show scalability in a more controlled way by increasing hierarchies gradually. 

To evaluate our synchronization process, we performed several model changes.
We refactored each of the models in four different scenarios; two example refactorings are the moving of a \class{} from one \package{} to another or the complete relocation of a \package{}.
Then we used eMoflon to synchronize these changes in order to restore consistency to the documentation model using two synchronization processes, namely with and without \repairRules{}. 
The legacy synchronization process of eMoflon is presented in~\cite{LAFVS17,Leblebici18}; the new synchronization process applying additional repair rules takes place according to the algorithm presented in Sect.~\ref{sec:syncProcess} with the extensions mentioned in Sect.~\ref{sec:prospect}.

These synchronization steps are subject to our evaluation and we pose the following research questions:
\textbf{(RQ1)} {\em For different kinds of model changes, how many elements can be preserved that would be deleted and recreated otherwise?}
\textbf{(RQ2)} {\em How does our new synchronization process affect the runtime performance?}
\textbf{(RQ3)} {\em Are there specific scenarios in which our new synchronization process performs especially good or bad?}

In the following, we evaluate our new {\em synchronization process by repair rules} against the {\em legacy synchronization process} in eMoflon. While the legacy one revokes forward rule applications and re-propagates the source model using forward rules, our new one prefers to apply short-cut repair rules as far as possible and falls back to revoking and re-propagation if there is no possible repair rule application. 

To evaluate the performance of the legacy and the new model synchronization processes, we consider the following synchronization scenarios:
Altering a root \package{} by creating a new \package{} as root would imply that many rule applications have to be reverted to synchronize the changes correctly with the legacy synchronization process (Scenario 1).
In contrast, our new approach might perform poorly when a model change does not inflict a large cascade of invalid rule applications.
Hence, we move \classes{} between \packages{} (Scenario 3) and \methods{} between \classes{} (Scenario 4) to measure if the effort of applying \repairRules{} does infer a performance loss when both, the new and old algorithm, do not have to repair many broken rule applications.
Note that Scenario 4 extends our evaluation presented in \cite{FKST19} as it provides a more fine-granular scenario.
Finally, we simulate a scenario which is somewhat between the first three by relocating leaf \packages{} (Scenario 2) which, using the legacy model synchronization, would lead to a re-translation of all underlying elements. 

\begin{table*}
	\centering
	\caption{Legacy synchronizer -- Time in sec. and number of created elements}
	\label{tbl:measurements_legacy}
	\begin{tabular}{@{}l@{\hskip 7.5pt}*{15}{l}@{}}
		\toprule
		& \multicolumn{2}{c}{Both}	& \phantom{a}	& \multicolumn{11}{c}{Legacy Synchronization}	\\
		\cmidrule{2-3} \cmidrule{5-15}
		& \multicolumn{2}{c}{Trans.}	& \phantom{i}	& \multicolumn{2}{c}{Scen. 1}	& \phantom{i}	& \multicolumn{2}{c}{Scen. 2}	& \phantom{i}	& \multicolumn{2}{c}{Scen. 3} & \phantom{i}	& \multicolumn{2}{c}{Scen. 4}\\
							\cmidrule{2-3} 		\cmidrule{5-6} 		\cmidrule{8-9} 		\cmidrule{11-12}	\cmidrule{14-15}
		Models				& Sec	& Elts	&	& Sec	& Elts	&	& Sec	& Elts	&	& Sec	& Elts 	& 	& Sec	& Elts	\\
		\midrule
		lang.List			& 0.3 	& 25 	&	& 0.2	& 20	&	& --	& --	&	& 0.06	& 5	 	& 	& 0.04		&	3 	\\
		tgg.core			& 6.4	& 1.6k	&	& 39	& 1.6k	&	& 3.8	& 99	&	& 0.64	& 17 	& 	& 0.2		&	3	\\
		modisco.java		& 9.9	& 3.2k	&	& 228	& 3.3k	&	& 18.6	& 192	&	& 3.6	& 33 	& 	& 0.4		&	4	\\
		eclipse.compare		& 10.74	& 3.8k	&	& 83	& 3.7k	&	& 3.1	& 76	&	& 2.36	& 47 	& 	& 0.1		&	1	\\
		eclipse.graphiti	& 20.7	& 6.5k	&	& 704	& 6.5k	&	& 63.9	& 490	&	& 5.65	& 25 	& 	& 0.9		&	3	\\
		\midrule
		synthetic $n=1$		& 0.6	& 89	&	& 0.5	& 84	&	& 0.2	& 21	&	& 0.07	& 5	 	& 	& 0.03		&	1	\\
		synthetic $n=2$		& 1.4	& 345	&	& 1.7	& 340	&	& 0.2	& 21	&	& 0.11	& 5		& 	& 0.04		&	1   \\
		synthetic $n=3$		& 3.5	& 1369	&	& 13.2	& 1364	&	& 0.3	& 21	&	& 0.11	& 5		& 	& 0.07		&	1	\\
		synthetic $n=4$		& 14.5  & 5.5k	&	& 141.5	& 5.5k	&	& 1		& 21	&	& 0.32	& 5 	& 	& 0.09		&	1	\\
		synthetic $n=5$		& 58.5	& 22k	&	& 2863	& 22k	&	& 10.7	& 21	&	& 1.07	& 5 	& 	& 0.23		&	1	\\
		\bottomrule
	\end{tabular}
\end{table*}

\begin{table*}
	\centering
	\caption{New synchronizer -- Time in sec. and number of created elements}
	\label{tbl:measurements_sc}
	\begin{tabular}{@{}lllllllllllll@{}}
		\toprule
		& \multicolumn{11}{c}{Synchronization by Repair Rules}	\\
		\cmidrule{2-12}
		& \multicolumn{2}{c}{Scen. 1}	& \phantom{i}	& \multicolumn{2}{c}{Scen. 2}	& \phantom{i}	& \multicolumn{2}{c}{Scen. 3} & \phantom{i}	& \multicolumn{2}{c}{Scen. 4}\\
		\cmidrule{2-3} 		\cmidrule{5-6} 		\cmidrule{8-9} 		\cmidrule{11-12}	
		Models				& Sec	& Elts	&	& Sec	& Elts	&	& Sec	& Elts 	& 	& Sec	& Elts	\\
		\midrule
		lang.List			& 0.2	& 0		&	& --	& --	&	& 0.03	& 0 	& 	& 0.02		&	0	\\
		tgg.core			& 0.8	& 0		&	& 0.11	& 0		&	& 0.05	& 0 	& 	& 0.04		&	0	\\
		modisco.java		& 2.5	& 0		&	& 0.2	& 0		&	& 0.09	& 0 	& 	& 0.1		&	0	\\
		eclipse.compare		& 0.7	& 0		&	& 0.08	& 0		&	& 0.04	& 0 	& 	& 0.03		&	0	\\
		eclipse.graphiti	& 6.1	& 0		&	& 0.21	& 0		&	& 0.09	& 0 	& 	& 0.1		&	0	\\
		\midrule
		synthetic $n=1$		& 0.1	& 0		&	& 0.05	& 0		&	& 0.03	& 0 	& 	& 0.05	&	0	\\
		synthetic $n=2$		& 0.1	& 0		&	& 0.05	& 0		&	& 0.02	& 0 	& 	& 0.04		&	0	\\
		synthetic $n=3$		& 0.1	& 0		&	& 0.07	& 0		&	& 0.02	& 0 	& 	& 0.04		&	0	\\
		synthetic $n=4$		& 0.4	& 0		&	& 0.14	& 0		&	& 0.04	& 0 	& 	& 0.04		&	0	\\
		synthetic $n=5$		& 1.5	& 0		&	& 0.37	& 0		&	& 0.09	& 0 	& 	& 0.06		&	0	\\
		\bottomrule
	\end{tabular}
\end{table*}

Tables~\ref{tbl:measurements_legacy} and \ref{tbl:measurements_sc} depict the measured time in seconds (Sec) and the number of re-/created elements (Elts) in each scenario (1)--(4). 
The first table additionally shows measurements for the initial translation (Trans.) of the Java AST model into the documentation structure.  
For each scenario, Table~\ref{tbl:measurements_legacy} shows the numbers of synchronization steps using the legacy synchronizer without \repairRules{} while Table~\ref{tbl:measurements_sc} reflects the numbers of our new synchronizer with \repairRules{}.

W.r.t. our research questions stated above, we interpret these tables as follows:
The Elts columns of Table~\ref{tbl:measurements_sc} show clearly that using repair rules preserves all those elements in our scenarios that are deleted and recreated by the legacy algorithm otherwise as shown in Table~\ref{tbl:measurements_legacy} \textbf{(RQ1)}.
The runtime shows a significant performance gain for Scenario~1 including a worst-case model change in which the legacy algorithm has to re-translate all elements \textbf{(RQ2)}. 

{\em Repair rules} do not introduce an overhead compared to the legacy algorithm as can be seen for the synthetic time measurements in Scenario 4 where only one rule application has to be repaired or reapplied \textbf{(RQ2)}.
Our new approach excels when the cascade of invalidated rule applications is long. Even if this is not the case, it does not introduce any measurable overhead compared to the legacy algorithm as shown in Scenarios 2, 3, and 4 (\textbf{RQ3}). 

\paragraph{Threats to validity.}
Our evaluation is based on five real world and five synthetic models. 
Of course, there exists a wide range of Java projects that differ significantly from each other w.r.t. their size, purpose, and developer style.
Thus, the results may not be transferable to other projects.
Nonetheless, we argue that the four larger models extracted from Github projects are representative since they are deduced from established tools of the Eclipse ecosystem.
The synthetic models are also representative as they show the scalability of our approach in a more controlled environment with an increasing scaling factor.
Together, realistic and synthetic models show that our approach does not only increase the performance of eMoflons synchronization process but also reduce the amount of re-created elements.
Since each re-created element may contain information that would be lost during the process, we preserve this information and increase the overall quality of eMoflons synchronization results. 
In this evaluation, we selected four edit operations that are representative w.r.t. their dependency on other edit operations. 
They may not be representative w.r.t. other aspects such as size or kind of change. We consider those aspects to be of minor importance in this context as dependency is the cause for deleting and recreating elements in the legacy synchronization process.
Finally, we limited our evaluation to one TGG rule set only as we experienced similar results for a broader range of TGGs from the eMoflon test zoo\footnote{Accessible via \url{https://github.com/eMoflon/emoflon-ibex-tests}}. 

\subsection{Refactorings}
\label{sec:refactorings}
As explained in Sect.~\ref{sec:implementation}, we currently employ two different strategies to overlap two rules and to create a short-cut rule.
We pose the following research question: \textbf{(RQ4)} {\em Are the generated \shortcutRules{} applicable to realistic scenarios? Are further \shortcutRules{} necessary?} 
Since our example addresses code changes that are incorporated by the Java AST model primarily, we relate our approach to available code refactorings.
In the following, we refer to the book on code refactorings written by Martin Fowler~\cite{Fowler2018} which presents 66 refactorings. 

Our example TGG, depicted in Fig.~\ref{fig:eval_rule_set_full}, defines consistency on a structural level solely, without incorporating behaviour, i.e., the bodies of methods and constructors.
Hence, we selected those refactorings that describe changes on \packages{}, \classes{} and \interfaces{}, {\em MethodDeclarations} and \parameters{}, and \fields{}.
The result is a set of 16 refactorings for which we evaluated if \shortcutRules{} help to directly propagate the corresponding change of the AST model or deletion and recreation has to take place.

\begin{figure*}
	\centering
	\includegraphics[width=.9\textwidth]{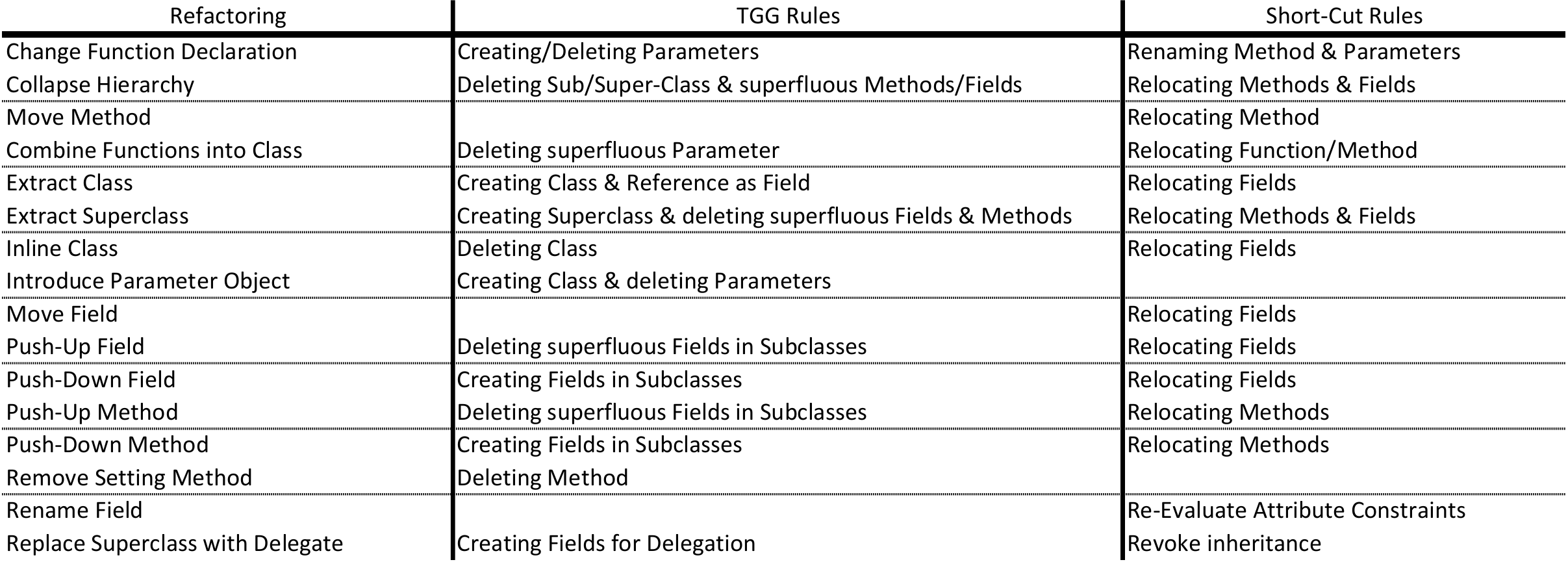}
	\caption{Refactorings}
	\label{fig:refactoring}
\end{figure*}

Fig.~\ref{fig:refactoring} lists these refactorings together with information on the TGG rules and/or \shortcutRules{} that are applicable in these scenarios. 
For some of the refactorings as e.g., {\em Extract Class} and \emph{Push-Down Field}, we identified situations where not only \shortcutRules{} are necessary to propagate the changes.
In these cases, new elements may be created which can be propagated using operationalized TGG rules.
The deletion of elements can be propagated by revoking the corresponding prior propagation step.
However, many refactorings benefit from using \shortcutRules{}, for example, those that move methods and fields. 
If recreation of documentation on the target part is necessary, it can lead to information loss as there may not be all the necessary information  in the Java AST model.

\textit{Example:} \emph{Push-Up Field} moves and merges a similar field from various subclasses into a common superclass.
If one of the subclass fields is moved to the superclass, we can propagate this change using \emph{Move-Field-Repair-Rule}, which is depicted in Fig.~\ref{fig:move_field_rule}.

\begin{figure}
	\includegraphics[width=\columnwidth]{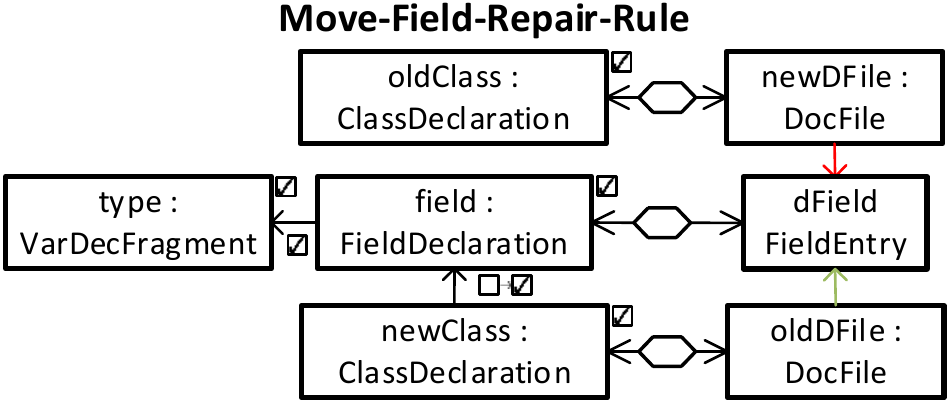}
	\caption{Move-Field-Repair-Rule}
	\label{fig:move_field_rule}
\end{figure}

In summary, we are able to solve all 16 refactorings using a combination of (inverse) TGG rules and our generated \shortcutRules{} \textbf{(RQ4)}. 

\paragraph{Threats to validity.}
Note that \shortcutRules{} are especially useful when elements are moved instead of deleting and recreating them in some other location.
Those changes are hard to detect and are not covered here.
Refactorings such as {\em Push-Up Method}, which moves a method that occurs in several subclasses to their common superclass, can be done in two different ways.
First, one of the methods is moved to the superclass while the methods in the other subclasses are deleted.
This employs the use of \shortcutRules{} for the moved method followed by revocation steps for the deleted methods to delete the corresponding documentation elements.
Second, all methods may be deleted and a new similar method is created in the superclass.
In that case, there is no \shortcutRule{} that helps to preserve information and all propagated documentation elements for the method will be blank.
Hence, our approach depends on the kind of change. In particular, it helps when user edits also try to preserve information instead of recreating them.

In addition, we have not incorporated behaviour in our example; such an extension of our TGG may be considered in future work.  
However, we can argue that most of those refactorings can be reduced to the movement of elements, the deletion of superfluous elements and the creation of new elements.
These changes are manageable in general using a sequence of \shortcutRule{} and (inverse) operationalized TGG rule applications.

Finally, we evaluated these cases by hand based on the generated \shortcutRules{} from our implementation.
Test cases implementing the identified refactorings and combinations of them will be accessible via eMoflons test zoo.

\section{Related Work}
\label{sec:related-work}
In this section, we relate our new model synchronization approach to already existing incremental model synchronization approaches. 
First, we discuss other TGG-based approaches in detail before relating to other bidirectional transformation (bx) approaches; these are considered more roughly. 
Finally, we mention some unidirectional approaches that are closely related to incremental model transformation and model repair. 
Work that is related to our use of partial triple graphs but not to model synchronization is considered in~\cite{KFST19}.

\paragraph{TGG-based approaches to incremental model synchronization.}
Synchronization approaches are supposed to comply with the least-change property, which means that no unnecessary deletions and thus information loss should take place while restoring consistency.
An overview of TGG-based least-change synchronization has been given by Stojkovic et al.~\cite{SLA17}. 
The first part of our related work is based on that presentation. 

Several approaches to model synchronization based on TGGs suffer from the fact that the revocation of a rule application may trigger the revocation of all dependent rule applications as well~\cite{GW09,LAVS12,LAFVS17,Leblebici18}. 
Such cascades of deletions shall be avoided to decrease runtime and unnecessary information loss. 

Leveraging an incremental pattern matcher for TGG-based model synchronization was first suggested in~\cite{LAFVS17,Leblebici18}. 
Proofs of termination, correctness, and completeness are given. 
Moreover, the approach is implemented. 
In fact, this is the legacy synchronization we evaluated against in Section~\ref{sec:implAndEvaluation}. 
As already mentioned, that approach revokes invalid consistency matches as long as there are any and subsequently, applies forward rules to translate yet untranslated elements. 
So, that approach is a typical example where a lot of unnecessary deletions may take place.

Hermann et al.~\cite{HEOCDXGE15} proposed a synchronization algorithm where, after an edit on the source part, first those correspondence elements are deleted that do not refer to an element in the source graph any longer. 
Thereafter, they parse the remaining triple graph to find the maximal, still valid sub-model. 
This model is used as a starting point to propagate the remaining changes from source to correspondence and target graphs using forward rules. 
The approach is completely formalized and proven to be correct, also for attributed TGGs; it can be applied to TGGs with deterministic\footnote{Deterministic in the sense that there are no competing rules for any translated element.} sets of operationalized rules. 
That approach  avoids some unnecessary deletions but there are some that still can occur. 
In fact, the amount of unnecessary deletion taking place in that approach is dependent on the given TGG rules; a concrete example for that is given in~\cite{SLA17}. 
While that approach is definitely a valuable contribution towards least-change synchronization, repeated parsing for maximally consistent sub-models is highly inefficient and might not scale to large models. 
At least part of that approach is implemented as \textsc{HenshinTGG}~\cite{EHGB12} using \textsc{AGG}~\cite{Taentzer03} to perform necessary dependency checks on derived rules. 
As that approach focusses on correctness, completeness, and invertibility, the amount of achieved incrementality as well as principles of least change are not discussed in~\cite{HEOCDXGE15}.  

In~\cite{GH09}, Giese and Hildebrandt propose rules that save nodes instead of deleting and re-creating them. 
In particular, they present a rule that directly propagates the movement of elements, i.e., the redirection of edges between existing elements. 
Moreover, they suggest to try a re-use of elements before deleting them. 
But they neither present a general construction for their rules nor formalize the re-use that takes place. 
Consequently, no proof of correctness is given. 
Instead, it is left as future work in~\cite{GHL14}. 
The additional propagation rules that are given exemplary in~\cite{GH09} can be automatically derived as \repairRules{} using our approach. 
In~\cite{BPDSD14}, Blouin et al. also add specifically designed repair rules to the rule set of their case study for avoiding information loss. 
Those example rules can be realized as \repairRule{} in our approach as well. 

In a similar vein, Greenyer et al.~\cite{GPR11} propose to delete elements not directly but to mark them for deletion and to allow for their re-use in rule applications during synchronization. 
Only elements that cannot be re-used are deleted at the very end of synchronization. 
But that approach comes without any formalization and proof of correctness as well. 

In contrast, the idea of re-using elements in model synchronizations has been rigorously formalized by Orejas and Pino~\cite{OP14}. 
They introduced \emph{forward translation rules with reuse} and proposed a synchronization algorithm based on those rules. 
That algorithm is actually proven to be correct; moreover, it is incremental (in a technical sense). 
The practical effects of applying a \repairRule{} in our approach and in their approach are very similar. 
While our \repairRules{} allow for reuse and perform necessary deletions on the correspondence and target parts directly, their forward translation rules allow for a reuse where necessary deletions are performed at the end of a synchronization in a separate step. 
They need some additional technical infrastructure to determine the exact amount of necessary deletion. 
To the best of our knowledge, their approach has not been implemented yet.  

In a guideline on how to develop a TGG, Anjorin et al.~\cite{ALKSS15} explain how certain kinds of rules in a TGG avoid the loss of information better than others. 
There is empirical evidence that, following these guidelines, synchronization can be considerably accelerated compared to a batch mode as long as there is no need for additional offline recognition of model differences~\cite{LAS14}.
Transforming a given TGG into that form, however, may change the defined language and thus, is not always applicable. 
For example, the grammar of our running example allows to generate hierarchies of \packages{} that constitute a set of disconnected trees. 
For meeting the suggestions in~\cite{ALKSS15}, a naive change of this grammar may change the language such that arbitrary graphs can be generated. 
That effect can be avoided by, e.g., designing suitable NACs for the rules and proving the equality of the generated model languages. 
That effort is not needed when following our approach. 

\begin{table*}
	\centering
	\caption{An overview of TGG-based synchronization approaches}
	\label{tbl:overview-relatedworks}
	\begin{tabular}{@{}lllllll@{}}
		\toprule
		 & Degree of information loss	& Automated	& Correctness proven	& Implemented	& Evaluated performance gain \\
		\midrule
		\cite{LAFVS17,Leblebici18} & high & yes & yes & yes & yes \\
		\cite{GW09}	& high	& yes	& only partially in \cite{GHL14}	& yes	& yes\\
		\cite{LAVS12} & high & yes & yes & yes & yes\\
		\cite{HEOCDXGE15} & to some extent & yes & yes & at least partially & no \\ 
		\cite{GH09} & low & yes & no & yes & yes \\
		\cite{GPR11} & low & yes & no & yes & no  \\
		\cite{OP14} & low & yes & yes & no & no \\
		\cite{ALKSS15,LAS14} & low & not needed & not needed & yes & yes \\
		\cite{TOLR17,OPKK18} & low & no & yes & yes & no \\
		ours	& low & yes & yes & yes & yes \\
		\bottomrule
	\end{tabular}
\end{table*} 

In summary, it is well-known in the literature that there are a lot of situations where the derived forward rules of a TGG (and the revocation of their applications) are not suitable to efficiently propagate changes from source to target models. 
Several formal and informal approaches have been suggested to avoid this problem, at least partly. 
Table~\ref{tbl:overview-relatedworks} provides an overview of all the approaches  described above. 
It indicates the degree of information loss and presents whether the approach is automated, whether correctness of the proposed synchronization algorithm is proven, whether it has been  (prototypically) implemented, and whether any performance gain could be shown for it. 
Our approach is based on the automated derivation of \repairRules{}; it is able to comply with all the above categories. 
The correctness has been shown for model synchronization with repair rules. 
As our implemented synchronization process can also revoke forward rules, the correctness proof has to be slightly extended to cover also that case which seems to be straight forward (see discussion in Sect.~\ref{sec:prospect}).
Furthermore, support for some additional features of TGGs like NACs and attribution is future work (NACs) or not rigorously formalized (attribution).

\paragraph{Comparison to other bx approaches.} 
Anjorin et al.~\cite{AA17} compared three state-of-the-art bx tools, namely eMoflon~\cite{EL14} (rule-based), mediniQVT~\cite{mediniQVT} (constraint-based), and BiGUL \cite{KoZH16} (bx programming language) w.r.t. model synchronization.
They point out that synchronization with eMoflon is faster than with both other tools as the runtimes of those tools all  correlate with the overall model size while the runtime of eMoflon correlates with the size of the changes done by edit operations.
Furthermore, eMoflon is the only tool that was able to solve all but one synchronization scenario while mediniQVT failed in four and BiGUL in two scenarios. 
One scenario was not solved because the solution with eMoflon deletes more model elements than absolutely necessary in that case.
Using short-cut repair rules, we can solve the remaining scenario and moreover, can further increase the performance of eMoflon when solving model synchronization tasks. 
Macedo and Cunha present bidirectional model transformations based on ATL in~\cite{MC16}. 
By using the SAT solver Alloy, they are able to guarantee least-change model synchronization where two metrics  are supported measuring change: the \emph{graph edit distance} and the \emph{operation-based distance}. 
While the synchronization results may be very good, this solver-based approach does not scale for large models. 
All this suggests that our tool is highly competitive, not only among TGG-based tools but also in comparison to other bx tools. 

With regard to theoretical considerations, least change and incremental synchronization have also been actively investigated in other approaches, in particular when using lenses, e.g., \cite{DXC11,WGW11,HPW12,HPC18,HB19}. 
The approach by Wang et al. \cite{WGW11} seems to be the most similar one to ours. 
That approach derives functions to directly propagate changes from a source to a view and is applicable to tree-shaped data structures. 
As those approaches are less close to our work, detailed formal comparisons are left to future work. 

\paragraph{Further related works.} 
\emph{Change-preserving model repair} as presented in~\cite{TOLR17,OPKK18} is closely related to our approach.
Assuming a set of consistency-preserving rules and a set of edit rules to be given, each edit rule is accompanied by one or more repair rules completing the edit step if possible. 
Such a complement rule is considered as repair rule of an edit rule w.r.t. an overarching consistency-preserving rule. 
Operationalized TGG rules fit into that approach but provide more structure: As graphs and rules are structured in triples, a source rule is also an edit rule being complemented by a forward rule. 
In contrast to that approach, source and forward rules can be automatically deduced from a given TGG rule. 
By our use of short-cut rules, we introduce a pre-processing step to first enlarge the sets of consistency-preserving rules and edit rules. 
Furthermore, the repair process presented in that paper has more restrictive presumptions than our synchronization process using repair rules w.r.t. independence of rule applications.

Boronat~\cite{Boronat19} presents an incremental uni-directional transformation approach.  
When retranslating a model after a change, affected elements of the old model are marked first and then, if possible, re-used instead of deleted and re-created (similar to the approaches suggested in~\cite{GPR11,OP14} for TGGs). 
Again, the same effects can be obtained by constructing and applying short-cut rules but there, for plain graph transformation. 
A correctness proof for that approach is still missing.

\section{Conclusion}
\label{sec:conclusion}
Model synchronization, i.e., the task of restoring the consistency between two models after model changes, poses challenges to modern bidirectional model transformation ap\-proach\-es and tools:
We expect them to synchronize changes without unnecessary loss of information and to show a reasonable performance.
Here, we restrict ourselves to model synchronizations where only one model is changed at a time.

While Triple Graph Grammars (TGGs) provide the means to perform model synchronization tasks in general, efficient model synchronization without unnecessary information loss may not always be fulfilled since basic TGG rules are not designed 
to support intermediate model editing and repair.
Therefore, we propose to add \shortcutRules{}, a special form of generalized TGG rules that allow to take back one edit action and to perform an alternative one. In our evaluation, we show that repair rules derived from \shortcutRules{} allow for a kind of incremental model synchronization with considerably decreased information loss and improved runtime compared to synchronization without these rules.  

In this paper, we show the correctness of our synchronization approach, present the implementation design, and evaluate the corresponding tool support w.r.t. performance and unnecessary information loss. 
While the tool support already covers attributes of model elements, the correctness proof of our synchronization approach w.r.t. to these extensions is prepared but still up to future work. 

While model synchronization means the propagation of model changes from one view to another, model changes may also occur concurrently on both views of a model. 
Hence, model synchronization approaches have to cover those scenarios as well. 
Short-cut rules may also be promising to avoid information loss in that more general setting; they have not been considered in the context of other approaches to concurrent model synchronization in the literature~\cite{OBE+13,XSHT13}. 
As changes of both model views may be in conflict with each other, the development of an efficient concurrent model synchronization process which avoids unnecessary information loss poses a challenge for future work.

\subsubsection*{Acknowledgments}
This work was partially funded by the German Research Foundation (DFG) project \enquote{Triple Graph Grammars (TGG)~2.0}. \\
We would also like to thank the anonymous reviewers for their thoughtful comments and efforts.

\bibliographystyle{spmpsci}
\bibliography{literature}

\begin{thebibliography}{10}
\providecommand{\url}[1]{{#1}}
\providecommand{\urlprefix}{URL }
\expandafter\ifx\csname urlstyle\endcsname\relax
  \providecommand{\doi}[1]{DOI~\discretionary{}{}{}#1}\else
  \providecommand{\doi}{DOI~\discretionary{}{}{}\begingroup
  \urlstyle{rm}\Url}\fi

\bibitem{hipe}
{HiPE:Highly (Scalable) Incremental Pattern matching Engine}.
\newblock \url{https://github.com/Arikae/HiPE-Updatesite}

\bibitem{mediniQVT}
{Ikv++: Medini QVT}.
\newblock \url{http://projects.ikv.de/qvt}

\bibitem{ACGMS18}
Abou{-}Saleh, F., Cheney, J., Gibbons, J., McKinna, J., Stevens, P.:
  Introduction to bidirectional transformations.
\newblock In: J.~Gibbons, P.~Stevens (eds.) Bidirectional Transformations --
  International Summer School, \emph{Lecture Notes in Computer Science}, vol.
  9715, pp. 1--28. Springer (2018)

\bibitem{AA17}
Anjorin, A., Diskin, Z., Jouault, F., Ko, H., Leblebici, E., Westfechtel, B.:
  Benchmarx reloaded: {A} practical benchmark framework for bidirectional
  transformations.
\newblock In: Proceedings of the 6th International Workshop on Bidirectional
  Transformations co-located with The European Joint Conferences on Theory and
  Practice of Software, BX@ETAPS 2017, Uppsala, Sweden, April 29, 2017., pp.
  15--30 (2017).
\newblock \urlprefix\url{http://ceur-ws.org/Vol-1827/paper6.pdf}

\bibitem{ALKSS15}
Anjorin, A., Leblebici, E., Kluge, R., Sch{\"u}rr, A., Stevens, P.: A
  systematic approach and guidelines to developing a triple graph grammar.
\newblock In: A.~Cunha, E.~Kindler (eds.) Proceedings of the 4th International
  Workshop on Bidirectional Transformations co-located with Software
  Technologies: Applications and Foundations (STAF 2015), \emph{{CEUR} Workshop
  Proceedings}, vol. 1396, pp. 66--80. CEUR-WS.org (2015).
\newblock \urlprefix\url{http://ceur-ws.org/Vol-1396}

\bibitem{ALST14}
Anjorin, A., Leblebici, E., Sch{\"{u}}rr, A., Taentzer, G.: A static analysis
  of non-confluent triple graph grammars for efficient model transformation.
\newblock In: H.~Giese, B.~K{\"{o}}nig (eds.) Graph Transformation - 7th
  International Conference, {ICGT} 2014, Held as Part of {STAF} 2014, York, UK,
  July 22-24, 2014. Proceedings, \emph{Lecture Notes in Computer Science}, vol.
  8571, pp. 130--145. Springer (2014).
\newblock \doi{10.1007/978-3-319-09108-2\_9}.
\newblock \urlprefix\url{https://doi.org/10.1007/978-3-319-09108-2\_9}

\bibitem{AVS12}
Anjorin, A., Varró, G., Schürr, A.: Complex attribute manipulation in tggs
  with constraint-based programming techniques.
\newblock Electronic Communications of the {EASST}  (2012).
\newblock \doi{10.14279/tuj.eceasst.49.707}.
\newblock \urlprefix\url{http://dx.doi.org/10.14279/tuj.eceasst.49.707}

\bibitem{BCHKS14}
Baldan, P., Corradini, A., Heindel, T., K{\"{o}}nig, B., Soboci{\'{n}}ski, P.:
  Processes and unfoldings: concurrent computations in adhesive categories.
\newblock Mathematical Structures in Computer Science \textbf{24}(4) (2014).
\newblock \doi{10.1017/S096012951200031X}.
\newblock \urlprefix\url{https://doi.org/10.1017/S096012951200031X}

\bibitem{BET12}
Biermann, E., Ermel, C., Taentzer, G.: {Formal Foundation of Consistent EMF
  Model Transformations by Algebraic Graph Transformation}.
\newblock Software {\&} Systems Modeling \textbf{11}(2), 227--250 (2012)

\bibitem{BPDSD14}
Blouin, D., Plantec, A., Dissaux, P., Singhoff, F., Diguet, J.P.:
  Synchronization of models of rich languages with triple graph grammars: An
  experience report.
\newblock In: D.~Di~Ruscio, D.~Varr{\'o} (eds.) Theory and Practice of Model
  Transformations, pp. 106--121. Springer International Publishing, Cham
  (2014).
\newblock \doi{10.1007/978-3-319-08789-4_8}

\bibitem{Boronat19}
Boronat, A.: Offline delta-driven model transformation with dependency
  injection.
\newblock In: R.~H{\"a}hnle, W.~van~der Aalst (eds.) Fundamental Approaches to
  Software Engineering, pp. 134--150. Springer International Publishing, Cham
  (2019).
\newblock \doi{10.1007/978-3-030-16722-6_8}

\bibitem{HB14}
Brunelière, H., Cabot, J., Dupé, G., Madiot, F.: Modisco: A model driven
  reverse engineering framework.
\newblock Information and Software Technology \textbf{56}(8), 1012--1032
  (2014).
\newblock \doi{https://doi.org/10.1016/j.infsof.2014.04.007}

\bibitem{CGMcKS17}
Cheney, J., Gibbons, J., McKinna, J., Stevens, P.: On principles of least
  change and least surprise for bidirectional transformations.
\newblock Journal of Object Technology \textbf{16}(1), 3:1--31 (2017).
\newblock \doi{10.5381/jot.2017.16.1.a3}

\bibitem{CFHLST09}
Czarnecki, K., Foster, J.N., Hu, Z., L{\"a}mmel, R., Sch{\"u}rr, A.,
  Terwilliger, J.F.: Bidirectional transformations: A cross-discipline
  perspective.
\newblock In: R.F. Paige (ed.) Theory and Practice of Model Transformations,
  pp. 260--283. Springer, Berlin (2009)

\bibitem{DXC11}
Diskin, Z., Xiong, Y., Czarnecki, K.: {From State- to Delta-Based Bidirectional
  Model Transformations: the Asymmetric Case}.
\newblock Journal of Object Technology \textbf{10}, 6:1--25 (2011).
\newblock \doi{10.5381/jot.2011.10.1.a6}.
\newblock
  \urlprefix\url{http://www.jot.fm/contents/issue_2011_01/article6.html}

\bibitem{EEEHT07}
Ehrig, H., Ehrig, K., Ermel, C., Hermann, F., Taentzer, G.: Information
  preserving bidirectional model transformations.
\newblock In: M.B. Dwyer, A.~Lopes (eds.) Fundamental Approaches to Software
  Engineering, pp. 72--86. Springer, Berlin and Heidelberg (2007).
\newblock \doi{10.1007/978-3-540-71289-3_7}

\bibitem{EEPT06}
Ehrig, H., Ehrig, K., Prange, U., Taentzer, G.: Fundamentals of Algebraic Graph
  Transformation.
\newblock Monographs in Theoretical Computer Science. Springer (2006)

\bibitem{EEGH15}
Ehrig, H., Ermel, C., Golas, U., Hermann, F.: Graph and Model Transformation -
  General Framework and Applications.
\newblock Monographs in Theoretical Computer Science. An {EATCS} Series.
  Springer (2015).
\newblock \doi{10.1007/978-3-662-47980-3}

\bibitem{EGHLO14}
Ehrig, H., Golas, U., Habel, A., Lambers, L., Orejas, F.:
  {\(\mathcal{M}\)}-adhesive transformation systems with nested application
  conditions. part 1: parallelism, concurrency and amalgamation.
\newblock Mathematical Structures in Computer Science \textbf{24}(4) (2014)

\bibitem{EHGB12}
Ermel, C., Hermann, F., Gall, J., Binanzer, D.: Visual modeling and analysis of
  {EMF} model transformations based on triple graph grammars.
\newblock {ECEASST} \textbf{54} (2012).
\newblock \doi{10.14279/tuj.eceasst.54.771}.
\newblock \urlprefix\url{https://doi.org/10.14279/tuj.eceasst.54.771}

\bibitem{Fowler2018}
Fowler, M.: Refactoring: Improving the Design of Existing Code (2nd Edition).
\newblock Addison-Wesley (2018)

\bibitem{FKST18}
Fritsche, L., Kosiol, J., Schürr, A., Taentzer, G.: {Short-Cut Rules.
  Sequential Composition of Rules Avoiding Unnecessary Deletions}.
\newblock In: M.~Mazzara, I.~Ober, G.~Salaün (eds.) Software Technologies:
  Applications and Foundations, pp. 415--430. Springer International
  Publishing, Cham (2018).
\newblock \doi{10.1007/978-3-030-04771-9_30}

\bibitem{FKST19}
Fritsche, L., Kosiol, J., Sch{\"u}rr, A., Taentzer, G.: {Efficient Model
  Synchronization by Automatically Constructed Repair Processes}.
\newblock In: R.~H{\"a}hnle, W.~van~der Aalst (eds.) Fundamental Approaches to
  Software Engineering, pp. 116--133. Springer International Publishing, Cham
  (2019).
\newblock \doi{10.1007/978-3-030-16722-6_7}

\bibitem{GH09}
Giese, H., Hildebrandt, S.: Efficient model synchronization of large-scale
  models.
\newblock Tech. Rep.~28, Hasso-Plattner-Institut (2009)

\bibitem{GHL14}
Giese, H., Hildebrandt, S., Lambers, L.: Bridging the gap between formal
  semantics and implementation of triple graph grammars.
\newblock Software {\&} Systems Modeling \textbf{13}(1), 273--299 (2014).
\newblock \doi{10.1007/s10270-012-0247-y}

\bibitem{GW09}
Giese, H., Wagner, R.: From model transformation to incremental bidirectional
  model synchronization.
\newblock Software {\&} Systems Modeling \textbf{8}(1), 21--43 (2009).
\newblock \doi{10.1007/s10270-008-0089-9}

\bibitem{GPR11}
Greenyer, J., Pook, S., Rieke, J.: Preventing information loss in incremental
  model synchronization by reusing elements.
\newblock In: R.B. France, J.M. Kuester, B.~Bordbar, R.F. Paige (eds.)
  Modelling Foundations and Applications. Proceedings of the 7th European
  Conference on Modelling Foundations and Applications, pp. 144--159. Springer,
  Berlin, Heidelberg (2011).
\newblock \doi{10.1007/978-3-642-21470-7_11}

\bibitem{Gurobi16}
Gurobi~Optimization, I.: Gurobi optimizer reference manual  (2016)

\bibitem{Hermann2010}
Hermann, F., Ehrig, H., Golas, U., Orejas, F.: Efficient analysis and execution
  of correct and complete model transformations based on triple graph grammars.
\newblock In: Proceedings of the First International Workshop on Model-Driven
  Interoperability, MDI '10, pp. 22--31. ACM, New York, NY, USA (2010).
\newblock \doi{10.1145/1866272.1866277}

\bibitem{HEOCDXGE15}
Hermann, F., Ehrig, H., Orejas, F., Czarnecki, K., Diskin, Z., Xiong, Y.,
  Gottmann, S., Engel, T.: Model synchronization based on triple graph
  grammars: correctness, completeness and invertibility.
\newblock Software {\&} Systems Modeling \textbf{14}(1), 241--269 (2015).
\newblock \doi{10.1007/s10270-012-0309-1}

\bibitem{HB19}
Hinkel, G., Burger, E.: Change propagation and bidirectionality in internal
  transformation {DSLs}.
\newblock Software {\&} Systems Modeling \textbf{18}(1), 249--278 (2019).
\newblock \doi{10.1007/s10270-017-0617-6}.
\newblock \urlprefix\url{https://doi.org/10.1007/s10270-017-0617-6}

\bibitem{HPW12}
Hofmann, M., Pierce, B., Wagner, D.: {Edit Lenses}.
\newblock In: Proceedings of the 39th Annual ACM SIGPLAN-SIGACT Symposium on
  Principles of Programming Languages, POPL '12, pp. 495--508. ACM, New York,
  NY, USA (2012).
\newblock \doi{10.1145/2103656.2103715}.
\newblock \urlprefix\url{http://doi.acm.org/10.1145/2103656.2103715}

\bibitem{HPC18}
Horn, R., Perera, R., Cheney, J.: {Incremental Relational Lenses}.
\newblock Proc. ACM Program. Lang. \textbf{2}(ICFP), 74:1--74:30 (2018).
\newblock \doi{10.1145/3236769}.
\newblock \urlprefix\url{http://doi.acm.org/10.1145/3236769}

\bibitem{CPLEX}
IBM: {IBM ILOG CPLEX Optimization Studio CPLEX User’s Manual}  (2015)

\bibitem{KLKS10}
Klar, F., Lauder, M., K{\"o}nigs, A., Sch{\"u}rr, A.: Extended triple graph
  grammars with efficient and compatible graph translators.
\newblock In: G.~Engels, C.~Lewerentz, W.~Sch{\"a}fer, A.~Sch{\"u}rr,
  B.~Westfechtel (eds.) Graph Transformations and Model-Driven Engineering:
  Essays Dedicated to Manfred Nagl on the Occasion of his 65th Birthday, pp.
  141--174. Springer, Berlin and Heidelberg (2010).
\newblock \doi{10.1007/978-3-642-17322-6_8}.
\newblock \urlprefix\url{https://doi.org/10.1007/978-3-642-17322-6_8}

\bibitem{KoZH16}
Ko, H., Zan, T., Hu, Z.: Bigul: a formally verified core language for
  putback-based bidirectional programming.
\newblock In: Proceedings of the 2016 {ACM} {SIGPLAN} Workshop on Partial
  Evaluation and Program Manipulation, {PEPM} 2016, St. Petersburg, FL, USA,
  January 20 -- 22, 2016, pp. 61--72 (2016).
\newblock \doi{10.1145/2847538.2847544}

\bibitem{KFST19}
Kosiol, J., Fritsche, L., Sch{\"u}rr, A., Taentzer, G.: Adhesive subcategories
  of functor categories with instantiation to partial triple graphs.
\newblock In: E.~Guerra, F.~Orejas (eds.) Graph Transformation, pp. 38--54.
  Springer International Publishing, Cham (2019).
\newblock \doi{10.1007/978-3-030-23611-3_3}

\bibitem{Kreowski86}
Kreowski, H.: Is parallelism already concurrency? part 1: Derivations in graph
  grammars.
\newblock In: H.~Ehrig, M.~Nagl, G.~Rozenberg, A.~Rosenfeld (eds.)
  Graph-Grammars and Their Application to Computer Science, 3rd International
  Workshop, Warrenton, Virginia, USA, December 2--6, 1986, \emph{Lecture Notes
  in Computer Science}, vol. 291, pp. 343--360. Springer (1986).
\newblock \doi{10.1007/3-540-18771-5\_63}.
\newblock \urlprefix\url{https://doi.org/10.1007/3-540-18771-5\_63}

\bibitem{LS05}
{Lack}, S., {Soboci\'nski}, P.: {Adhesive and quasiadhesive categories.}
\newblock {Theoretical Informatics and Applications} \textbf{39}(3), 511--545
  (2005).
\newblock \doi{10.1051/ita:2005028}

\bibitem{LAVS12}
Lauder, M., Anjorin, A., Varr{\'o}, G., Schürr, A.: Efficient model
  synchronization with precedence triple graph grammars.
\newblock In: H.~Ehrig, G.~Engels, H.J. Kreowski, G.~Rozenberg (eds.) Graph
  Transformations, pp. 401--415. Springer, Berlin, Heidelberg (2012).
\newblock \doi{10.1007/978-3-642-33654-6_27}

\bibitem{Leblebici18}
Leblebici, E.: Inter-model consistency checking and restoration with triple
  graph grammars.
\newblock Ph.D. thesis, Darmstadt University of Technology, Germany (2018).
\newblock \urlprefix\url{http://tuprints.ulb.tu-darmstadt.de/7426/}

\bibitem{LAFVS17}
Leblebici, E., Anjorin, A., Fritsche, L., Varró, G., Schürr, A.: Leveraging
  incremental pattern matching techniques for model synchronisation.
\newblock In: J.~de~Lara, D.~Plump (eds.) Graph Transformation, pp. 179--195.
  Springer International Publishing, Cham (2017).
\newblock \doi{10.1007/978-3-319-61470-0_11}

\bibitem{EL14}
Leblebici, E., Anjorin, A., Schürr, A.: {Developing eMoflon with eMoflon}.
\newblock In: D.~Di~Ruscio, D.~Varr{\'o} (eds.) Theory and Practice of Model
  Transformations, pp. 138--145. Springer International Publishing, Cham
  (2014).
\newblock \doi{10.1007/978-3-319-08789-4_10}

\bibitem{Leblebici2017}
Leblebici, E., Anjorin, A., Schürr, A., Taentzer, G.: Multi-amalgamated triple
  graph grammars: Formal foundation and application to visual language
  translation.
\newblock Journal of Visual Languages \& Computing \textbf{42}, 99--121 (2017).
\newblock \doi{https://doi.org/10.1016/j.jvlc.2016.03.001}

\bibitem{LAS14}
Leblebici, E., Anjorin, A., Sch{\"{u}}rr, A.: A catalogue of optimization
  techniques for triple graph grammars.
\newblock In: H.~Fill, D.~Karagiannis, U.~Reimer (eds.) Modellierung 2014,
  19.-21. M{\"{a}}rz 2014, Wien, {\"{O}}sterreich, \emph{{LNI}}, vol. 225, pp.
  225--240. {GI} (2014).
\newblock \urlprefix\url{https://dl.gi.de/20.500.12116/17052}

\bibitem{leblebici2017inter}
Leblebici, E., Anjorin, A., Sch{\"u}rr, A.: Inter-model consistency checking
  using triple graph grammars and linear optimization techniques.
\newblock In: International Conference on Fundamental Approaches to Software
  Engineering, pp. 191--207. Springer (2017)

\bibitem{MC16}
Macedo, N., Cunha, A.: {Least-change bidirectional model transformation with
  QVT-R and ATL}.
\newblock Software {\&} Systems Modeling \textbf{15}(3), 783--810 (2016).
\newblock \doi{10.1007/s10270-014-0437-x}.
\newblock \urlprefix\url{https://doi.org/10.1007/s10270-014-0437-x}

\bibitem{OPKK18}
Ohrndorf, M., Pietsch, C., Kelter, U., Kehrer, T.: Revision: a tool for
  history-based model repair recommendations.
\newblock In: Proceedings of the 40th International Conference on Software
  Engineering: Companion Proceeedings, {ICSE} 2018, Gothenburg, Sweden, May 27
  -- June 03, 2018, pp. 105--108. {ACM} (2018).
\newblock \doi{10.1145/3183440.3183498}

\bibitem{OBE+13}
Orejas, F., Boronat, A., Ehrig, H., Hermann, F., Sch{\"{o}}lzel, H.: On
  propagation-based concurrent model synchronization.
\newblock {ECEASST} \textbf{57} (2013).
\newblock \doi{10.14279/tuj.eceasst.57.871}.
\newblock \urlprefix\url{https://doi.org/10.14279/tuj.eceasst.57.871}

\bibitem{OP14}
Orejas, F., Pino, E.: Correctness of incremental model synchronization with
  triple graph grammars.
\newblock In: D.~Di~Ruscio, D.~Varr{\'o} (eds.) Theory and Practice of Model
  Transformations, pp. 74--90. Springer International Publishing, Cham (2014).
\newblock \doi{10.1007/978-3-319-08789-4_6}

\bibitem{Schuerr95}
Schürr, A.: Specification of graph translators with triple graph grammars.
\newblock In: E.W. Mayr, G.~Schmidt, G.~Tinhofer (eds.) Graph-Theoretic
  Concepts in Computer Science, \emph{Lecture Notes in Computer Science}, vol.
  903, pp. 151--163. Springer (1995).
\newblock \doi{10.1007/3-540-59071-4_45}

\bibitem{SLA17}
Stojkovic, M., Laux, S., Anjorin, A.: Existing and new ideas on least change
  triple graph grammars.
\newblock In: R.~Eramo, M.~Johnson (eds.) Proceedings of the 6th International
  Workshop on Bidirectional Transformations co-located with The European Joint
  Conferences on Theory and Practice of Software, BX@ETAPS 2017, Uppsala,
  Sweden, April 29, 2017, \emph{{CEUR} Workshop Proceedings}, vol. 1827, pp.
  1--5. CEUR-WS.org (2017).
\newblock \urlprefix\url{http://ceur-ws.org/Vol-1827/paper2.pdf}

\bibitem{Taentzer03}
Taentzer, G.: {AGG:} {A} graph transformation environment for modeling and
  validation of software.
\newblock In: J.L. Pfaltz, M.~Nagl, B.~B{\"{o}}hlen (eds.) Applications of
  Graph Transformations with Industrial Relevance, Second International
  Workshop, {AGTIVE} 2003, Charlottesville, VA, USA, September 27 -- October 1,
  2003, Revised Selected and Invited Papers, \emph{Lecture Notes in Computer
  Science}, vol. 3062, pp. 446--453. Springer (2003).
\newblock \doi{10.1007/978-3-540-25959-6\_35}.
\newblock \urlprefix\url{https://doi.org/10.1007/978-3-540-25959-6\_35}

\bibitem{TOLR17}
Taentzer, G., Ohrndorf, M., Lamo, Y., Rutle, A.: Change-preserving model
  repair.
\newblock In: Fundamental Approaches to Software Engineering -- 20th
  International Conference, {FASE} 2017, Held as Part of the European Joint
  Conferences on Theory and Practice of Software, {ETAPS} 2017, Uppsala,
  Sweden, April 22--29, 2017, Proceedings, \emph{Lecture Notes in Computer
  Science}, vol. 10202, pp. 283--299. Springer (2017).
\newblock \doi{10.1007/978-3-662-54494-5_16}

\bibitem{VD13}
Varr{\'o}, G., Deckwerth, F.: A rete network construction algorithm for
  incremental pattern matching.
\newblock In: K.~Duddy, G.~Kappel (eds.) Theory and Practice of Model
  Transformations, pp. 125--140. Springer Berlin Heidelberg, Berlin, Heidelberg
  (2013)

\bibitem{WGW11}
Wang, M., Gibbons, J., Wu, N.: Incremental updates for efficient bidirectional
  transformations.
\newblock In: Proceedings of the 16th ACM SIGPLAN International Conference on
  Functional Programming, ICFP '11, pp. 392--403. ACM, New York, NY, USA
  (2011).
\newblock \doi{10.1145/2034773.2034825}.
\newblock \urlprefix\url{http://doi.acm.org/10.1145/2034773.2034825}

\bibitem{WAFVSL19}
Weidmann, N., Anjorin, A., Fritsche, L., Varr{\'{o}}, G., Sch{\"{u}}rr, A.,
  Leblebici, E.: {Incremental Bidirectional Model Transformation with
  eMoflon::IBeX}.
\newblock In: J.~Cheney, H.~Ko (eds.) Proceedings of the 8th International
  Workshop on Bidirectional Transformations (BX), \emph{{CEUR} Workshop
  Proceedings}, vol. 2355, pp. 45--55. CEUR-WS.org (2019).
\newblock \urlprefix\url{http://ceur-ws.org/Vol-2355/paper4.pdf}

\bibitem{WARV19}
Weidmann, N., Anjorin, A., Robrecht, P., Varr{\'{o}}, G.: {Incremental
  (Unidirectional) Model Transformation with eMoflon::IBeX}.
\newblock In: E.~Guerra, F.~Orejas (eds.) Graph Transformation - 12th
  International Conference, {ICGT} 2019, Held as Part of {STAF} 2019,
  Eindhoven, The Netherlands, July 15-16, 2019, Proceedings, \emph{Lecture
  Notes in Computer Science}, vol. 11629, pp. 131--140. Springer (2019).
\newblock \doi{10.1007/978-3-030-23611-3_8}.
\newblock \urlprefix\url{https://doi.org/10.1007/978-3-030-23611-3_8}

\bibitem{WOR19}
Weidmann, N., Oppermann, R., Robrecht, P.: A feature-based classification of
  triple graph grammar variants.
\newblock In: O.~Nierstrasz, J.~Gray, B.C. d.~S.~Oliveira (eds.) Proceedings of
  the 12th {ACM} {SIGPLAN} International Conference on Software Language
  Engineering, {SLE} 2019, Athens, Greece, October 20--22, 2019, pp. 1--14.
  {ACM} (2019).
\newblock \doi{10.1145/3357766.3359529}.
\newblock \urlprefix\url{https://doi.org/10.1145/3357766.3359529}

\bibitem{XSHT13}
Xiong, Y., Song, H., Hu, Z., Takeichi, M.: Synchronizing concurrent model
  updates based on bidirectional transformation.
\newblock Software and Systems Modeling \textbf{12}(1), 89--104 (2013).
\newblock \doi{10.1007/s10270-010-0187-3}.
\newblock \urlprefix\url{https://doi.org/10.1007/s10270-010-0187-3}

\end{thebibliography}

\appendix

\section{Evaluation Ruleset}
\label{sec:evaluation-ruleset}
In this section, we present additional information related to our evaluation from Sect.~\ref{sec:implAndEvaluation}.

Figure~\ref{fig:eval_rule_set_full} depicts the full TGG rule set used of our evaluation.
The first rule \textit{JavaModel-2-DocModel-Rule} defines consistency between a MoDisco \model{} and a \docModel{} that contains three sub \docModels{} and another \folder{} linked to the common \docModel{}.
These different containers are used to separate Java entities on the documentation site to split them up into common Java data types, external Java references and source references.
\textit{JavaModel-2-DocModel-Rule} then defines consistency between \packages{} and \folders{} given that their parent are a MoDisco \model{} and a \docModel{}, respectively.
Using \textit{JavaPackage-2-DocFolder-Rule}, we can now create \package{} and \folder{} hierarchies recursively.
Furthermore, there are four rules that define consistency for \classDecs{}, \interfaceDecs{}, \enumDecs{} and inner \classDecs{} each with a \doc{}.
Also, for the nine primitive types, e.g., boolean, byte and short, consistency is defined between each of them and a \doc{}.
Given a \classDec{} or an \interfaceDec{} with its corresponding \doc{}, we also define consistency between \methodDecs{} on one and \methodEntries{} on the other side.
Using the consistency between methods on both sides, we are able to define consistency between \typeAccesses{} and \parameters{}, once for method signatures and once for the return statement.
Finally, we define consistency between generalization and realization relationships using three rules.
First, a rule for \classDecs{} that extend another \classDec{}, second a rule for \interfaceDecs{} extending another \interfaceDec{} and last for \classDecs{} implementing an \interfaceDec{}.
\begin{figure*}
	\centering
	\includegraphics[width=0.79\textwidth]{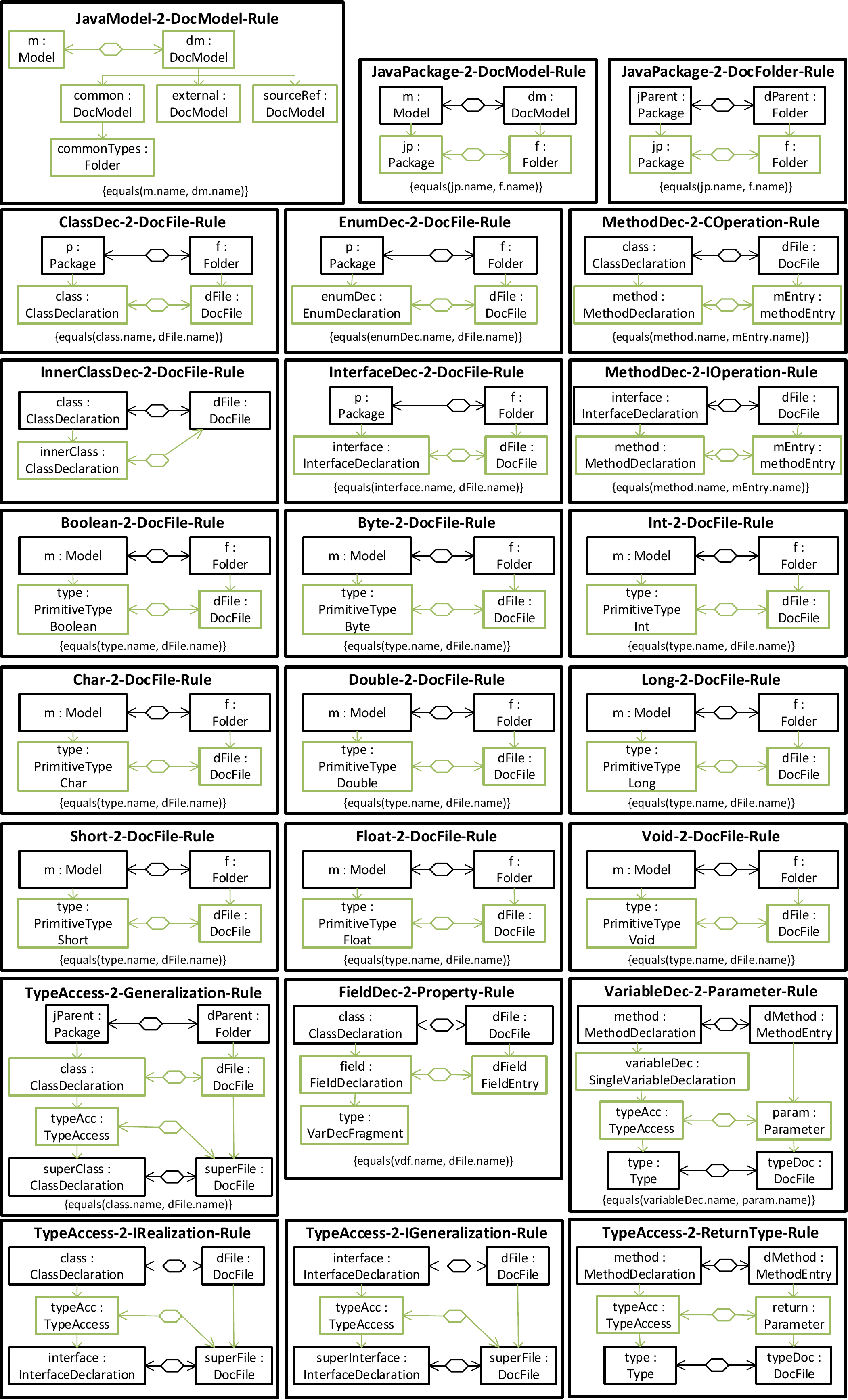}
	\caption{Evaluation -- TGG Rule Set}
	\label{fig:eval_rule_set_full}
\end{figure*}

\end{document}